\documentclass[a4paper,onecolumn,11pt,accepted=2024-09-19]{quantumarticle}
\pdfoutput=1
\usepackage[utf8]{inputenc}
\usepackage[english]{babel}
\usepackage[T1]{fontenc}
\usepackage{amsmath}
\usepackage{hyperref}
\usepackage[numbers]{natbib}

\usepackage{tikz}
\usepackage{amssymb}
\usepackage{tikz}
\usepackage{mathrsfs}
\usepackage{graphicx}
\usepackage{comment}
\usepackage{dsfont}
\usepackage{amsmath}
\usepackage{amsfonts}
\usepackage{amssymb}
\usepackage{braket}
\usepackage{amsthm}
\usepackage{caption}
\captionsetup{justification=raggedright}
\usepackage{lipsum}
\usepackage{subcaption}
\usepackage{mwe}
\usepackage{enumerate}
\usepackage{sidecap}
\usepackage{appendix}
\usepackage{multirow}
\usepackage{array}
\usepackage{physics}
\usepackage{mathtools}
\usepackage{enumitem}

\newtheorem{theorem}{Theorem}
\newtheorem{lemma}[theorem]{Lemma}
\newtheorem{proposition}[theorem]{Proposition}
\newtheorem{corollary}[theorem]{Corollary}
\newtheorem{definition}[theorem]{Definition}
\newtheorem{remark}{Remark}

\newcommand{\state}{\mathcal{S}(A)}
\newcommand{\F}{\bar{S}}
\newcommand{\mbI}{\mathbb{I}}
\newcommand{\mS}{\mathcal{S}}
\newcommand{\mfF}{\mathfrak{F}}

\DeclareMathOperator*{\argmin}{arg\,min}

\DeclareMathOperator{\supp}{supp}
\DeclareMathOperator{\sinc}{sinc}

\renewcommand{\tilde}[1]{\widetilde{#1}} 
\renewcommand{\hat}[1]{\widehat{#1}}

\def \Trm {\text{Tr}} 

\def \d {\mathrm{d}}

\begin{document}

\title{Mixed-state additivity properties of magic monotones based on quantum relative entropies for single-qubit states and beyond}

\author{Roberto Rubboli}
\email{roberto.rubboli@u.nus.edu}
\affiliation{Centre for Quantum Technologies, National University of Singapore, Singapore 117543, Singapore}
\orcid{0000-0002-7914-671X}
\author{Ryuji Takagi}
\orcid{0000-0003-3837-8159}
\affiliation{Department of Basic Science, The University of Tokyo, Tokyo 153-8902, Japan}
\affiliation{Nanyang Quantum Hub, School of Physical and Mathematical Sciences,
Nanyang Technological University, 637371, Singapore}
\author{Marco Tomamichel}
\affiliation{Centre for Quantum Technologies, National University of Singapore, Singapore 117543, Singapore}
\affiliation{Department of Electrical and Computer Engineering,
National University of Singapore, Singapore 117583, Singapore}
\orcid{0000-0001-5410-3329}
\maketitle

\begin{abstract}
  We prove that the stabilizer fidelity is multiplicative for the tensor product of an arbitrary number of single-qubit states.  We also show that the relative entropy of magic becomes additive if all the single-qubit states but one belong to a symmetry axis of the stabilizer octahedron. We extend the latter results to include all the $\alpha$-$z$ R\'enyi relative entropy of magic. This allows us to identify a continuous set of magic monotones that are additive for single-qubit states.  We also show that all the monotones mentioned above are additive for several standard two and three-qubit states subject to depolarizing noise. Finally, we obtain closed-form expressions for several states and tighter lower bounds for the overhead of probabilistic one-shot magic state distillation.
\end{abstract}

\section{Introduction}
The resource theory of magic provides a framework to quantify the advantage of quantum computation over classical computation~\cite{veitch2014resource}. In this setting, free resources lead to efficiently classically simulable computation, while resource states or `magic states' unlock the quantum advantage over the classical counterpart allowing for universal quantum computation. The core idea is that by injecting magic states into the circuit, using only stabilizer operations, it is possible to implement non-stabilizer gates and achieve universal quantum computation. According to the Gottesman-Knill theorem~\cite{gottesman1998theory}, stabilizer operations can be efficiently classically simulated while the simulation runtime scales exponentially with the number of injected magic states into the circuit~\cite{aaronson2004improved}. Magic monotones quantify the amount of magic resources contained in a quantum state. It turns out that the qudit magic theory for odd-dimensional Hilbert spaces is structurally simple~\cite{veitch2014resource,gross2006hudson,veitch2012negative,mari2012positive}. This is because the Wigner function is well-behaved. In this case, many monotones have been defined and multiplicativity is a common occurrence. Some prominent examples include the mana introduced in the seminal work~\cite{veitch2014resource} and the min and max-thauma introduced in~\cite{wang2020efficiently}. However, despite many attempts, alternative ways of defining a well-behaved Wigner function for qubits suffer from drawbacks, and any attempt to define additive monotones for all qubit states proved to be highly challenging~\cite{seddon2021quantifying}. 

The relative entropy measure is often a common choice in resource theories since, once regularized, it provides upper bounds on asymptotic conversion rates between states in any convex resource theory~\cite{horodecki2013quantumness}. In magic theory, the so-called relative entropy of magic was first introduced in~\cite{veitch2014resource}. However, although it was introduced almost a decade ago, its properties are known only for the few cases where it relates to other resource monotones~\cite{seddon2021quantifying,Takagi2022one-shot}. Another popular choice is the fidelity measure, which quantifies the resource of a state in terms of the infidelity from the set of free states. In the context of magic theory, it was introduced in~\cite{veitch2014resource} and it is known as stabilizer fidelity. Finally, we mention the generalized robustness of magic, which quantifies how much mixing with a state can take place before a magic state becomes a stabilizer state~\cite{Liu2019oneshot,seddon2021quantifying}. The latter monotone benchmarks several proposed classical simulators (see~\cite{seddon2021quantifying} for more details). Moreover, for single-qubit states, the generalized robustness of magic is equal to the dyadic negativity and the mixed-state extent~\cite[Theorem 3]{seddon2021quantifying}. The relative entropy of magic, the stabilizer fidelity, and the generalized robustness of magic are instances of magic monotones based on a quantum relative entropy and can be defined for any resource theory.  A fundamental problem is whether the magic monotones based on a quantum relative entropy are additive under tensor products, and if not, under which conditions they become additive. However, very little is known about the additivity properties of these monotones. Besides its theoretical interest, the latter property is practically useful since it allows us to reduce the computation of the magic resource of the `whole' to just the sum of one of the `parts'. The computation of these monotones becomes soon unfeasible for multi-qubit states as the number of qubits grows. This is because the number of pure stabilizer states scales super-exponentially with the number of qubits. For numerical experiments, we are limited to just five-qubit states~\cite{seddon2021quantifying,howard2017application}. The additivity property allows us to overcome this problem for tensor products of low-dimensional multi-qubit states. In this case, the total magic resource just equals the sum of the one of each state, which is computable.

We now summarize the main results of this work. 
\begin{itemize}
\item In Section~\ref{multiplicativity fidelity}, we prove that the stabilizer fidelity is multiplicative for tensor products of any number of single-qubit states. We then extend this result to include a continuous set of magic monotones, namely all the $\alpha$-$z$ R\'enyi relative entropies of magic in the range $|(1-\alpha)/z|=1$. In this way, we also independently recover the multiplicativity of the generalized robustness of magic for all single-qubit states established in~\cite{seddon2021quantifying}. 

\item In Section~\ref{relative entropy additivity}, we prove that the relative entropy of magic is additive for single-qubit states if all of them but one commute with their optimizer, i.e., the closest stabilizer state.  Moreover, in Section~\ref{specific states}, we show that the latter class includes the $T$,$H$, and $F$ states subject to depolarizing noise. These are the states involved in the standard magic state distillation scenario where any initial state is first projected onto a symmetry axis of the octahedron by applying the stabilizer twirling, from which one aims to distill a better pure magic state~\cite{campbell2012magic,campbell2010bound}. We extend this result to all the $\alpha$-$z$ R\'enyi relative entropies of magic. Finally, we also show that the relative entropy of magic is not additive for general two multi-qubit states by providing a counterexample.

\item In Section~\ref{multi-qubit additivity}, we prove that  all the $\alpha$-$z$ R\'enyi relative entropies of magic are additive for two and three-qubit states that belong to a specific class. In Section~\ref{specific states}, we show that the latter class includes the Toffoli, Hoggar, and $CS$ states subject to global depolarizing noise.

\item In Section~\ref{Magic state transformation} we show that the $\alpha$-$z$ R\'enyi relative entropies of magic provide tighter probabilistic lower bounds near the deterministic region for the overhead of magic state distillation. The additivity results obtained in the previous sections extend these results to the case where multiple copies are involved in the process. Moreover, they extend these results to scenarios involving catalysts that assist the transformation and could potentially provide a better overhead. 

\item In Section~\ref{closed forms}, we show that two additive quantities for single-qubit states, namely the generalized robustness and the stabilizer fidelity, have a closed for all single-qubit states. In Section~\ref{specific states} we give a closed-form for several two and three-qubit states belonging to the additivity classes for all the $\alpha$-$z$ R\'enyi relative entropies of magic. Explicitly, we consider $T,H,F$, Toffoli, Hoggar, and $CS$ magic states subject to global depolarizing noise. 

\item In Section~\ref{counterexample}, we show that for general odd-dimensional states, no monotone based on a quantum relative entropy is additive. 

\item Finally, in Section~\ref{Further directions} we discuss a method that potentially could be useful to prove the additivity of the $\alpha$-$z$ R\'enyi relative entropies of magic in the range $|(1-\alpha)/z|=1$ for any number of any two and three-qubit states.

\end{itemize}

\section{Magic states and magic monotones}
\label{Magic monotones}
We denote with $\mathcal{P}(A)$ the set of positive operators on a Hilbert space $A$. Moreover, we denote with $\state$ the set of quantum states, i.e., the subset of $\mathcal{P}(A)$ with unit trace. A resource theory is defined by a set $\mathcal{F}$ of free states and a set $\mathcal{O}_{\mathcal{F}}$ of free operations with the property that map free states into free states~\cite{Gour,Brandao2}. In the resource theory of magic, the free states are the set of stabilizer states $\mathcal{F}=\F$~\cite{veitch2014resource}. The stabilizer states are defined as follows. The $n$-qubit Clifford group $\mathcal{C}_{n}$ is the group of $n$-qubit unitaries generated by the single-qubit gates H, S, and the CNOT two-qubit gate. Here, H is the Hadamed gate, S is the phase gate and CNOT is the controlled-not gate. The pure $n$-qubit stabilizer states are the ones generated by the Clifford group acting on $\ket{0^n}$, i.e., all the states $S_n=\{ U\ket{0^n} : U \in \mathcal{C}_{n}\}$. The set of all (mixed) stabilizer states is formed by all convex combinations (convex hull) of pure stabilizer states, $ \bar{S}_n :=  \text{conv}\{\ketbra{\psi}{\psi} : \ket{\psi} \in S_n \} $. We omit the subscript $n$ if the geometry is clear from the context. We say that a state $\rho$ is magic if it is not a stabilizer state. 

For single-qubit states, the set of stabilizer states has a simple geometric characterization inside the Bloch sphere.  We say that a pure state $\ket{\psi}$ is stabilized by an operator $P$ if $P\ket{\psi} = \ket{\psi}$.  The single-qubit pure stabilizer states are the states that are stabilized by the Pauli operators as well as their products with $\pm \mathds{1}$. These are the states $\ket{0}$, $\ket{1}$, $\ket{+}:=(\ket{0}+\ket{1})/\sqrt{2}$, $\ket{-}:=(\ket{0}-\ket{1})/\sqrt{2}$, $\ket{+i}:=(\ket{0}+i\ket{1})/\sqrt{2}$, $\ket{-i}:=(\ket{0}-i\ket{1})/\sqrt{2}$.  The set of all the stabilizer states in the Bloch sphere forms an octahedron whose vertexes are the pure stabilizer states. In this work, we mainly focus on multi-qubit systems. 
The resource theory of magic can also be formulated for higher-dimensional qudit systems. We refer to~\cite{veitch2014resource,campbell2012magic} for a formal definition. 
In the following, we sometimes consider some standard magic states subject to depolarizing noise. The depolarizing channel is defined by the map
\begin{equation}
\Delta_p: \rho \rightarrow p \rho + (1-p) \frac{\mathds{1}}{d},
\end{equation}
where $p \in [0,1]$.

Resource monotones quantify the resource of a quantum state.
We call a function $\mathfrak{R} : \mathcal{S}(A) \rightarrow [0, + \infty]$ a resource monotone if it does not increase under free operations, i.e., if $\mathfrak{R}(\rho) \geq \mathfrak{R}(\mathcal{E}(\rho))$ for any state $\rho$ and any free operation $\mathcal{E} \in \mathcal{O}_\mathcal{F}$. In the resource theory of magic, resource monotones are called magic monotones. In this work, we will mainly be concerned with the magic monotones based on the $\alpha$-$z$ R\'enyi relative entropies. This allows us to address in a compact way several magic monotones already introduced in the literature.  Although there are several ways to define free operations in the resource theory of magic~\cite{veitch2014resource,seddon2019quantifying,seddon2021quantifying}, in this work we do not specify any set of free operations since the $\alpha$-$z$ R\'enyi relative entropies of magic are monotone for any choice of free operations. We say that $\mathfrak{R}$ is \textit{tensor sub-additive} (or just sub-additive) for the states $\rho_1$ and $\rho_2$ if $\mathfrak{R}(\rho_1 \otimes \rho_2) \leq \mathfrak{R}(\rho_1) +  \mathfrak{R}(\rho_2)$.  Moreover, we say that $\mathfrak{R}$ is \textit{tensor additive} (or just additive) for the states $\rho_1$ and $\rho_2$ if $\mathfrak{R}(\rho_1 \otimes \rho_2) = \mathfrak{R}(\rho_1) +  \mathfrak{R}(\rho_2)$. 

We now introduce the $\alpha$-$z$ R\'enyi relative entropies. Let $ \alpha \in (0,1) \cup(1,\infty), \; z>0$, $\rho \in \state$ and $ \sigma \in \mathcal{P}(A)$. Then the $\alpha$-$z$ \textit{R\'enyi relative entropy} of $\sigma$ with $\rho$ is defined as~\cite{audenaert2015alpha,zhang2020wigner}
\begin{equation}
\label{definition}
D_{\alpha,z}(\rho \| \sigma):=
\begin{cases}
\frac{1}{\alpha-1}\log{\text{Tr}\Big[\big(\rho^\frac{\alpha}{2z}\sigma^\frac{1-\alpha}{z}\rho^\frac{\alpha}{2z}\big)^z\Big]} & \text{if}\; (\alpha<1 \wedge \rho \not \perp \sigma) \vee \rho \ll \sigma \\
 +\infty & \text{else}
\end{cases} \,.
\end{equation}
In the following, we denote 
\begin{equation}
Q_{\alpha,z}(\rho \| \sigma):=\exp \big( (\alpha-1)D_{\alpha,z}(\rho \| \sigma) \big)\,.
\end{equation}
In the limit points of the ranges of the parameters, we define the $\alpha$-$z$ R\'enyi relative entropy by taking the corresponding pointwise limits.
In particular, throughout the text, we will often be concerned with the pointwise limits
\begin{align}
\label{limits}
&D_{\min}(\rho \| \sigma) = \lim \limits_{\alpha \rightarrow 0} D_{\alpha,1-\alpha}(\rho \| \sigma) \,, \\
&D(\rho \| \sigma) = \lim \limits_{\alpha \rightarrow 1} D_{\alpha,\alpha}(\rho \| \sigma)\,, \\
&D_{\max}(\rho \| \sigma) = \lim \limits_{\alpha \rightarrow \infty} D_{\alpha,\alpha-1}(\rho \| \sigma)   \,,
\end{align}
where 
\begin{align}
\label{Dmin}
& D_{\min}(\rho \| \sigma) := -\log{\Tr(\Pi(\rho)\sigma)} \, , \\
&D(\rho \| \sigma) := \Tr(\rho (\log{\rho} - \log{\sigma}))\,, \,  \text{and}\\
&D_{\max}(\rho \| \sigma) := \inf \{ \lambda \in \mathbb{R} : \rho \leq 2^{\lambda}\sigma \} \, ,
\end{align}
are the \textit{min-relative entropy}~\cite{renner2008security,Datta_rob2}, the Umegaki relative entropy, and the  \textit{max-relative entropy}~\cite{tomamichel2015quantum,Datta_rob2,renner2008security}, respectively. In Appendix~\ref{convergence monotone}, we prove the first and the third limit of~\eqref{limits}. Note that the second limit in~\eqref{limits} for the Umegaki relative entropy holds for any $z > 0$~\cite{lin2015investigating}. Here, we denoted with $\Pi(\rho)$ the projector onto the support of $\rho$.

In~\cite[Theorem 1.2]{zhang2020wigner}, the author proved that the $\alpha$-$z$ R\'enyi relative entropy satisfies the data-processing inequality (DPI) for the following range of parameters
\begin{enumerate}
\item  $\quad 0<\alpha<1 \; \; and \; \; z\geq \max\{\alpha,1-\alpha\},$ 
\item $\quad 1<\alpha \leq 2 \;\;  and \;\;  \frac{\alpha}{2} \leq z \leq \alpha,\; and $
\item $\quad 2 \leq \alpha<\infty \;\;  and \;\;  \alpha-1 \leq z \leq \alpha\, . $
\end{enumerate}
We denote with $\mathcal{D}$ the set of above values of the parameters $(\alpha,z)$ for which the $D_{\alpha,z}$ satisfies the DPI.
Because of the limits in~\eqref{limits}, the $\alpha$-$z$ R\'enyi relative entropy also satisfies the DPI on the line $\alpha=1$. We include this line in the region $\mathcal{D}$. 
In Fig.~\ref{fig: alpha-z}, we represent in blue the region for which the $\alpha$-$z$ R\'enyi relative entropies satisfy the DPI.

\begin{figure}
\centering
{\includegraphics[width=1\textwidth]{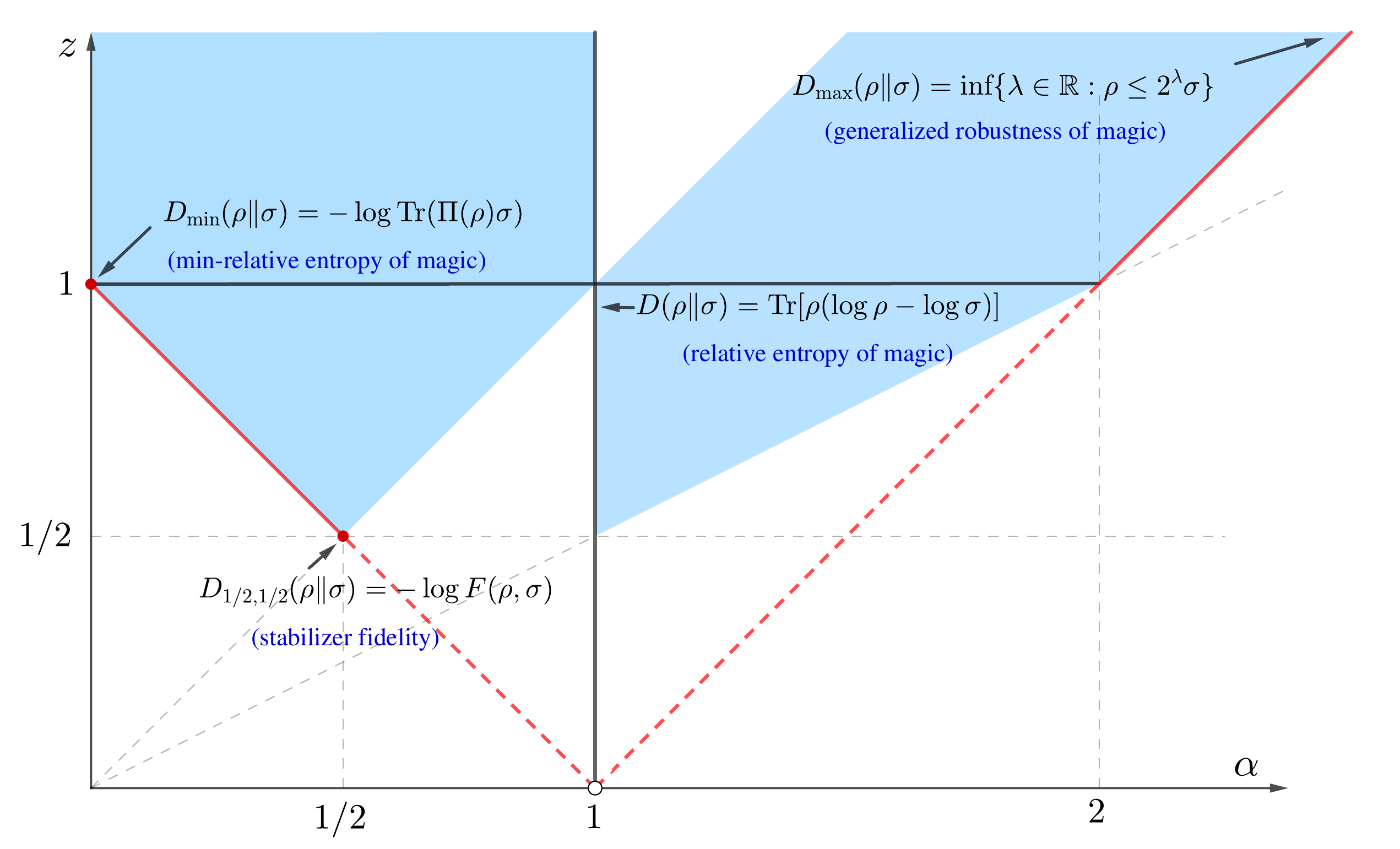}}
\caption{$\alpha$-$z$ plane for the $\alpha$-$z$ R\'enyi relative entropies. The vertical line $\alpha=1$ for $z > 0$ corresponds to the Umegaki relative entropy. The blue region (together with the Umegaki relative entropy line) is where the data-processing inequality holds. We show in red the special line $|(1-\alpha)/z|=1$ located at the boundary of the DPI region. We write in blue the related magic monotones associated with the respective points. In Section~\ref{Magic state transformation} we use the additivity of the monotones in a neighborhood of $\alpha=0, z=1$  to obtain tighter one-shot bounds for magic state distillation.}
\label{fig: alpha-z}
\end{figure}

We define magic monotones for $(\alpha,z)\in\mathcal{D}$ as
\begin{align}
\label{problem}
&\mathfrak{D}_{\alpha,z} (\rho) := \inf_{\sigma \in \F} D_{\alpha,z}(\rho \| \sigma) \,.
\end{align}
Analogously, we denote 
\begin{equation}
\mathcal{Q}_{\alpha,z}(\rho):=\exp \big( (\alpha-1)\mathfrak{D}_{\alpha,z}(\rho) \big)\,.
\end{equation}
In the following, we call these monotones R\'enyi relative entropies of magic.
In particular, we have
\begin{equation}
\label{definitions}
\mathfrak{D}_{\min}(\rho) :=   \inf_{\sigma \in \F} D_{\min}(\rho \| \sigma) , \quad \mathfrak{D}(\rho) :=   \inf_{\sigma \in \F} D(\rho \| \sigma) , \quad \mathfrak{D}_{\max}(\rho):= \inf_{\sigma \in \F} D_{\max}(\rho \| \sigma) \,.
\end{equation}
Note that since the $\alpha$-$z$ R\'enyi relative entropies are lower semicontinuous, we can replace the above infimum with a minimum~\cite{rubboli2024new}.
In Appendix~\ref{convergence monotone} we prove that we can recover the latter monotones as pointwise limits of the monotone~\eqref{problem}. Namely, we have
\begin{equation}
\label{limits monotone}
\mathfrak{D}_{\min}(\rho) = \lim \limits_{\alpha \rightarrow 0}\mathfrak{D}_{\alpha,1-\alpha}(\rho) , \quad \mathfrak{D}(\rho) = \lim \limits_{\alpha \rightarrow 1}\mathfrak{D}_{\alpha,\alpha}(\rho) , \quad \mathfrak{D}_{\max}(\rho) = \lim \limits_{\alpha \rightarrow \infty}\mathfrak{D}_{\alpha,\alpha-1}(\rho) \,.
\end{equation}
The $\mathfrak{D}_{\alpha,z}$ do not increase under any set of free operations that map stabilizer states into stabilizer states, i.e., they are indeed magic monotones. The latter property is a straightforward consequence of the fact that the underlying $D_{\alpha,z}$ satisfies the data-processing inequality. Moreover, it is easy to prove that the $\mathfrak{D}_{\alpha,z}$ are sub-additive for general states. 

The $\alpha$-$z$ R\'enyi relative entropies encompass many well-known quantities. The relative entropy of magic was first introduced in the seminal work~\cite{veitch2014resource} and was the first monotone based on a quantum relative entropy to be proposed.  The relative entropy of magic is equal to $\mathfrak{D}$ defined in~\eqref{definitions}. We remark that our definition differs from the relative entropy of magic introduced in~\cite{bu2023quantum}. Explicitly, we consider the case where the relative entropy is minimized over all stabilizer states. However, in~\cite{bu2023quantum}, the authors consider the minimization over the set of minimal stabilizer projection states. The stabilizer fidelity was first introduced in~\cite{bravyi2019simulation} and is defined as
\begin{equation}
\mathfrak{F}(\rho) = \sup \limits_{\sigma \in \F}F(\rho, \sigma)
\,,
\end{equation}
where $F(\rho, \sigma) :=  ( \Trm|\sqrt{\rho}\sqrt{\sigma}| )^2$ is the Uhlmann's fidelity. We have the relationship $\mathfrak{D}_{\frac{1}{2},\frac{1}{2}}(\rho)  =-\log{\mathfrak{F}(\rho)}$. The generalized robustness of magic was later introduced in~\cite{Liu2019oneshot,seddon2021quantifying} and it is defined as
\begin{equation}
\Lambda^{+}(\rho) := \inf \left\{ s\geq 1: \exists \omega \in \mathcal{S}(A) \,\, \text{s.t} \,\, \frac{1}{s}\left(\rho + (s-1)\omega\right) \in \F  \right\}\, .
\end{equation}
The generalized robustness of magic is connected to the monotone based on the max-relative entropy through the relation $\mathfrak{D}_{\max}(\rho)=\log{(\Lambda^+(\rho))}$. In Fig.~\ref{fig: alpha-z} we show where the latter monotones are located in the $\alpha$-$z$ plane. We show in red the boundary of the DPI region $|(1-\alpha)/z|=1$. Notably, the latter line contains the stabilizer fidelity, the min, and the max-relative entropy of magic. As we discuss below, these monotones on the boundary satisfy stronger additivity properties than the one located at the interior of the DPI region.

 For a monotone to be called an entanglement measure, further properties such as faithfulness, convexity, or strong monotonicity are usually required~\cite{horodecki2009quantum,vedral1998entanglement,vedral1997quantifying}. As we discuss in Appendix~\ref{distillation lower bound}, the relative entropy of magic, the stabilizer fidelity, and the generalized robustness of magic satisfy these properties. Moreover, for any $(\alpha,z) \in \mathcal{D}$ it is possible to construct quantities with these desired properties. This fact is very general and holds for a wide class of resource theories. 
 
In the next sections, we focus on the additivity properties of the magic monotones based on the $\alpha$-$z$ R\'enyi relative entropies. Note that we can restrict our attention only to magic states. Indeed, for free states, additivity readily follows from data processing under partial trace and sub-additivity of the monotones as noted in~\cite[Theorem 7]{veitch2014resource}. We summarize in the table below the main findings.

\setlength{\tabcolsep}{10pt}
\renewcommand{\arraystretch}{1.6}
\begin{table}
\begin{tabular}{l  l  l} 
 &  $|(1-\alpha)/z|=1$ &  $|(1-\alpha)/z|\neq1$ \\  
\hline
\multirow{2}{*}{single-qubit}  &  \multirow{2}{*}{all  (Thm.~\ref{Additivity in the line})} &  not all  (Subsec.~\ref{counterexample qubits})\\  
& & class 1 (Thm.~\ref{theorem relative entropy of magic qubits}) \\
\hline
\multirow{2}{*}{two/three-qubit} & conjecture all (Sec.~\ref{Further directions})  &  not all (Subsec.~\ref{counterexample qubits}) \\  
& class 2  (Thm.~\ref{theorem relative entropy of magic})  &  class 2  (Thm.~\ref{theorem relative entropy of magic})\\
\hline
multi-qubit ($n \geq 12$) &  not all (Sec.~\ref{multi-qubit additivity}) &  not all (Subsec.~\ref{counterexample qubits})\\  
\hline
qutrit &  not all (Sec.~\ref{counterexample})&  not all (Sec.~\ref{counterexample}) \\  
\hline
\end{tabular}
\caption{The table shows the additivity properties of the monotones based on the $\alpha$-$z$ R\'enyi relative entropies as a function of the system dimension. We split the range of the parameters in two sets, namely $|(1-\alpha)/z|=1$ and $|(1-\alpha)/z| \neq 1$. Indeed, they show different additivity properties. For the first range, additivity holds for all single-qubit states (Theorem~\ref{Additivity in the line}) and a specific class of two and three-qubit states (Thereom~\ref{theorem relative entropy of magic}). However, in Section~\ref{Further directions} we conjecture that additivity holds for all states. For multi-qubit states with more than $12$ qubits, additivity does not generally hold (see Section~\ref{multi-qubit additivity}). For qutrit states, additivity does not generally hold (see Section~\ref{counterexample}). The monotones in the range $|(1-\alpha)/z| \neq 1$ are not additive for single-qubit states~(see Subsection~\ref{counterexample qubits}) but only for specific states (Theorem~\ref{theorem relative entropy of magic qubits}). For two and three-qubit states, they are additive for a specific class (Theorem~\ref{theorem relative entropy of magic}). For multi-qubit states ($n \geq 12$) and qutrit states, they are not additive.} 
\label{states}
\end{table}

\section{Additivity of the $\alpha$-$z$ R\'enyi relative entropies of magic in the range $|(1-\alpha)/z|=1$ for all single-qubit states}
\label{multiplicativity fidelity}
In this section, we prove that the stabilizer fidelity is multiplicative for an arbitrary number of tensor products of single-qubit states. Previously, it was known only for pure states~\cite{bravyi2019simulation}. We generalize the result for all the $\alpha$-$z$ R\'enyi relative entropies of magic in the region $|(1-\alpha)/z|=1$. In this way, by taking the limit $\alpha \rightarrow \infty$ and $\alpha \rightarrow 0$, we independently recover the multiplicativity of the generalized robustness of magic and the additivity of the $\mathfrak{D}_{\min}$ established in~\cite{seddon2021quantifying} and~\cite{saxena2022quantifying}, respectively.

The key ingredient for the proof of the stabilizer fidelity is the multiplicativity for all single-qubit states of the linear functional
\begin{equation}
     \mathfrak{F}_0(\rho):= \max \limits_{\sigma \in \F}\Trm(\rho \sigma)\,.
\end{equation}
Note that the above function is simpler than the fidelity due to its linearity. As we show later, the multiplicativity of the above quantity is tightly connected to the one of the stabilizer fidelity. 

We now show that $\mathfrak{F}_0$ is multiplicative for an arbitrary number of single-qubit states. 
\begin{proposition}
\label{multiplicativity F_0}
Let $\{\rho_i\}_{i=1}^L$ be a set of single-qubit states. Then, we have
    \begin{equation}
         \mathfrak{F}_0 \left( \bigotimes_{i=1}^L \rho_i \right) = \prod_{i=1}^L  \mathfrak{F}_0(\rho_i) \,.
    \end{equation}
\end{proposition}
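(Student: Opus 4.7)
The plan is to prove the claim by matching upper and lower bounds. The easy direction $\mathfrak{F}_0(\bigotimes_i \rho_i) \geq \prod_i \mathfrak{F}_0(\rho_i)$ follows by tensoring single-qubit optimizers $\sigma_i^\star \in \F$, whose product still lies in $\F$. For the reverse inequality, I use the fact that $\mathfrak{F}_0$ is a linear maximization over the polytope $\F$: the optimum is attained at a pure $L$-qubit stabilizer state $\ket\psi \in S_L$, which I expand as $\ketbra{\psi}{\psi} = 2^{-L}\sum_{g \in G} g$, where $G$ is its stabilizer group (of size $2^L$, with $-\mathds{1} \notin G$).

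Writing each single-qubit state in Bloch form $\rho_i = (\mathds{1} + \vec r_i \cdot \vec\sigma)/2$, a direct calculation over the single-qubit octahedron gives $\mathfrak{F}_0(\rho_i) = (1 + \|\vec r_i\|_\infty)/2$. Expanding the product expectation value in the Pauli basis yields
\begin{equation}
\langle\psi|\textstyle\bigotimes_{i=1}^L \rho_i|\psi\rangle = \frac{1}{2^L}\sum_{g \in G}\epsilon_g \prod_{i=1}^L (r_i)_{P^{(i)}_g},
\end{equation}
where $g = \epsilon_g P_g$ with $\epsilon_g \in \{\pm 1\}$, $P^{(i)}_g$ is the single-qubit Pauli of $P_g$ on site $i$, and $(r_i)_I := 1$. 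The triangle inequality combined with $|(r_i)_P| \leq \|\vec r_i\|_\infty$ for non-identity Paulis upper-bounds this by $2^{-L}\sum_{P \in \tilde G}\prod_i b^{(i)}_{P^{(i)}}$, where $\tilde G$ is the set of unsigned Paulis underlying $G$, $b_I^{(i)} := 1$, and $b_X^{(i)} = b_Y^{(i)} = b_Z^{(i)} := \|\vec r_i\|_\infty$.

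The final step is a multi-linearity reduction. Setting $r_i := \|\vec r_i\|_\infty \in [0,1]$ and $m_S := |\{P \in \tilde G : \supp(P) = S\}|$ for $S \subseteq \{1,\dots,L\}$, the upper bound reads $2^{-L}\sum_S m_S \prod_{i \in S} r_i$, which I compare to $2^{-L}\prod_i (1+r_i) = \prod_i \mathfrak{F}_0(\rho_i)$. Both expressions are affine in each $r_i$ separately, so their difference attains its extrema at the corners $\{0,1\}^L$. Evaluated at the corner with $T := \{i : r_i = 1\}$, the required inequality reduces to
\begin{equation}
\sum_{S \subseteq T} m_S = |\{P \in \tilde G : \supp(P) \subseteq T\}| \leq 2^{|T|}.
\end{equation}
This holds because the left-hand set is an abelian subgroup of the $|T|$-qubit Pauli group (embedded with identity on $\{1,\dots,L\} \setminus T$) that does not contain $-\mathds{1}$, and such subgroups have at most $2^{|T|}$ elements. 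I expect the main conceptual hurdle to be precisely this reduction: it is a priori not obvious that the contribution of highly entangled stabilizer vertices (those with $|T| = L$) can be uniformly tamed by a single Pauli-subgroup cardinality bound.
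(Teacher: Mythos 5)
Your proof is correct, but it takes a genuinely different route from the paper's. The paper reduces the mixed-state problem to the pure-state one: it writes each qubit as a pseudo-mixture $\rho_i = a_i\psi_{r_i} + b_i\mathds{1}$ with $a_i,b_i\geq 0$, expands the tensor product, disposes of the cross terms using the fact that partial traces of stabilizer states are stabilizer states, and then invokes the known multiplicativity of the stabilizer fidelity for pure single-qubit states from Bravyi et al.\ as a black box. You instead give a self-contained argument that never uses that external result: you expand the optimal pure stabilizer vertex via its stabilizer group, bound the overlap through $|\Tr(P\rho_i)|\leq\|\vec r_i\|_\infty$, and close with a multi-affine corner reduction to the bound $|\{P\in\tilde G:\supp(P)\subseteq T\}|\leq 2^{|T|}$, which follows from the standard isotropic-subspace argument for abelian Pauli subgroups without $-\mathds{1}$. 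The corner reduction is indeed the right move, since individual support classes can contain more than one group element and a term-by-term comparison would fail. What each approach buys: the paper's proof is shorter given the cited pure-state result and generalizes naturally to settings where pure-state multiplicativity is already known; yours is fully elementary, reproves the pure-state case as a special instance, and yields the explicit value $\mathfrak{F}_0(\bigotimes_i\rho_i)=\prod_i(1+\|\vec r_i\|_\infty)/2$ directly. One small point of rigor: the set of unsigned Paulis supported in $T$ is not literally a subgroup (signs may appear under multiplication); you should phrase the counting bound in terms of the signed subgroup $\{g\in G:\supp(P_g)\subseteq T\}$ of $G$, whose restriction to the qubits in $T$ embeds injectively as an abelian subgroup of the $|T|$-qubit Pauli group not containing $-\mathds{1}$.
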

    \begin{proof}
        Let $r_i=(x_i,y_i,z_i)$ be the Bloch vectors of $\rho_i$ for $i=1,2$. 
        Without loss of generality, since $\mathfrak{F}_0$ is invariant under Clifford unitaries, let us assume that $z_i \geq y_i \geq x_i \geq 0$. Using the spectral decomposition, we can decompose any qubit state as a convex combination between a pure state and the identity. Indeed, we have
\begin{equation}
\rho = p \psi_{r} + (1-p)  \psi_{-r} = (2p-1) \psi_{r} + (1-p) \mathds{1}\,,
\end{equation}
where we denoted with $\psi_{r}$ and $\psi_{-r}$ the eigenvectors of $\rho$. Note that, by assumption, $p \geq 1/2$. We, therefore, obtain that
\begin{align}
\mathfrak{F}_0(\rho) &= \max \limits_{\sigma \in \F} \Trm(\rho \sigma) \\
 &= (2p-1) \max \limits_{\sigma \in \F} \Trm(\psi_{r} \sigma) + (1-p) \\
 \label{one state}
&=  a\mathfrak{F}_0(\psi_{r}) + b \,,
\end{align}
where we defined the positive constant $a:= (2p-1)$ and $b :=  (1-p)$. Hence, the optimum is achieved by the state that maximizes the overlap with the pure state in the decomposition with the highest weight. 
We now consider the tensor product of two states. Using the above decomposition for both states, we obtain
\begin{align}
\mathfrak{F}_0(\rho_1 \otimes \rho_2) =& a_1 a_2\Trm(\tau \psi_{r_1} \otimes \psi_{r_2}) + a_1b_2\Trm(\Trm_2(\tau)\psi_{r_1}) + b_1 a_2 \Trm(\Trm_1(\tau)\psi_{r_2}) + b_1 b_2 \\
\leq & a_1 a_2 \mathfrak{F}_0(\psi_{r_1})\mathfrak{F}_0(\psi_{r_2}) + a_1 b_2\mathfrak{F}_0(\psi_{r_1}) + b_1 a_2 \mathfrak{F}_0(\psi_{r_2}) + b_1 b_2 \\
=& (a_1 \mathfrak{F}_0(\psi_{r_1}) + b_1)(a_2 \mathfrak{F}_0(\psi_{r_2}) + b_2) \\
= & \mathfrak{F}_0(\rho_1)\mathfrak{F}_0(\rho_2) \,,
\end{align}
where we used $\rho_{1,2}=a_{1,2}\psi_{r_{1,2}}+b_{1,2}\mathds{1}$. Here, we denoted with $\tau$ the optimizer of $\rho_1 \otimes \rho_2$. Moreover, we used that, for pure states, $\mathfrak{F}_0$ is equal to the stabilizer fidelity and the latter is multiplicative for an arbitrary number of pure states~\cite{bravyi2019simulation}. We also used that the partial trace of a free state is a free state since the partial trace is a free operation. In the last equality, we used equation~\eqref{one state}. Note that we always have $\mathfrak{F}_0(\rho_1 \otimes \rho_2) \geq \mathfrak{F}_0(\rho_1)\mathfrak{F}_0(\rho_2)$ since $\mathfrak{F}_0$ is super-multiplicative. Hence, we have that $\mathfrak{F}_0(\rho_1 \otimes \rho_2) = \mathfrak{F}_0(\rho_1)\mathfrak{F}_0(\rho_2)$. The proof for the tensor product of more than two qubits follows similarly. 
\end{proof}
We now show that all the  $\alpha$-$z$ R\'enyi relative entropies of magic are additive for single-qubit states in the range $|(1-\alpha)/z|=1$. The result for $\alpha=z=1/2$ implies that the stabilizer fidelity is multiplicative for tensor products of \text{any} single-qubit states.

\begin{theorem}
\label{Additivity in the line}
Let $(\alpha,z) \in \mathcal{D}$ such that $|(1-\alpha)|/z =1$ and $\{\rho_i\}_{i=1}^L$ be a set of single-qubit states. Then, we have
\begin{equation}
\mathfrak{D}_{\alpha,z}\left( \bigotimes_{i=1}^L \rho_i \right) = \sum_{i=1}^L \mathfrak{D}_{\alpha,z}(\rho_i) \,.
    \end{equation}
\end{theorem}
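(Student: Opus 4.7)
The plan is to establish the two inequalities separately. Sub-additivity is immediate: picking optimizers $\sigma_i^\star\in\bar{S}$ for each $\mathfrak{D}_{\alpha,z}(\rho_i)$, the product $\bigotimes_i\sigma_i^\star$ is again a stabilizer state (tensor products of pure stabilizers are pure stabilizers, and convex combinations carry through), so additivity of $D_{\alpha,z}$ under tensor products on both arguments yields
\[
\mathfrak{D}_{\alpha,z}\!\left(\bigotimes_i\rho_i\right)\leq D_{\alpha,z}\!\left(\bigotimes_i\rho_i\,\Big\|\,\bigotimes_i\sigma_i^\star\right)=\sum_i D_{\alpha,z}(\rho_i\|\sigma_i^\star)=\sum_i\mathfrak{D}_{\alpha,z}(\rho_i).
\]

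For the reverse inequality, the plan is to exploit that on the line $|(1-\alpha)/z|=1$ the quantity $Q_{\alpha,z}(\rho\|\sigma)$ is linear in $\sigma$ (for $(1-\alpha)/z=1$) or in $\sigma^{-1}$ (for $(1-\alpha)/z=-1$), up to an outer trace $z$-th power. I would then combine a variational representation of this quantity with the multiplicativity of $\mathfrak{F}_0$ from Proposition~\ref{multiplicativity F_0}, which extends by homogeneity to arbitrary positive operators so that $\mathfrak{F}_0(\bigotimes_i X_i)=\prod_i\mathfrak{F}_0(X_i)$ for single-qubit positive $X_i$. The idea is that once $\mathfrak{D}_{\alpha,z}(\rho)$ is expressed as an extremum over a test operator $X$ of a product of expressions of the form $\Trm(X\rho)^a\cdot\mathfrak{F}_0(X^{\pm 1})^b$, restricting the test to a product $X=\bigotimes_iX_i$ on $\bigotimes_i\rho_i$ decouples the single-qubit problems.

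Concretely, at the stabilizer-fidelity point $\alpha=z=1/2$ I would use Alberti's theorem $F(\rho,\sigma)=\inf_{X>0}\Trm(X\rho)\Trm(X^{-1}\sigma)$ to obtain $\mathfrak{F}(\rho)\leq\inf_{X>0}\Trm(X\rho)\mathfrak{F}_0(X^{-1})$; testing $X=\bigotimes_iX_i$ on $\bigotimes_i\rho_i$ and applying Proposition~\ref{multiplicativity F_0} to $\mathfrak{F}_0(\bigotimes_iX_i^{-1})$ delivers the sub-multiplicativity of $\mathfrak{F}$. At the endpoint $\alpha\to\infty$ the same strategy runs through the SDP dual $\Lambda^+(\rho)=\max\{\Trm(W\rho):W\geq 0,\,\mathfrak{F}_0(W)\leq 1\}$, and at $\alpha\to 0$ the identity $\mathfrak{D}_{\min}(\rho)=-\log\mathfrak{F}_0(\Pi(\rho))$ makes Proposition~\ref{multiplicativity F_0} directly applicable. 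For intermediate $(\alpha,z)$ on the line, the plan is to derive an analogous dual for $Q_{\alpha,z}(\rho\|\sigma)$ using Lieb--Ando/Epstein joint concavity of $\Trm[(A^{1/2}BA^{1/2})^q]$ and the operator arithmetic--geometric mean $X+X^{-1}A\geq 2\sqrt{A}$; as a fall-back, one can directly expand $(\bigotimes_i\rho_i)^{\alpha/2z}\sigma(\bigotimes_i\rho_i)^{\alpha/2z}$ via the spectral decomposition $\rho_i=a_i\psi_i+b_i\mathds{1}$ (so $\rho_i^{\alpha/2z}=c_{i,1}\psi_i+c_{i,2}\mathds{1}$) and bound the resulting $4^L$ terms by the multiplicativity of $\mathfrak{F}_0$, mirroring the bookkeeping in the proof of Proposition~\ref{multiplicativity F_0}.

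The main obstacle I foresee is producing a clean variational representation that works uniformly along the whole line $|(1-\alpha)/z|=1$ and whose objective decouples under tensor products through Proposition~\ref{multiplicativity F_0}. The endpoints enjoy familiar SDP-style duals, but generic $(\alpha,z)$ lack the operator-monotonicity and joint convexity properties that would make such a reduction routine; the minimax swap used for the fidelity endpoint also requires a delicate justification since $\Trm(X\rho)\Trm(X^{-1}\sigma)$ is not jointly convex-concave. A plausible bridging strategy, if the unified dual stalls, is to first establish the additivity on a dense subset of the line by the spectral-expansion route and then extend to the full line by continuity, invoking the pointwise convergence relations collected in~\eqref{limits monotone}.
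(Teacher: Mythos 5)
You have correctly identified the engine of the proof --- the multiplicativity of the linear functional $\mathfrak{F}_0$ from Proposition~\ref{multiplicativity F_0} --- and your endpoint arguments are essentially sound: the $\alpha\to 0$ case via $\mathfrak{D}_{\min}(\rho)=-\log\mathfrak{F}_0(\Pi(\rho))$ and $\Pi(\rho_1\otimes\rho_2)=\Pi(\rho_1)\otimes\Pi(\rho_2)$, and the $\alpha\to\infty$ case via the dual of $\Lambda^+$ (tensor products of dual-feasible witnesses remain feasible because $\mathfrak{F}_0$ is multiplicative) are exactly the known routes. However, the theorem is a claim about the \emph{entire} line $|(1-\alpha)/z|=1$ inside $\mathcal{D}$ (the segment $z=1-\alpha$, $0<\alpha<1$, and the segment $z=\alpha-1$, $\alpha\geq 2$), and your proposal does not deliver it there. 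The three routes you offer for generic $(\alpha,z)$ each stall: (i) the unified variational dual is never constructed, and you yourself note that the minimax swap already needed at $\alpha=z=1/2$ lacks justification because $\Trm(X\rho)\Trm(X^{-1}\sigma)$ is not convex--concave; (ii) the spectral-expansion fall-back fails because for $z\neq 1$ the outer fractional power in $Q_{\alpha,z}(\rho\|\sigma)=\Trm\bigl[(\rho^{\alpha/2z}\sigma^{(1-\alpha)/z}\rho^{\alpha/2z})^{z}\bigr]$ prevents a term-by-term bound after expanding $\rho_i=a_i\psi_i+b_i\mathds{1}$; (iii) the density-plus-continuity bridge is vacuous, since the points you can actually handle are three isolated points, not a dense subset of the line.

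What closes the gap in the paper is precisely the ingredient your proposal is missing: the first-order optimality characterization of Theorem~\ref{main Theorem}, which says $\tau$ is an optimizer iff the \emph{linear} condition $\Trm(\sigma\,\Xi_{\alpha,z}(\rho,\tau))\leq Q_{\alpha,z}(\rho\|\tau)$ holds for all $\sigma\in\F$. The paper then verifies that the tensor product of marginal optimizers satisfies this condition, using that on the line $|(1-\alpha)/z|=1$ the certificate operator factorizes exactly, $\Xi_{\alpha,z}(\rho_1\otimes\rho_2,\tau_1\otimes\tau_2)=\Xi_{\alpha,z}(\rho_1,\tau_1)\otimes\Xi_{\alpha,z}(\rho_2,\tau_2)$, so that Proposition~\ref{multiplicativity F_0} applies directly to the normalized operators $\hat{\Xi}_{\alpha,z}$. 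This linearization is also what rescues your fidelity-endpoint argument: the fact that the Alberti-optimal $X$ for $(\rho,\tau)$ certifies optimality of $\tau$ over the whole polytope (i.e.\ $\max_{\sigma\in\F}\Trm(X^{-1}\sigma)=\Trm(X^{-1}\tau)$) is exactly the content of Theorem~\ref{main Theorem} at $\alpha=z=1/2$, where $\Xi_{1/2,1/2}(\rho,\tau)$ coincides with the Alberti witness. Without this (or an equivalent KKT-type condition), your proposal proves the statement only at isolated points of the line.
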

\begin{proof}
We first start by considering the tensor product of two-qubit states $\rho_1$ and $\rho_2$. To prove multiplicativity, we show that $\tau_1 \otimes \tau_2 \in \argmin_{\sigma \in \F} D_{\alpha,z}(\rho_1 \otimes \rho_2 \| \sigma)$ if $\tau_1$ and $\tau_2$ are some optimizers of the marginal problems, i.e., $\tau_1 \in \argmin_{\sigma \in \F} D_{\alpha,z}(\rho_1 \| \sigma)$ and $\tau_2 \in \argmin_{\sigma \in \F} D_{\alpha,z}(\rho_2 \| \sigma)$. Since the $D_{\alpha,z}$ are additive, the latter condition implies the additivity of the related monotones. Indeed, we have  $\mathfrak{D}_{\alpha,z}(\rho_1 \otimes \rho_2) = D_{\alpha,z}(\rho_1 \otimes \rho_2 \| \tau_1 \otimes \tau_2) = D_{\alpha,z}(\rho_1 \|\tau_1) + D_{\alpha,z}(\rho_2 \|\tau_2) = \mathfrak{D}_{\alpha,z}(\rho_1)+ \mathfrak{D}_{\alpha,z}(\rho_2)$. 
Therefore, to prove additivity, according to Theorem~\ref{main Theorem} in Appendix~\ref{necessary and sufficient}, we need to show that 
\begin{equation}
\label{functional}
\Trm(\sigma \Xi_{\alpha,z}(\rho_1\otimes \rho_2, \tau_1 \otimes \tau_2)) \leq Q_{\alpha,z}(\rho_1 \otimes \rho_2 \| \tau_1 \otimes \tau_2)= Q_{\alpha,z}(\rho_1 \| \tau_1) Q_{\alpha,z}(\rho_2 \| \tau_2), \quad \forall \sigma \in \F \,.
\end{equation}
We define the positive operator $\Xi_{\alpha,z}(\rho,\tau)$ in Appendix~\ref{necessary and sufficient}. In the last equality, we used that the $Q_{\alpha,z}$ are multiplicative. Note that we always have that $\tau_1 \otimes \tau_2$ satisfies the support conditions since, by assumption, the states $\tau_1$ and $\tau_2$ are optimizers of the marginal problems and hence satisfy them. 

We then have, for any $\sigma \in \F$,
\begin{align}
 \Trm(\sigma \Xi_{\alpha,z}(\rho_1 \otimes \rho_2, \tau_1 \otimes \tau_2)) & \leq  \max \limits_{\sigma \in \F}\Trm(\sigma \Xi_{\alpha,z}(\rho_1\otimes \rho_2, \tau_1 \otimes \tau_2))\\
 \label{tensor-product gradient}
 &  =  \max \limits_{\sigma \in \F}\Trm(\sigma\Xi_{\alpha,z}( \rho_1, \tau_1) \otimes \Xi_{\alpha,z}( \rho_2, \tau_2)) \\
&  =  C_{1,\alpha,z} C_{2,\alpha,z} \max \limits_{\sigma \in \F}\text{Tr}(\sigma\hat{\Xi}_{\alpha,z}( \rho_1, \tau_1) \otimes \hat{\Xi}_{\alpha,z}( \rho_2, \tau_2)) \\
 \label{additivity 1}
& =  C_{1,\alpha,z} C_{2,\alpha,z} \max \limits_{\sigma \in \F}\text{Tr}(\sigma\hat{\Xi}_{\alpha,z}( \rho_1, \tau_1)) \max \limits_{\sigma \in \F}\text{Tr}(\sigma\hat{\Xi}_{\alpha,z}( \rho_2, \tau_2)) \\
& =  \max \limits_{\sigma \in \F}\Trm(\sigma \Xi_{\alpha,z}( \rho_1, \tau_1)) \max \limits_{\sigma \in \F}\Trm(\sigma \Xi_{\alpha,z}( \rho_2, \tau_2)) \\
& = \; Q_{\alpha,z}(\rho_1\| \tau_1) Q_{\alpha,z}(\rho_2\| \tau_2) \,,
\end{align}
where we defined the positive constants $C_{i,\alpha,z}:= \text{Tr} (\Xi_{\alpha,z}( \rho_i, \tau_i) )$ for $i=1,2$, and for a positive semidefinite operator $A$ we defined the quantum state $\hat{A}:= A/\Trm(A)$. In~\eqref{tensor-product gradient} we used that for $|(1-\alpha)/z|=1$, it holds $\Xi_{\alpha,z}(\rho_1\otimes \rho_2, \tau_1 \otimes \tau_2) = \Xi_{\alpha,z}( \rho_1, \tau_1) \otimes \Xi_{\alpha,z}( \rho_2, \tau_2)$ (see Appendix~\ref{necessary and sufficient}).
In~\eqref{additivity 1} we used the multiplicativity result in Proposition~\ref{multiplicativity F_0}. The last equality follows from the fact that $\tau_1$ and $\tau_2$ are, by assumptions, optimizers of the marginal problems. Indeed, from Theorem~\ref{main Theorem}, if $\tau \in \argmin_{\sigma \in \F} Q_{\alpha,z}(\rho \| \sigma)$, we have that $\Trm(\sigma\Xi_{\alpha,z}( \rho, \tau)) \leq Q_{\alpha,z}(\rho\| \tau)$ for any $\sigma \in \F$.  It is easy to see that the latter inequality is saturated for $\sigma = \tau$.
The latter chain of inequalities proves the inequality~\eqref{functional}. The proof for more than two qubits follows similarly. 
\end{proof}

In the next section, we prove that for states that commute with their optimizer, we can prove additivity for all the range $(\alpha,z) \in \mathcal{D}$. In particular, the latter range contains the relative entropy of magic. We show in the Appendix that the single-qubit states that commute with its optimizer are the noisy (depolarized) $T$,$H$, and $F$ states. 

\section{Additivity of all the $\alpha$-$z$ R\'enyi relative entropies of magic for a specific class of single-qubit states}
\label{relative entropy additivity}
In this section, we show that the relative entropy of magic is additive for single-qubit states when all of them commute with their optimizer. Note that we actually show something slightly stronger. Indeed, we do not require one of the states to commute with its optimizer. In Section~\ref{T, H and F states}, we show that the single-qubit states that commute with its optimizer are only the $T,H$, or $F$ states subject to depolarizing noise. Previously, additivity was known only for pure $H,T$, and $F$ states. Indeed, in this case, the relative entropy of magic equals both the stabilizer fidelity and the generalized robustness of magic, which are both additive for single-qubit pure states~\cite{bravyi2019simulation,seddon2021quantifying}.  

We extend this result to include all the $\alpha$-$z$ R\'enyi relative entropy of magic for all $(\alpha,z) \in \mathcal{D}$. At the end of the section, we show that there exist two single-qubit states for which the relative entropy of magic is not additive. This shows that, unlike the fidelity or the robustness measure, the relative entropy one is not additive for tensor products of any single-qubit states. 
\begin{theorem}
\label{theorem relative entropy of magic qubits}
Let $(\alpha,z) \in \mathcal{D}$, $\{\rho_i\}_{i=L}^L$ be a set of single-qubit states and  $\tau_i \in \argmin_{\sigma \in \F} D_{\alpha,z}(\rho_i \| \sigma)$. If $[\rho_i,\tau_i]=0$ for any $i \in \{1,...,L-1\}$, then we have that  
\begin{equation}
\mathfrak{D}_{\alpha,z}\left( \bigotimes_{i=1}^L \rho_i \right) = \sum_{i=1}^L \mathfrak{D}_{\alpha,z}(\rho_i) \,.
    \end{equation}
\end{theorem}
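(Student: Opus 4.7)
The plan is to follow the template of Theorem~\ref{Additivity in the line}: prove that $\tau := \bigotimes_{i=1}^L \tau_i$ is an optimizer for $\rho := \bigotimes_{i=1}^L \rho_i$, from which the claimed additivity follows immediately from the additivity of the underlying $D_{\alpha,z}$ on tensor products with a common optimizer. By the KKT-type characterization of optimality in Theorem~\ref{main Theorem}, this reduces to verifying
\begin{equation}
\Trm\bigl(\sigma\,\Xi_{\alpha,z}(\rho,\tau)\bigr) \leq Q_{\alpha,z}(\rho\|\tau) = \prod_{i=1}^{L} Q_{\alpha,z}(\rho_i\|\tau_i), \quad \forall\, \sigma \in \F,
\end{equation}
together with the support conditions, which are inherited from the marginal optimizers.

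The crux is to replace the factorization $\Xi_{\alpha,z}(\rho_1 \otimes \rho_2, \tau_1 \otimes \tau_2) = \Xi_{\alpha,z}(\rho_1,\tau_1) \otimes \Xi_{\alpha,z}(\rho_2,\tau_2)$---which in Theorem~\ref{Additivity in the line} held only on the boundary line $|(1-\alpha)/z|=1$---by an analogous factorization that holds throughout $\mathcal{D}$ under the weaker assumption $[\rho_1,\tau_1]=0$. The heuristic is that the operator $\Xi_{\alpha,z}(\rho,\tau)$, which is essentially the gradient of $\sigma \mapsto Q_{\alpha,z}(\rho\|\sigma)$ at $\tau$, contains sandwich expressions of the schematic form $\tau^{a}\bigl(\rho^{b}\tau^{c}\rho^{b}\bigr)^{z-1}\tau^{a}$ that generally entangle the spectral data across the two tensor factors; commutativity $[\rho_1,\tau_1]=0$, however, collapses the relevant sandwich on the first factor into a commuting product of powers of $\rho_1$ and $\tau_1$, which then separates cleanly from the (still non-commutative) factor acting on the second qubit. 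A direct computation in a common eigenbasis of $\rho_1$ and $\tau_1$ should make this precise, and iterating over $i=1,\dots,L-1$ yields $\Xi_{\alpha,z}(\rho,\tau) = \bigotimes_{i=1}^L \Xi_{\alpha,z}(\rho_i,\tau_i)$, with the single non-commuting factor $\rho_L$ carried along harmlessly at each step.

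Given this factorization, the proof concludes exactly as in Theorem~\ref{Additivity in the line}: one inserts the tensor decomposition of $\Xi_{\alpha,z}(\rho,\tau)$ into the optimality condition, exchanges the maximum over $\sigma \in \F$ with the tensor product using the multiplicativity of $\mathfrak{F}_0$ established in Proposition~\ref{multiplicativity F_0}, and uses that each marginal inequality $\Trm(\sigma\,\Xi_{\alpha,z}(\rho_i,\tau_i)) \leq Q_{\alpha,z}(\rho_i\|\tau_i)$ is saturated at $\sigma=\tau_i$ by the assumed marginal optimality. The main obstacle is the factorization lemma itself: although the commutativity on one tensor factor makes the statement plausible, verifying it off the special line $|(1-\alpha)/z|=1$ requires carefully unpacking the Daleckii--Krein form of the derivative of $Q_{\alpha,z}$ and confirming that no non-commutative cross terms from $(\rho_2,\tau_2)$ leak into the first tensor factor. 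Once this technical step is secured, everything else is a direct transcription of the chain of inequalities in the proof of Theorem~\ref{Additivity in the line}.
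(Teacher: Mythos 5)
Your proposal follows essentially the same route as the paper: reduce to showing $\bigotimes_i\tau_i$ is an optimizer via the condition of Theorem~\ref{main Theorem}, establish the factorization $\Xi_{\alpha,z}(\rho_1\otimes\rho_2,\tau_1\otimes\tau_2)=\Xi_{\alpha,z}(\rho_1,\tau_1)\otimes\Xi_{\alpha,z}(\rho_2,\tau_2)$ under $[\rho_1,\tau_1]=0$, and then repeat the chain of inequalities from Theorem~\ref{Additivity in the line} using the multiplicativity of $\mathfrak{F}_0$. The factorization step you flag as the main obstacle is carried out in the paper exactly as you sketch it --- expanding the integral representation of $\Xi_{\alpha,z}$ in the common eigenbasis of $\rho_1$ and $\tau_1$, where a change of variables $k\mapsto t_{1,l}k$ separates the first factor as $\rho_1^\alpha\tau_1^{-\alpha}$ with no cross terms leaking into the second factor --- so your plan is sound and complete in outline.
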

\begin{proof}
The proof is similar to the one of Theorem~\ref{Additivity in the line}. We first start by considering the tensor product of two single-qubit states $\rho_1$ and $\rho_2$. In particular, we show that if $\tau_1$ is an optimizer of $\rho_1$ such that $[\rho_1,\tau_1]=0$, then an optimizer of $\rho_1 \otimes \rho_2$ is $\tau_1 \otimes \tau_2$ where $\tau_2$ is any optimizer of $\rho_2$. 
According to Theorem~\ref{main Theorem}, we want to show that 
\begin{equation}
\label{functional RM}
\Trm(\sigma \Xi_{\alpha,z}(\rho_1\otimes \rho_2, \tau_1 \otimes \tau_2)) \leq  Q_{\alpha}(\rho_1\| \tau_1) Q_{\alpha,z}(\rho_2\| \tau_2)\,, \quad \forall \sigma \in \F\,,
\end{equation}
where we denoted with $\tau_1$ and $\tau_2$ some optimizers of $\rho_1$ and $\rho_2$, respectively.  

We now show that if $[\rho_1,\tau_1]=0$, then we have $ \Xi_{\alpha,z}(\rho_1\otimes \rho_2, \tau_1 \otimes \tau_2) = \Xi_{\alpha,z}(\rho_1, \tau_1) \otimes \Xi_{\alpha,z}(\rho_2, \tau_2)$. This fact has been pointed out in the proof of Theorem 7 in~\cite{rubboli2024new}. We reprove it here for completeness. 
We use the spectral decomposition $\tau_1 = \sum_l t_{1,l}|\psi_l\rangle \! \langle \psi_l|$ and $\tau_2 = \sum_r t_{2,r}|\xi_r\rangle \! \langle \xi_r|$. Here, $|\psi_l\rangle$ is the common eigenbasis of $\rho_1$ and $\tau_1$.   We have
\begin{align}
 & \Xi_{\alpha,z}(\rho_1 \otimes \rho_2, \tau_1 \otimes \tau_2) \\
&\qquad = \, K_{\alpha,z} \sum \limits_{l, r_1,r_2}  \int_0^\infty \left(t_{1,l} t_{2,r_1} +k\right)^{-1} \left(t_{1,l} t_{2,r_2} +k\right)^{-1} k^\frac{1-\alpha}{z}\d k \;   \;| \psi_{l} \rangle \! \langle \psi_{l}| \chi_{\alpha,z}(\rho_1,\tau_1) |\psi_{l} \rangle \! \langle \psi_{l}| \;  \notag\\
& \qquad \quad  \otimes | \xi_{r_1} \rangle \! \langle \xi_{r_1}| \chi_{\alpha,z}(\rho_2,\tau_2) |\xi_{r_2} \rangle \! \langle \xi_{r_2}|  \\
& \qquad = \, K_{\alpha,z} \sum \limits_{l} t_{1,l}^{\frac{1-\alpha}{z}-1} | \psi_{l} \rangle \! \langle \psi_{l}| \chi_{\alpha,z}(\rho_1,\tau_1) |\psi_{l} \rangle \! \langle \psi_{l}| \;  \notag \\
& \qquad \quad \otimes  \sum_{r_1,r_2} \int_0^\infty \left( t_{2,r_1} +k\right)^{-1} \left(t_{2,r_2} +k\right)^{-1} k^\frac{1-\alpha}{z}\d k |\xi_{r_1} \rangle \! \langle \xi_{r_1}| \chi_{\alpha,z}(\rho_2,\tau_2) |\xi_{r_2} \rangle \! \langle \xi_{r_2}| \\
\label{penultima}
&\qquad = \; \rho_1^\alpha\tau_1^{-\alpha} \otimes \Xi_{\alpha,z}( \rho_2, \tau_2)\\
& \qquad  =\;\Xi_{\alpha,z}( \rho_1, \tau_1) \otimes \Xi_{\alpha,z}( \rho_2, \tau_2) \,,
\end{align}
where in the second equality we changed the measure $k \rightarrow t_{1,l} k$. Moreover, in~\eqref{penultima} we used that $ \sum_{k} t_{1,l}^{\frac{1-\alpha}{z}-1} | \psi_{l} \rangle \! \langle \psi_{l}| \chi_{\alpha,z}(\rho_1,\tau_1) |\psi_{l} \rangle \! \langle \psi_{l}| = \rho_1^\alpha\tau_1^{-\alpha}$. In the last equality, we used that if $[\rho_1,\tau_1]=0$ then $\Xi_{\alpha,z}( \rho_1, \tau_1)=\rho^\alpha_1 \tau_1^{-\alpha}$. 

Following the same arguments to the proof of Theorem~\ref{Additivity in the line}, the above result implies that  for any $\sigma \in \F$ we have 
$
 \Tr(\sigma \Xi_{\alpha,z}(\rho_1 \otimes \rho_2, \tau_1 \otimes \tau_2))  \leq  Q_{\alpha}(\rho_1\| \tau_1) Q_{\alpha,z}(\rho_2\| \tau_2)$.
The latter inequality proves the inequality~\eqref{functional RM}. The proof for more than two qubits follows similarly by noting that $\Xi_{\alpha,z}(\bigotimes_{i=1}^L\rho_i, \bigotimes_{i=1}^L\tau_i) = \bigotimes_{i=1}^L \Xi_{\alpha,z}(\rho_i, \tau_i)$ if $[\rho_i,\tau_i]=0$ for any $i \in \{1,\hdots,L-1\}$ . 
\end{proof}

\subsection{Counterexample to the additivity of the relative entropy of magic for two multi-qubit states}
\label{counterexample qubits}
We give new examples of the non-additivity of the relative entropy of magic for single-qubit states. The relative entropy of magic can be computed via SDP approximations following the methods discussed in~\cite{fawzi2019semidefinite}. Let us consider the pure state $\rho$ with Bloch vector $
\vec{r} = ((6+2\sqrt{3})^{-\frac{1}{2}}, (6+2\sqrt{3})^{-\frac{1}{2}},(3-\sqrt{3})^{-\frac{1}{2}})
$.
We obtain numerically $\mathfrak{D}(\rho \otimes \rho) - 2\mathfrak{D}(\rho) \approx -0.002121$. This shows that the relative entropy of magic is not additive for general two single-qubit states. However, we proved in Theorem~\ref{theorem relative entropy of magic qubits} that additivity holds if at least one of the two states belongs to a symmetry axis of the stabilizer octahedron. We also note that at most one of the states can be of general form. Indeed, we numerically find $\mathfrak{D}(\ketbra{F}{F} \otimes \rho \otimes \rho) - (\mathfrak{D}(\ketbra{F}{F})+2\mathfrak{D}(\rho)) \approx -0.002121$. Here, we denoted with $\ketbra{F}{F}$ the pure state defined in equation~\eqref{T, H and F states} and we take $\rho$ as above. 
Further examples that violate the additivity property could be easily found also for multi-qubit states. 

\section{Additivity of all the $\alpha$-$z$ R\'enyi relative entropies of magic for a specific class of two and three-qubit states}
\label{multi-qubit additivity}
In this section, we prove that for  a specific class of mixed two and three-qubit states, all the $\alpha$-$z$ R\'enyi relative entropies of magic are additive. Previously, it was known that all the monotones in the line $|(1-\alpha)/z|=1$ are additive for pure states that describe a system of at most three qubits. Indeed, in the case of pure states, for $z=1-\alpha$ and $z=\alpha-1$, the $\alpha$-$z$ R\'enyi relative entropies of magic equal the log-stabilizer fidelity and the generalized log-robustness of magic, respectively. Then additivity is a consequence of multiplicativity of the stabilizer fidelity~\cite[Theorem 5]{bravyi2019simulation} and the one of the stabilizer extent~\cite[Proposition 1]{bravyi2019simulation} which for pure states equals the generalized robustness~\cite{regula2017convex}. Since any monotone based on a quantum relative entropy is contained between $\mathfrak{D}_{\min}$ and $\mathfrak{D}_{\max}$~\cite{gour2020optimal}, for pure states $\rho$ such that $\mathfrak{D}_{\min}(\rho) = \mathfrak{D}_{\max}(\rho)$, the latter result can be extended to any monotone based on a quantum relative entropy.

In the following, we address a specific class of mixed two and three-qubit states. In Section~\ref{specific states}, we show that the latter class includes the Toffoli, Hoggar, and $CS$ states subject to global depolarizing noise. We do not know whether there exist other two and three-qubit states that belong to this class.

\begin{theorem}
\label{theorem relative entropy of magic}
Let $(\alpha,z) \in \mathcal{D}$ and $\{\rho_i\}_{i=L}^L$ be a set of states that describe a system of at most three qubits such that $\rho_i = \Delta_{p_i}(\ketbra{\psi_i}{\psi_i})$ for some $p_i\geq 0$ and a pure state $\ketbra{\psi_i}{\psi_i}$ for all $i=1,..,L$. Moreover, let  $\tau_i \in \argmin_{\sigma \in \F} D_{\alpha,z}(\rho_i \| \sigma)$. If $\tau_i = \Delta_{t_i}(\rho_i)$ for some $t_i\geq0$ for all $i=1,..,L$, then we have that  
\begin{equation}
\mathfrak{D}_{\alpha,z}\left( \bigotimes_{i=1}^L \rho_i \right) = \sum_{i=1}^L \mathfrak{D}_{\alpha,z}(\rho_i) \,.
    \end{equation}
\end{theorem}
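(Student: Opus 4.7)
The plan is to mirror the strategies of Theorems~\ref{Additivity in the line} and~\ref{theorem relative entropy of magic qubits} and verify that $\bigotimes_{i=1}^{L} \tau_i$ is optimal for $\bigotimes_{i=1}^L \rho_i$ by checking the sufficient KKT-type condition of Theorem~\ref{main Theorem} in Appendix~\ref{necessary and sufficient}, namely
\begin{equation*}
\Trm\!\left(\sigma\,\Xi_{\alpha,z}\!\left(\textstyle\bigotimes_{i} \rho_i,\,\bigotimes_{i} \tau_i\right)\right) \leq \prod_{i} Q_{\alpha,z}(\rho_i \| \tau_i), \qquad \forall\,\sigma \in \F,
\end{equation*}
whose right-hand side equals $Q_{\alpha,z}(\bigotimes_i \rho_i \| \bigotimes_i \tau_i)$ by multiplicativity of $Q_{\alpha,z}$.

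The structural hypothesis supplies the needed algebraic simplification. Since $\tau_i = \Delta_{t_i}(\rho_i) = \Delta_{t_i p_i}(\ketbra{\psi_i}{\psi_i})$, both $\rho_i$ and $\tau_i$ are spectral functions of $\ketbra{\psi_i}{\psi_i}$ and therefore commute. The tensorization identity derived in the proof of Theorem~\ref{theorem relative entropy of magic qubits} then yields
\begin{equation*}
\Xi_{\alpha,z}\!\left(\textstyle\bigotimes_i \rho_i,\,\bigotimes_i \tau_i\right) = \bigotimes_i \Xi_{\alpha,z}(\rho_i,\tau_i), \qquad \Xi_{\alpha,z}(\rho_i,\tau_i) = \rho_i^\alpha\,\tau_i^{-\alpha}.
\end{equation*}
A short spectral calculation, using that the eigenvalue of $\rho_i$ on $\ket{\psi_i}$ strictly exceeds the one on the $(d_i{-}1)$-dimensional orthogonal complement (and analogously for $\tau_i$), then shows
\begin{equation*}
\Xi_{\alpha,z}(\rho_i,\tau_i) = A_i\,\ketbra{\psi_i}{\psi_i} + B_i\,\mathds{1}_{d_i},
\end{equation*}
with $A_i, B_i > 0$ whenever $\alpha > 0$, which covers all of $\mathcal{D}$.

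The proof therefore reduces to a multi-qubit analog of Proposition~\ref{multiplicativity F_0}: for operators $T_i = A_i\ketbra{\psi_i}{\psi_i} + B_i\,\mathds{1}_{d_i}$ with $A_i, B_i \geq 0$ and each $\ket{\psi_i}$ supported on at most three qubits, I want to establish $\max_{\sigma \in \F}\Trm(\sigma\bigotimes_i T_i) = \prod_i \max_{\sigma_i \in \F}\Trm(\sigma_i T_i)$. Expanding the tensor product over subsets $S \subseteq \{1,\dots,L\}$ and bounding each cross term $\Trm(\sigma\,\bigotimes_{i \in S}\ketbra{\psi_i}{\psi_i} \otimes \bigotimes_{i \notin S}\mathds{1}_{d_i})$ by combining (i) the free-operation property of the partial trace, which makes the reduced state on $S$ a stabilizer state, with (ii) the multiplicativity of the stabilizer fidelity for pure states on at most three qubits each~\cite[Theorem~5]{bravyi2019simulation}, the sum collapses to $\prod_i\bigl(A_i\mathfrak{F}(\ket{\psi_i})+B_i\bigr) = \prod_i \max_{\sigma_i\in\F}\Trm(\sigma_i T_i)$, while super-multiplicativity supplies the matching lower bound. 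Substituting into the template~\eqref{tensor-product gradient}--\eqref{additivity 1} and using that each $\tau_i$ already saturates the marginal KKT condition of Theorem~\ref{main Theorem} closes the argument. The main obstacle is precisely the Bravyi--Gosset input: the stabilizer-fidelity multiplicativity is available only for pure states on three or fewer qubits, which is what restricts the theorem to three-qubit factors; relaxing the cap would require a genuinely new pure-state multiplicativity statement.
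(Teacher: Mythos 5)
Your proposal is correct and follows essentially the same route as the paper: verify the optimality condition of Theorem~\ref{main Theorem} for $\bigotimes_i\tau_i$, use commutativity to get $\Xi_{\alpha,z}(\rho_i,\tau_i)=\rho_i^\alpha\tau_i^{-\alpha}=(a_i-b_i)\ketbra{\psi_i}{\psi_i}+b_i\mathds{1}$ together with the tensorization of $\Xi_{\alpha,z}$, and then reduce to the subset-expansion argument of Proposition~\ref{multiplicativity F_0} powered by Bravyi--Gosset multiplicativity of the stabilizer fidelity for pure states on at most three qubits. You in fact spell out the multi-qubit analog of Proposition~\ref{multiplicativity F_0} more explicitly than the paper, which simply asserts that "the proof then follows the same steps as the one given for Theorem~\ref{Additivity in the line}", and you correctly identify the three-qubit cap as coming from the pure-state fidelity multiplicativity input.
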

\begin{proof}
By assumption, we have $\rho_i = \Delta_{p_i}(\ketbra{\psi_i}{\psi_i})$ and  $\tau_i = \Delta_{t_i}(\ketbra{\psi_i}{\psi_i})$ for some state $\ketbra{\psi_i}{\psi_i}$ and parameters $p_i$ and $t_i$. Since $[\rho_i,\tau_i]=0$, we check the conditions of Corollary~\ref{commuting}. We have
\begin{align}
\Xi_{\alpha}(\rho_i,\tau_i) = (a_i-b_i)\ketbra{\psi_i}{\psi_i}+b_i \mathds{1} \,,
\end{align}
where 
\begin{equation}
a_i = \left(\frac{1+(d-1)p_i}{1+(d-1)t_i}\right)^\alpha, \quad 
b_i = \left(\frac{1-p_i}{1-t_i}\right)^\alpha \,.
\end{equation}
Note that, as discussed at the end of Section~\ref{Magic monotones} we can restrict ourselves to the case where $\rho_i$ are magic. In this case, we have that $p_i>t_i$ and $a_i>b_i$. Moreover, since the states commute, we have that $\Xi_\alpha(\bigotimes_{i=1}^L\rho_i, \bigotimes_{i=1}^L\tau_i) = \bigotimes_{i=1}^L \Xi_\alpha(\rho_i, \tau_i)$. The proof then follows the same steps as the one given for Theorem~\ref{Additivity in the line}. 
\end{proof}
Note that, in contrast to the single-qubit case (see Theorem~\ref{theorem relative entropy of magic qubits}), we require that \textit{all} states commute with their optimizer. We also remark that the above theorem can be strengthened for the range $|(1-\alpha)/z|=1$. Indeed, in this range, for additivity to hold, we do not require any constraints for the single-qubit states (but the two and three-qubit states must satisfy the same constraints of Theorem~\ref{theorem relative entropy of magic}).

We also mention that in~\cite[Claim 2]{bravyi2019simulation}, the authors proved that the stabilizer fidelity is generally not multiplicative for pure states for sufficiently large dimensions. The latter results have been extended to the generalized robustness in~\cite{heimendahl2021stabilizer}. We numerically find that the bounds given in these references become non-trivial for twelve-qubit states or larger.   

It is still an open problem whether the $\alpha$-$z$ R\'enyi relative entropies of magic in the range $|(1-\alpha)/z|=1$ are additive for all two and three-qubit states. Numerical simulations suggest that additivity holds. We discuss in Section~\ref{Further directions} a possible direction to prove this result. 

\section{Bounds for magic state distillation}
\label{Magic state transformation}
The $\alpha$-$z$ R\'enyi relative entropies are monotone under free probabilistic protocols (see Appendix~\ref{distillation lower bound} for more details). This allows us to derive lower bounds for the number of copies of the input state needed for free probabilistic transformations. Interestingly, we obtain tighter bounds than the existing ones in the literature. Let us consider the free transformation that maps $\rho^{\otimes n}$ into $ \tau$ and $\eta$ with probability $p$ and $1-p$, respectively. Then, Corollary~\ref{strong monotonicity} and subadditivity imply that
\begin{align}
\label{distillation monotonocity}
n\mathfrak{D}_{\alpha,z}(\rho) \geq \frac{\alpha}{\alpha-1}\log{\left(p\mathcal{Q}^\frac{1}{\alpha}_{\alpha,z}(\tau)+(1-p)\right)} \,,
\end{align}
where we also used that $\mathcal{Q}_{\alpha,z}(\eta) \leq 1$ for $\alpha \leq 1$ and the opposite inequality holds for $\alpha \geq 1$.
The latter relation provides a lower bound for the required number of copies $n$.

The results we derived in the previous sections allow us to compute efficiently $\mathcal{Q}_{\alpha,z}(\tau)$ for the case where $\tau$ consists of tensor products of states for which additivity holds. Moreover, the same results allow us to consider also situations where a catalyst is involved in the process. Indeed, catalysts could potentially activate some transformations as shown in~\cite{campbell2011catalysis}, and improve the distillation overhead. We recall that a catalyst is an ancillary system that is returned unchanged at the end of the process. Explicitly, it is straightforward to verify that the bounds apply for processes where a catalyst $\nu$ is returned with probability one and satisfies the additivity  property $\mathfrak{D}_{\alpha,z}(\tau \otimes \nu) = \mathfrak{D}_{\alpha,z}(\tau)+\mathfrak{D}_{\alpha,z}(\nu)$. 

In a distillation protocol, one aims to transform copies of a less resourceful state into copies of a more resourceful state. In the following, let us consider the distillation protocol $\Delta_{3/4}(\ketbra{T}{T})^{\otimes k} \rightarrow \ketbra{T}{T}$ with probability $p$ and fidelity $1-\varepsilon$. From standard symmetry argument (see e.g.,~\cite{horodecki1999reduction} and~~\cite{regula2022probabilistic}), it is sufficient to restrict the analysis to the exact transformation $\Delta_{3/4}(\ketbra{T}{T})^{\otimes n} \rightarrow \Delta_{1-2\varepsilon}(\ketbra{T}{T})$ with probability $p$.  

In this specific case, we numerically found that the limit $\alpha \rightarrow 0, z=1-\alpha$ in~\eqref{distillation monotonocity} is the one that gives the tightest lower bounds. 
We refer to this bound as $\mathfrak{D}_{\min}$ bound.
We also noticed that the monotones $\mathfrak{D}_{\alpha,z}^{\rm{rev}}(\rho) := \min_{\sigma \in \F} D_{\alpha,z}(\sigma \| \rho)$ yield stronger bounds in the deterministic case ($p=1$). In particular, for the case we consider, the best bound is provided by a value of $\alpha$ that approaches one from above in the limit $\varepsilon \rightarrow 0$. However, we numerically observe that these monotones are not in general additive for $\alpha \geq 1$ and we leave a more detailed analysis for further work.

In Fig.~\ref{fig: distillation} we plot the $\mathfrak{D}_{\min}$ bound and the $\mathfrak{D}^{\rm{rev}}_{\alpha,z}$ bound and compare them to the weight one given in~\cite{regula2021fundamental} and the projective robustness one given in~\cite{regula2022probabilistic}. To the best of our knowledge, the weight bound is the tightest known deterministic bound. The projective robustness bound is essentially the best-known probabilistic one and holds for any $p>0$. In the deterministic case, the $\mathfrak{D}^{\rm{rev}}_{\alpha,z}$ bound performs the best and both the $\mathfrak{D}_{\min}$ and the $\mathfrak{D}^{\rm{rev}}_{\alpha,z}$ bounds perform orders of magnitude better than both the weight and the projective robustness ones. This shows that the projective robustness bound is tight only for small probabilities. The code is available on GitHub.~\footnote{\href{https://github.com/RobertoRubboli/Additivity-magic-monotones}{https://github.com/RobertoRubboli/Additivity-magic-monotones}}

\begin{figure}
\includegraphics[scale=.9]{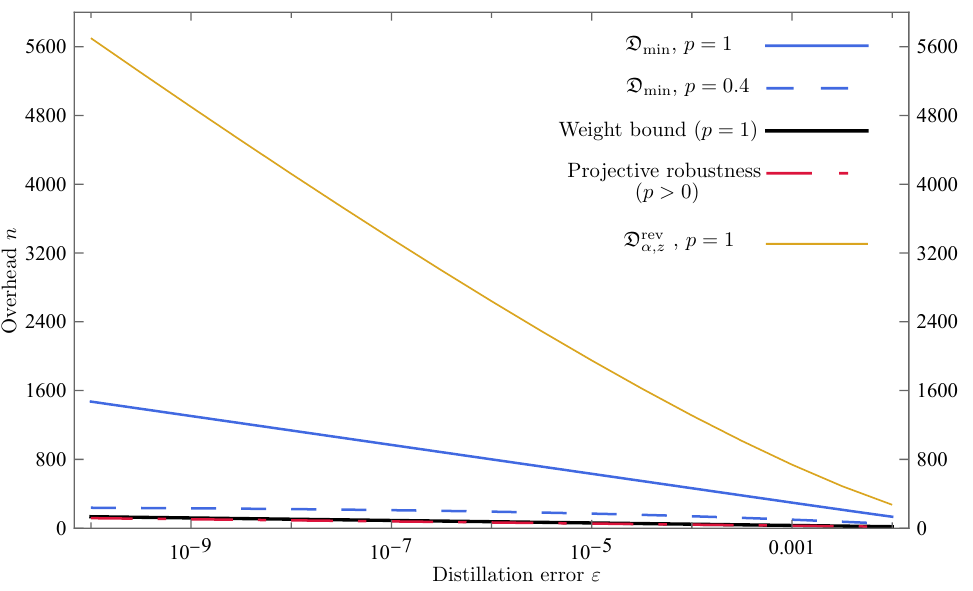}
\caption{The graph compares multiple lower bounds for the overhead of $T$ state distillation. In particular, we consider the transformation $\Delta_{3/4}{(\ketbra{T}{T})} \rightarrow \ketbra{T}{T}$ with output fidelity at least $1-\varepsilon$. We plot the number $n$ of copies needed as a function of the final target error $\varepsilon$. We show in blue the $\mathfrak{D}_{\min}$ bound for the values of the probability $p =1,0.4$. In yellow, we show the $\mathfrak{D}^{\rm{rev}}_{\alpha,z}$ bound for $p=1$. In black, we show the weight bound~\cite{regula2021fundamental} which is the best-known deterministic bound. In red, we show the projective robustness bound~\cite{regula2022probabilistic} which holds for any $p>0$. In the deterministic case, our $\mathfrak{D}_{\min}$ and $\mathfrak{D}^{\rm{rev}}_{\alpha,z}$ bounds are stronger than both the previously known weight and the projective robustness bounds.}
\label{fig: distillation}
\end{figure}

\subsection{Asymptotic transformations}
The asymptotic transformation rate $R(\rho\to\tau)$ from $\rho$ to $\tau$ is defined as 
\begin{equation}
    R(\rho\to\tau):= \sup\left\{r:\exists \{\mathcal{E}_n\}_n\mbox{ s.t. }\lim_{n\to\infty}\|\mathcal{E}_n(\rho^{\otimes n})-\tau^{\otimes rn}\|_1=0,\  \mathcal{E}_n\in\mathcal{O}_\mathcal{F}\right\}.
\end{equation}
A general upper bound for the asymptotic transformation rate is given by~\cite{horodecki2013quantumness}
$
R(\rho\to\tau)\leq \mathfrak{D}^\infty(\rho)/\mathfrak{D}^\infty(\tau)
$, where $\mathfrak{D}^\infty(\rho):=\lim_{n\to\infty}\mathfrak{D}(\rho^{\otimes n})/n$ is the regularized relative entropy of magic. However, the latter quantity is generally not computable due to the regularization. Observing that $ -\log\mathfrak{F}(\tau)\leq \mathfrak{D}(\tau)$ and $\mathfrak{D}^\infty(\tau)\leq \mathfrak{D}(\tau)$ for an arbitrary state $\tau$, the additivity of $-\log\mathfrak{F}$ for an arbitrary single-qubit state shown in Theorem~\ref{Additivity in the line} now gives 
\begin{equation}
    R(\rho\to\tau) \leq \frac{\mathfrak{D}(\rho)}{-\log\mathfrak{F}(\tau)}
\end{equation}
for an arbitrary state $\rho$ and any single-qubit state $\tau$. This extends the result in~\cite[Section VI]{seddon2021quantifying}, which is restricted to a pure target state $\tau$.

Moreover, if $\tau$ is a state that describes a system of at most three qubits and satisfies the conditions of Theorem~\ref{theorem relative entropy of magic}, we get the stronger upper bound 
\begin{equation}
 R(\rho\to\tau) \leq \frac{\mathfrak{D}(\rho)}{\mathfrak{D}(\tau)}\,.
\end{equation}

\section{Closed-form expression of the stabilizer fidelity and generalized robustness of magic for single-qubit states}
\label{closed forms}
In this section, we show that two additive quantities for single-qubit states, namely the generalized robustness and the stabilizer fidelity, have a closed form for all single-qubit states. To the best of our knowledge, this is the first closed-form solution result of a magic monotone based on a quantum relative entropy that includes all single-qubit states. 

Due to the symmetry of the stabilizer polytope in the Bloch sphere, without loss of generality, we restrict ourselves to states in the positive octant of the Bloch sphere.   We split the positive octant into three regions 
\begin{align}
\label{regions}
P_x:=\{\rho : r_x \leq r_y, r_x \leq r_z\} \,, \quad
P_y:=\{\rho : r_y \leq r_x, r_y \leq r_z\} \,, \quad
P_z:= \{\rho : r_z \leq r_x, r_z \leq r_y\} \,.
\end{align}  
We denote the norm and the $l_1$-norm of a vector as $|\vec{r}|=(\sum_ir_i^2)^{\frac{1}{2}}$ and $\|\vec{r}\|_1:=\sum_{i}|r_i|$, respectively.
\subsection{Generalized robustness of magic}
We start by deriving a closed-form expression of the generalized robustness of magic for single-qubit states. 
\begin{proposition}
\label{Robustnes}
Let $\rho \notin \F $ be a single-qubit state whose Bloch vector $\vec{r}$ lies in the positive octant. If $\rho \in P_x $ then
\begin{equation}
\Lambda^+(\rho) = \begin{cases}
\frac{1}{\sqrt{3}+1}(\sqrt{3}+\|\vec{r}\|_1) & \text{if}\quad g(\vec{r}) \geq 0  \\
2+r_x-\|\vec{r}\|_1+\sqrt{2\left(r_x^2+(1+r_x-\|\vec{r}\|_1)^2\right )} & \text{if}\quad g(\vec{r}) \leq 0
\end{cases} \,,
\end{equation}
where $g(\vec{r}) : = (3+\sqrt{3})r_x+1-\|\vec{r}\|_1$. For the regions $P_y$ and $P_z$ the results can be obtained by exchanging $r_x \leftrightarrow r_y$ and $r_x \leftrightarrow r_z$, respectively. 
\end{proposition}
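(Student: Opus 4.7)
The plan is to rewrite $\Lambda^+(\rho)$ as an explicit geometric optimization over the stabilizer octahedron and solve it by Lagrange/KKT analysis, with the two cases of the proposition corresponding to whether or not a boundary constraint is active. Starting from $\Lambda^+(\rho)=\min\{\lambda\geq 1:\rho\leq\lambda\sigma\text{ for some }\sigma\in\F\}$, for a qubit with Bloch vector $\vec r$ and stabilizer state with Bloch vector $\vec t$ (so $\|\vec t\|_1\leq 1$), positivity of $\lambda\sigma-\rho$ is equivalent to $(\lambda-1)^2\geq |\lambda\vec t-\vec r|^2$, i.e.\ $|\vec t-\vec r/\lambda|\leq(\lambda-1)/\lambda$. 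Geometrically, $\Lambda^+(\rho)$ is therefore the smallest $\lambda$ for which the Bloch ball $B(\vec r/\lambda,(\lambda-1)/\lambda)$ intersects the octahedron $\F$, and the optimizer $\vec t$ is the contact point.

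The first step is to argue that, assuming $\vec r$ lies in the positive octant (justified by Clifford invariance of $\Lambda^+$), the contact point lies on the triangular face $\Delta:=\{\vec t\geq 0:t_1+t_2+t_3=1\}$ of $\F$. This is because for $\vec r$ in the positive octant the nearest-point projection of $\vec r/\lambda$ onto $\F$ lies on $\Delta$, so the growing ball first touches $\F$ on that face. Equivalently, any optimizer not on $\Delta$ can be perturbed toward $\Delta$ while preserving or decreasing $|\vec t-\vec r/\lambda|$.

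The second step is a Lagrangian analysis on $\Delta$. Differentiating $F(\lambda,\vec t):=(\lambda-1)^2-|\lambda\vec t-\vec r|^2$ and applying the implicit function theorem, stationarity of $\lambda^\ast(\vec t)$ on $\Delta$ forces $\lambda t_i-r_i$ to be a common constant $c$ for every $i$ with $t_i>0$. In the interior regime $t_i>0$ for all $i$, which gives $c=(\lambda-\|\vec r\|_1)/3$ and $(\lambda-1)^2=3c^2$. The magic assumption $\|\vec r\|_1>1$ forces $c<0$, and solving the resulting linear equation yields $\lambda=(\sqrt{3}+\|\vec r\|_1)/(\sqrt{3}+1)$. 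The feasibility $t_x=(r_x+c)/\lambda\geq 0$ on the smallest component reduces, after substitution, exactly to $g(\vec r)\geq 0$, recovering Case~1.

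When $g(\vec r)<0$ the constraint $t_x\geq 0$ is active, so KKT forces $t_x=0$ and the optimum moves to the edge $\{(0,\alpha,1-\alpha):\alpha\in[0,1]\}$. Stationarity there gives $\lambda t_y-r_y=\lambda t_z-r_z=:c'$ with $c'=(\lambda-\|\vec r\|_1+r_x)/2$, so $(\lambda-1)^2=r_x^2+2{c'}^2$. Expanding yields a quadratic in $\lambda$ whose larger root is exactly the Case~2 expression $2+r_x-\|\vec r\|_1+\sqrt{2(r_x^2+(1+r_x-\|\vec r\|_1)^2)}$. The main obstacle is Step~1: rigorously pinning the contact point to $\Delta$ (ruling out the other seven octahedral faces) and ensuring throughout that the correct root $\lambda\geq 1$ is selected, including checking that the transition between the two regimes occurs precisely at $g(\vec r)=0$. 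The formulas for $P_y$ and $P_z$ then follow from a Clifford permutation swapping coordinate axes.
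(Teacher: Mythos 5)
Your proposal is correct and follows essentially the same route as the paper: both reduce $\Lambda^+(\rho)$ to finding the smallest $\lambda$ for which the ball of center $\vec{r}/\lambda$ and radius $(\lambda-1)/\lambda$ touches the stabilizer octahedron, locate the contact point on the face $s_x+s_y+s_z=1$ (or, when its $x$-coordinate would be negative, on the edge $s_x=0$, $s_y+s_z=1$), and obtain the case split at $g(\vec{r})=0$; your KKT stationarity condition $\lambda t_i-r_i=\text{const}$ is exactly the paper's orthogonal-projection computation. The only cosmetic difference is that you derive the level-set sphere directly from the $2\times2$ positivity condition on $\lambda\sigma-\rho$, whereas the paper routes it through the explicit eigenvalue formula for $Q_{\max}$ proved in its appendix.
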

We show in Fig.~\ref{figrob} the plane $g(\vec{r})=0$.

\begin{figure}
     \centering
     \begin{subfigure}[h]{0.48\textwidth}
         \includegraphics[width=\textwidth]{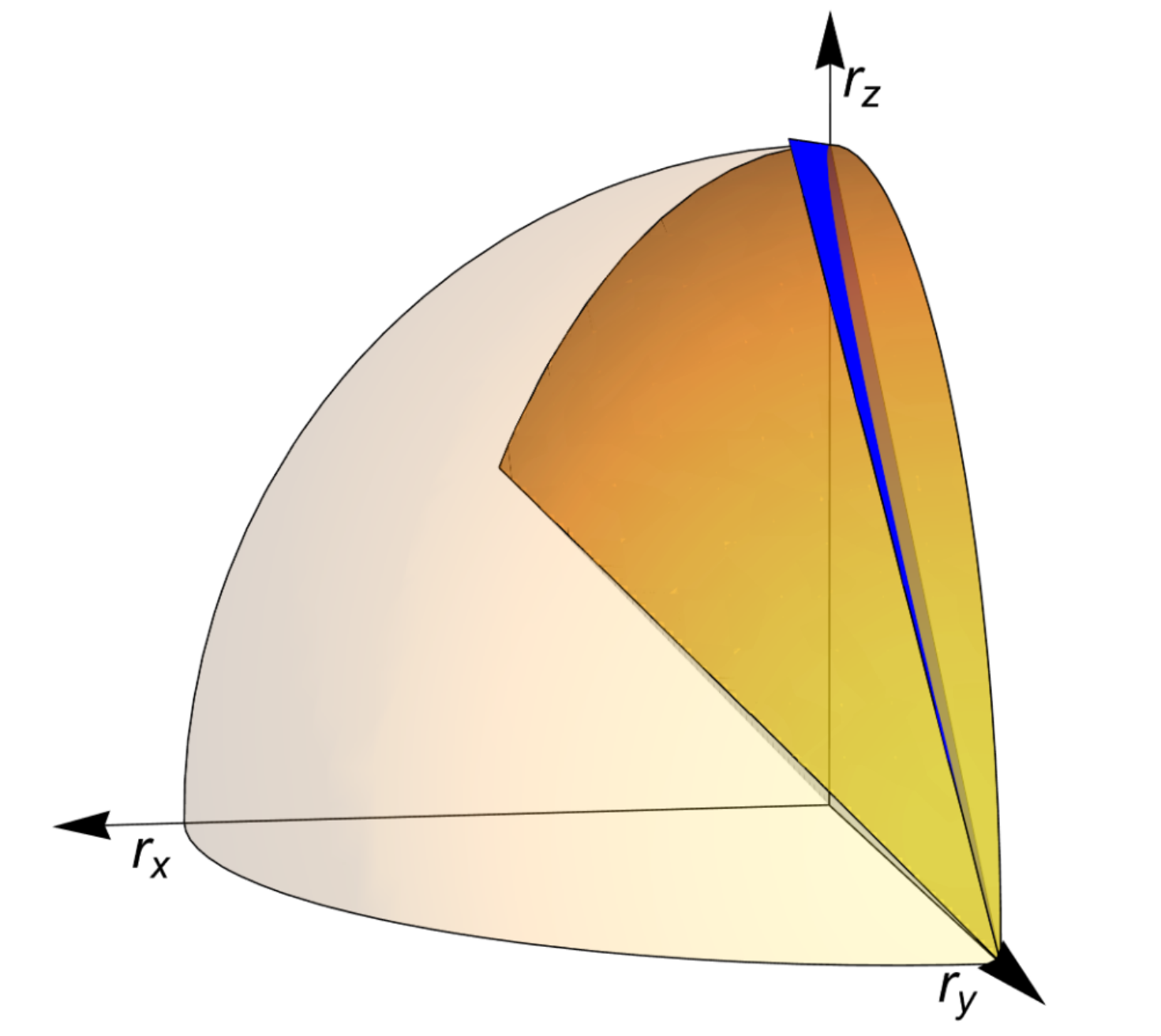}
         \caption{}
     \end{subfigure}
     \begin{subfigure}[h]{0.5\textwidth}
         \centering
         \includegraphics[width=\textwidth]{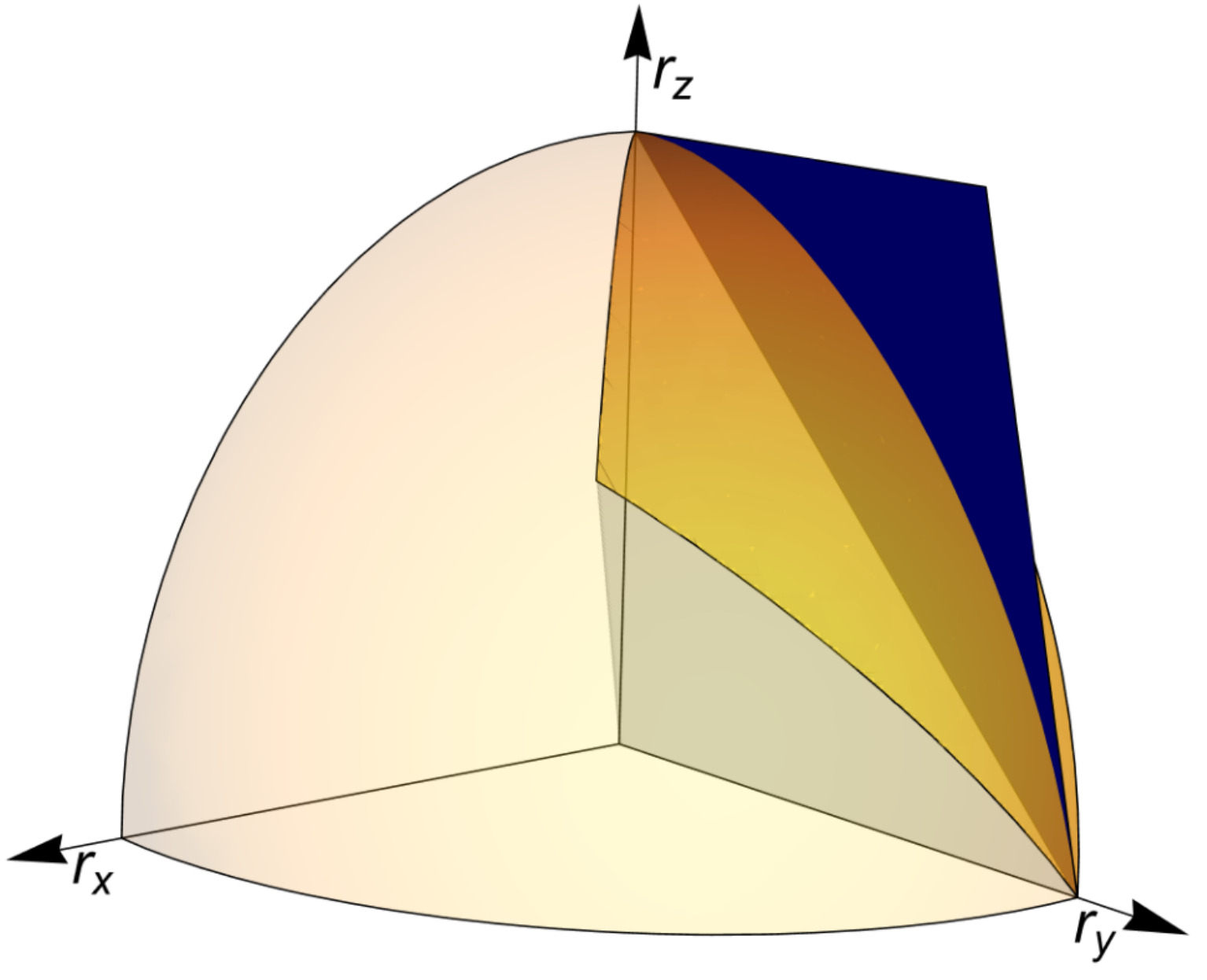}
         \caption{}
     \end{subfigure}
     \\
     \begin{subfigure}[h]{0.52\textwidth}
         \includegraphics[width=\textwidth]{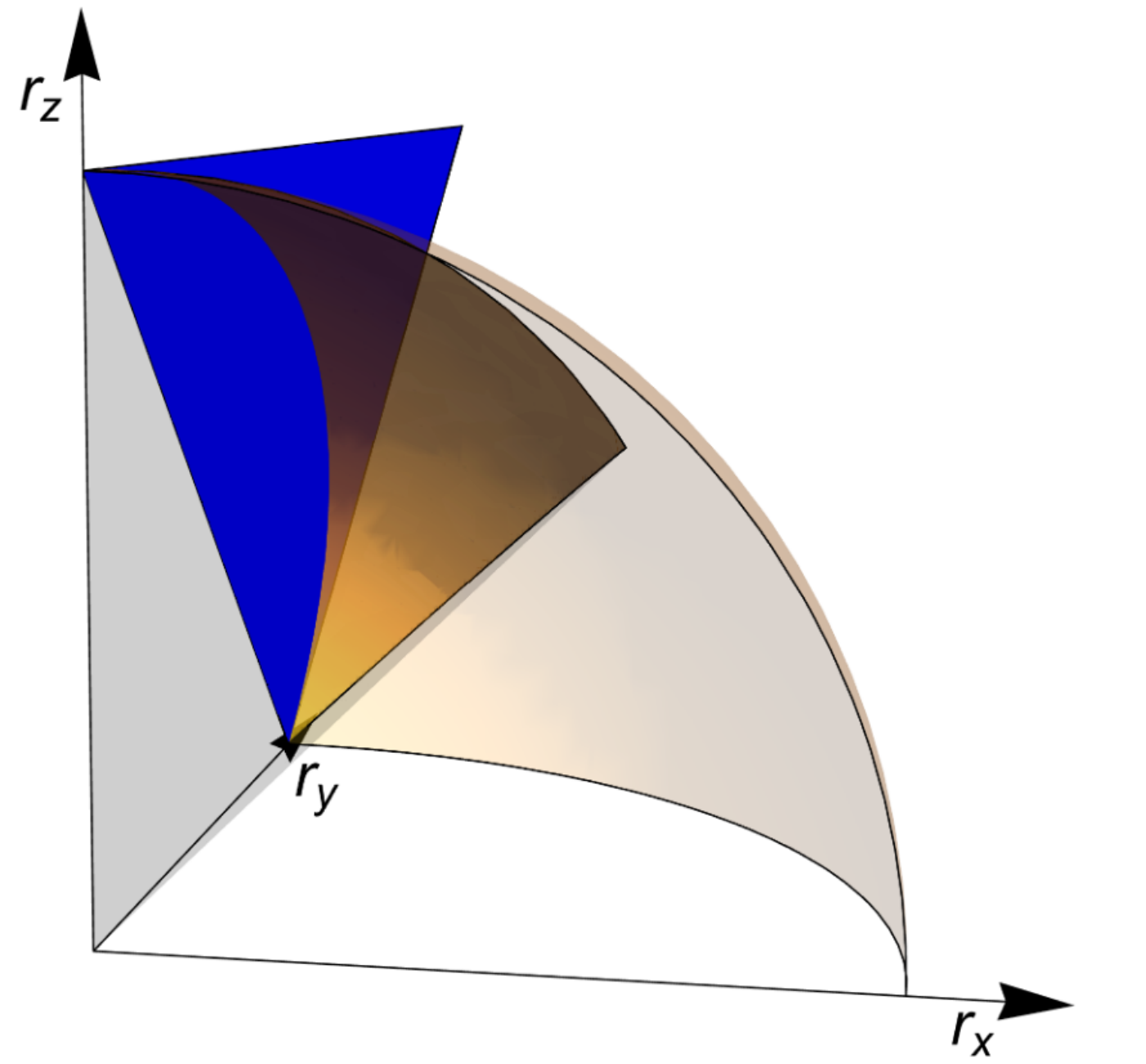}
         \caption{}
     \end{subfigure}
     \begin{subfigure}[h]{0.45\textwidth}
         \centering
         \includegraphics[width=\textwidth]{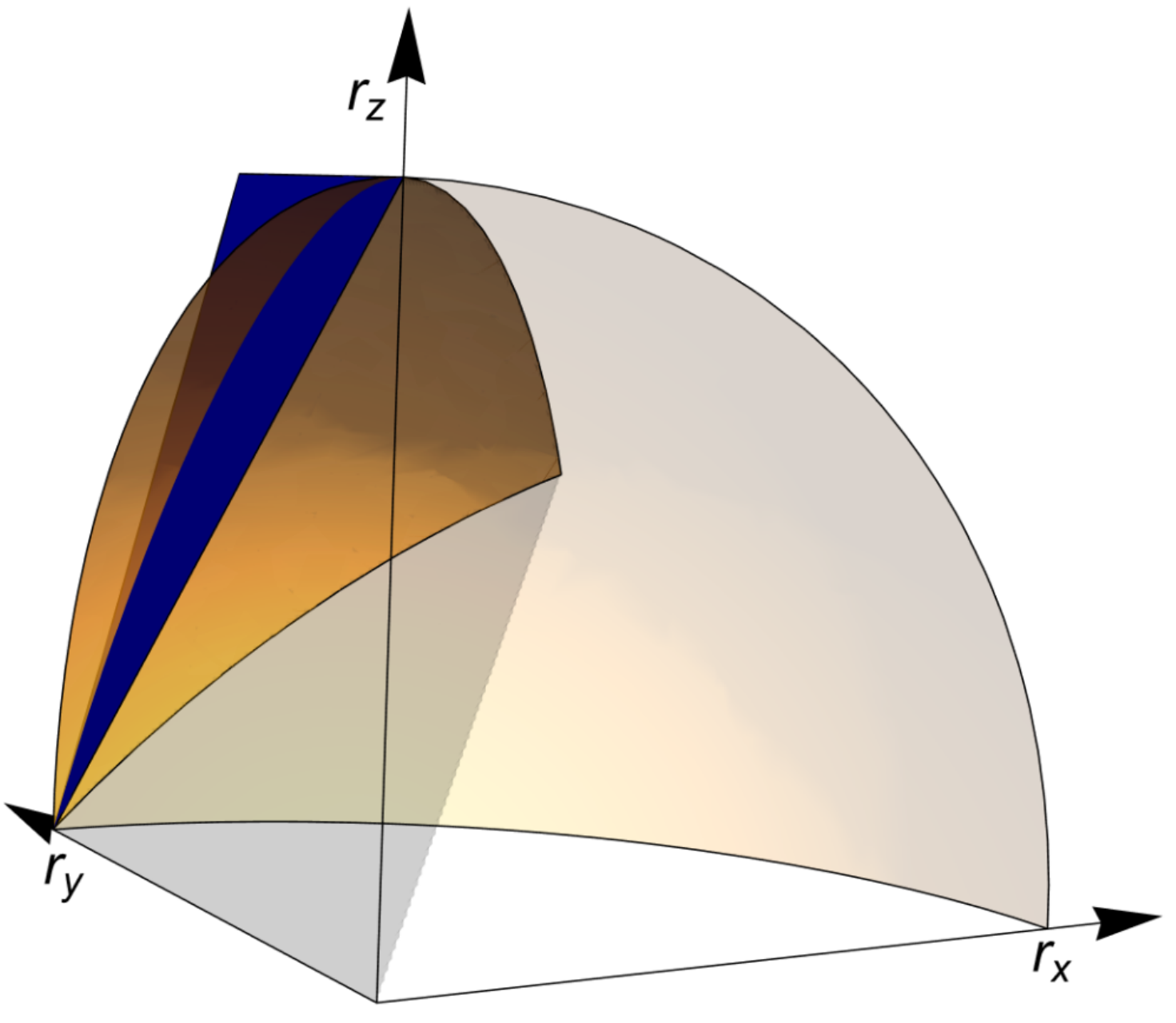}
         \caption{}
     \end{subfigure}
\caption{The figure represents four different angles of the plane $g(\vec{r})=0$ defined in Proposition~\ref{Robustnes} for the positive octant of the Bloch sphere. The shaded black area is the region $P_x$ defined in equation~\eqref{regions}. In the region between the plane $g(\vec{r})=0$ and the $z$-$y$ plane of the Bloch sphere it holds $g(\vec{r})\leq 0$. The opposite condition holds on the other side of the plane.}
\label{figrob}
\end{figure}

\begin{proof}
We consider the surface with constant $D_{\max}(\rho \| \sigma)$. Here, we think of $\rho$ as a fixed input state. We set $k=\log{D_{\max}(\rho\|\sigma)}$ and we denote the Bloch vectors of $\rho$ and $\sigma$ as $\vec{r}$ and $\vec{s}$, respectively. We also assume that $\rho \in P_x$. We define $a(\vec{s}) = 1-|\vec{s}|^2$ and $b(\vec{s}) = 1-\vec{r}\cdot\vec{s}$ and $d=1-|\vec{r}|^2$. Using Lemma~\ref{robustness qubit} we have 
\begin{equation}
ka(\vec{s})-b(\vec{s}) = \sqrt{b(\vec{s})^2-da(\vec{s})} \,.
\end{equation}
We then take the square and rearrange the terms to get
\begin{equation}
k^2a(\vec{s})-2kb(\vec{s})+d=0 \,.
\end{equation}
Using the explicit form of the coefficients $a(\vec{s})$ and $b(\vec{s})$ we obtain 
\begin{equation}
\left(s_x-\frac{r_x}{k}\right)^2+\left(s_y-\frac{r_y}{k}\right)^2+\left(s_z-\frac{r_z}{k}\right)^2 = \left(1-\frac{1}{k}\right)^2 \,.
\end{equation}
The above equation is a sphere with a center that goes to zero and with a radius that goes to $1$ as $k$ grows to infinity. We denote the coordinates of the centers $r_{ki}=r_i/k$. To find the smallest $k$ at which the spheres intersect the stabilizer polytope, we first calculate the projection of the points $(r_{kx},r_{ky},r_{kz})$ onto the stabilizer plane $s_x+s_y+s_z=1$. We then compute the distance between the projection and the point and set it equal to the value of the radius. Finally, we then solve for $k$. To find the projection we set
\begin{equation}
(x_k,y_k,z_k) = (r_{kx},r_{ky},r_{kz}) +\lambda (1,1,1) \,,
\end{equation}
which gives by requiring $x_k+y_k+z_k=1$ the value $\lambda = (1-r_{kx}-r_{ky}-r_{kz})/3$. Hence, the projection is
\begin{align}
&(x_k,y_k,z_k) \\
& \;= \left(\frac{2}{3}r_{kx}+\frac{1}{3}(1-r_{ky}-r_{kz}), \frac{2}{3}r_{ky}+\frac{1}{3}(1-r_{kx}-r_{kz}), \frac{2}{3}r_{kz}+\frac{1}{3}(1-r_{ky}-r_{kx} )\right) \,.
\end{align} 
By setting the distance between the center and the projection equal to the radius, we obtain the equation
\begin{equation}
1-\frac{1}{k^*}=-\frac{1}{\sqrt{3}}(1-r_{k^*x}-r_{k^*y}-r_{k^*z})\,.
\end{equation}
The reason for the minus sign when we calculate the square root on the r.h.s. is that, as it is easy to check with the value of $k^*$ below, we always have $r_{ky}+r_{kx}+r_{kz} \geq 1$ for any $k \leq k^*$ since $r_x+r_y+r_z\geq 1$ for magic states.
The above equation gives for $k^*=2^{\mathfrak{D}_{\max}(\rho)}$ the value
\begin{equation}
k^*= \frac{1}{\sqrt{3}+1}(\sqrt{3}+r_x+r_y+r_z) \,.
\end{equation}
We now distinguish two cases. The first case is the one for which the projection is inside the Bloch sphere, i.e., $x_k^* \geq 0$ (note that inside $P_x$ we have that $x_k$ is always smaller than $y_k$ and $z_k$). In this case the result above holds since the projection is inside the Bloch sphere. Using the value of $k^*$, the condition $x_{k^*}\geq 0$ gives
\begin{equation}
r_x(2+\sqrt{3})+1-r_y-r_z \geq 0 \,.
\end{equation} 

The second case is the one for which using the above value of $k^*$, the projection onto the plane lies outside the Bloch sphere, i.e., $x_k^*\leq 0$. In this case, since the optimizer is on the $y$-$z$ plane, we calculate the projection with the stabilizer straight line $s_z=1-s_y$, $sx=0$. We first construct the plane orthonormal to the line $s_z=1-s_y$ and find the constant parameter $d$ for which it passes through the point $(r_{kx},r_{ky},r_{kz})$. The equation of the plane is $s_y-s_z+d=0$ which gives $d=r_{kz}-r_{ky}$. We set $s_y=t$ and $s_z=1-t$ and substitute in the equation of the plane to find $t=(r_{ky}-r_{kz}+1)/2$ which gives the projection $y_k=(r_{ky}-r_{kz}+1)/2$ and $z_k=(r_{kz}-r_{ky}+1)/2$ (note that they are positive since each Bloch component is smaller than $1$ and $k \geq 1$). Setting again the distance between the point and the projection equal to the radius, solving the second-order equation, and discarding the negative solution, we find
\begin{equation}
k^* = 2-r_y-r_z+\sqrt{2r_x^2+2(1-r_y-r_z)^2} \,.
\end{equation}

\end{proof}

\subsection{Stabilizer fidelity}
We now give a closed-form expression of the stabilizer fidelity for single-qubit states. 
\begin{proposition}
\label{Fidelity}
Let $\rho \notin \F$ be a single-qubit state with Bloch vector $\vec{r}$ in the positive octant. If $\rho \in P_x $ then
\begin{equation}
\mathfrak{F}(\rho) = \begin{cases}
\frac{1}{6}\left(3+\|\vec{r}\|_1+\sqrt{6-2\|\vec{r}\|_1^2}\right) & \text{if}\quad f(\vec{r}) \geq 0  \\
\frac{1}{4}\left(2+\|\vec{r}\|_1-r_x+\sqrt{2(1-r_x^2)-(\|\vec{r}\|_1-r_x)^2}\right) & \text{if}\quad f(\vec{r}) \leq 0
\end{cases} \,,
\end{equation}
where $f(\vec{r}) : = 3-\|\vec{r}\|^2_1-\sqrt{6-2\|\vec{r}\|_1^2}(\|\vec{r}\|_1-3r_x)$. For the regions $P_y$ and $P_z$ the results can be obtained by exchanging $r_x \leftrightarrow r_y$ and $r_x \leftrightarrow r_z$, respectively. 
\end{proposition}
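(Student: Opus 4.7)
The plan is to follow the same geometric strategy used for the generalized robustness in Proposition~\ref{Robustnes}: identify which face or edge of the stabilizer octahedron contains the optimizer, and then reduce to a low-dimensional constrained optimization. The starting point is the Bloch-vector formula for the Uhlmann fidelity of qubits,
\begin{equation*}
F(\rho,\sigma) = \frac{1}{2}\left(1 + \vec{r}\cdot\vec{s} + \sqrt{(1-|\vec{r}|^2)(1-|\vec{s}|^2)}\right).
\end{equation*}
Since $\mathfrak{F}$ is Clifford-invariant and Cliffords act as signed permutations of Bloch coordinates, by symmetry it suffices to treat $\rho \in P_x$; one can also argue that the optimizer $\sigma$ lies in the same positive octant. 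The gradient $\nabla_{\vec{s}} F = \frac{1}{2}(\vec{r} - c\vec{s})$ with $c := \sqrt{(1-|\vec{r}|^2)/(1-|\vec{s}|^2)}$ vanishes only at $\vec{s}=\vec{r}$, so for magic $\rho$ the maximizer must sit on the boundary, i.e.\ on the triangular face $\{s_x+s_y+s_z=1,\ s_i \geq 0\}$.

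First I would try the ansatz that the optimum lies in the open interior of this triangle. Imposing $\nabla_{\vec{s}}F \parallel (1,1,1)$ produces the Lagrange condition $r_i - c\, s_i = \lambda$ for a common constant $\lambda$; combining with $\sum_i s_i = 1$ gives $\lambda = (\|\vec{r}\|_1 - c)/3$ and hence $s_i = (3r_i - \|\vec{r}\|_1 + c)/(3c)$. Substituting this expression into the defining identity $c^2(1-|\vec{s}|^2) = 1 - |\vec{r}|^2$, and using the algebraic simplification $\sum_i (3r_i - \|\vec{r}\|_1)^2 = 9|\vec{r}|^2 - 3\|\vec{r}\|_1^2$, collapses everything to $c^2 = (3-\|\vec{r}\|_1^2)/2$. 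Plugging $c$ back into $F$ yields the first branch of the claim. The ansatz is self-consistent precisely when $s_x \geq 0$, equivalently $c \geq \|\vec{r}\|_1 - 3r_x$; squaring this inequality gives exactly $f(\vec{r}) \geq 0$, while the regime $\|\vec{r}\|_1 \leq 3r_x$ satisfies $f(\vec{r}) \geq 0$ automatically.

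When $f(\vec{r}) \leq 0$ the face-interior solution violates $s_x \geq 0$, so the KKT optimum migrates onto an edge. Because $r_x$ is the smallest Bloch component throughout $P_x$, the relevant edge is $\{s_x = 0,\,s_y + s_z = 1\}$, not one of $s_y=0$ or $s_z=0$. Parametrizing $s_y = t,\ s_z = 1-t$ reduces the problem to maximizing the one-variable function $F(t) = \frac{1}{2}\bigl(1 + r_z + (r_y - r_z)t + \sqrt{2t(1-t)(1-|\vec{r}|^2)}\bigr)$ on $[0,1]$. Setting $F'(t)=0$ and squaring yields $2t(1-t) = (1-|\vec{r}|^2)/[2(1-|\vec{r}|^2) + (r_y-r_z)^2]$. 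Substituting back, the key cancellation uses the identity $(r_y-r_z)^2 + 2(1-|\vec{r}|^2) = 2(1-r_x^2) - (r_y+r_z)^2$ together with $r_y + r_z = \|\vec{r}\|_1 - r_x$, and produces the second branch of the proposition.

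The hard part is not the optimization itself but the bookkeeping needed to certify the case decomposition. Specifically, I need to verify (i) that the face-interior solution leaves the triangle through the edge $s_x=0$ and not through one of the other two edges — this follows from $r_x$ being the smallest Bloch component in $P_x$, and is the analogue of the projection argument used in Proposition~\ref{Robustnes}; (ii) that the two branches agree continuously along the locus $f(\vec{r})=0$, where both describe the same optimal $\sigma$ with $s_x=0$, which provides a natural sanity check; and (iii) that when squaring $F'(t)=0$ I select the correct root, determined by the sign of $r_y-r_z$ (so that $t \leq \tfrac{1}{2}$ iff $r_z \geq r_y$). The extension to $P_y$ and $P_z$ then follows by the stated coordinate exchanges, which are implemented by Clifford unitaries.
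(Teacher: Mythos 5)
Your proposal is correct and follows the same overall geometric strategy as the paper --- locate the optimum on the octahedron face $s_x+s_y+s_z=1$, and fall back to the edge $s_x=0$ precisely when the face solution acquires $s_x<0$ --- but it computes the face optimum by a genuinely different technique. The paper treats the level sets of $F(\rho,\cdot)$ as ellipsoids in $\vec{s}$, intersects them with the plane, and finds the critical value $k^*$ at which the resulting ellipse degenerates to a point via the conic-discriminant condition $af^2+cd^2+gb^2-2bdf-acg=0$ (described there as a ``straightforward but lengthy calculation''). You instead impose Lagrange stationarity $r_i-c\,s_i=\lambda$ on the face and solve for $c$; the identity $\sum_i(3r_i-\|\vec{r}\|_1)^2=9|\vec{r}|^2-3\|\vec{r}\|_1^2$ collapses the constraint to $c^2=(3-\|\vec{r}\|_1^2)/2$, and the sign condition $s_x\geq 0\Leftrightarrow c\geq\|\vec{r}\|_1-3r_x$ reproduces $f(\vec{r})\geq 0$ upon multiplying through by $2c=\sqrt{6-2\|\vec{r}\|_1^2}$. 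This is algebraically lighter than the paper's route and makes the origin of the switching surface $f(\vec{r})=0$ more transparent; the edge computation in the second branch is the same one-variable optimization in both proofs. The residual gaps you flag (restriction to the positive-octant face, selection of the edge $s_x=0$ rather than $s_y=0$ or $s_z=0$, and root selection after squaring) are handled at the same level of rigor as in the paper, which relies on the analogous ``$x_k\leq y_k$ inside $P_x$'' observation.

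One point worth recording: your second branch evaluates to $\frac{1}{4}\bigl(2+\|\vec{r}\|_1-r_x+\sqrt{2(1-r_x^2)-(\|\vec{r}\|_1-r_x)^2}\bigr)$, i.e.\ with $(\|\vec{r}\|_1-r_x)^2=(r_y+r_z)^2$ under the root. This agrees with the paper's own derivation, which obtains $k^*=\frac{1}{2}\bigl(r_y+r_z+\sqrt{2-2r_x^2-(r_y+r_z)^2}\bigr)$; the $(\|\vec{r}\|_1-r_z)^2$ appearing in the printed statement of the proposition is inconsistent with the proof in the paper and appears to be a typographical error, so your formula is the correct one.
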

We show in Fig.~\ref{figfid} the curved surface $f(\vec{r})=0$.

\begin{figure}
     \centering
     \begin{subfigure}[h]{0.5\textwidth}
         \includegraphics[width=\textwidth]{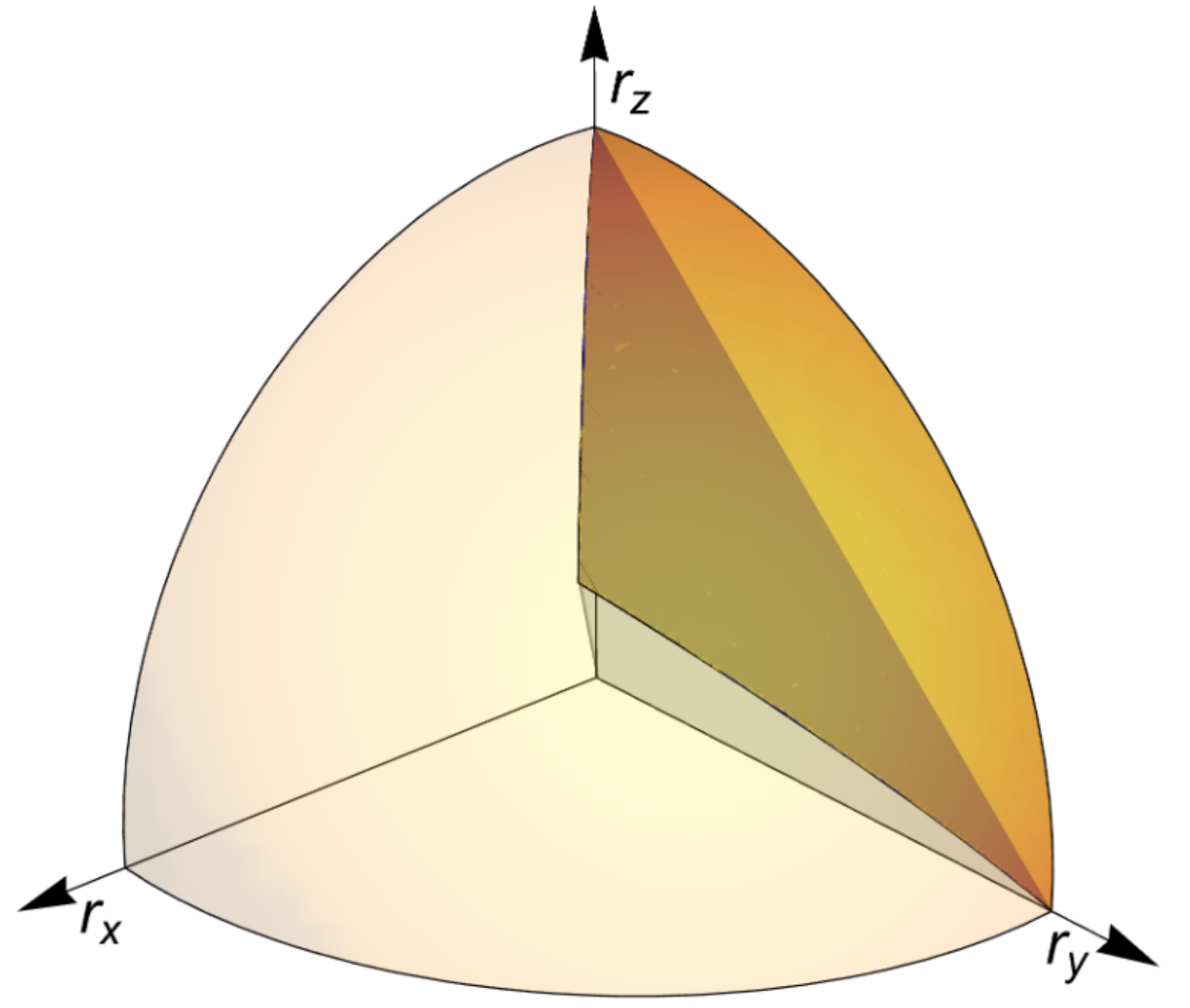}
         \caption{}
     \end{subfigure}
     \begin{subfigure}[h]{ 0.48\textwidth }
         \centering
           \includegraphics[width=\textwidth]{ 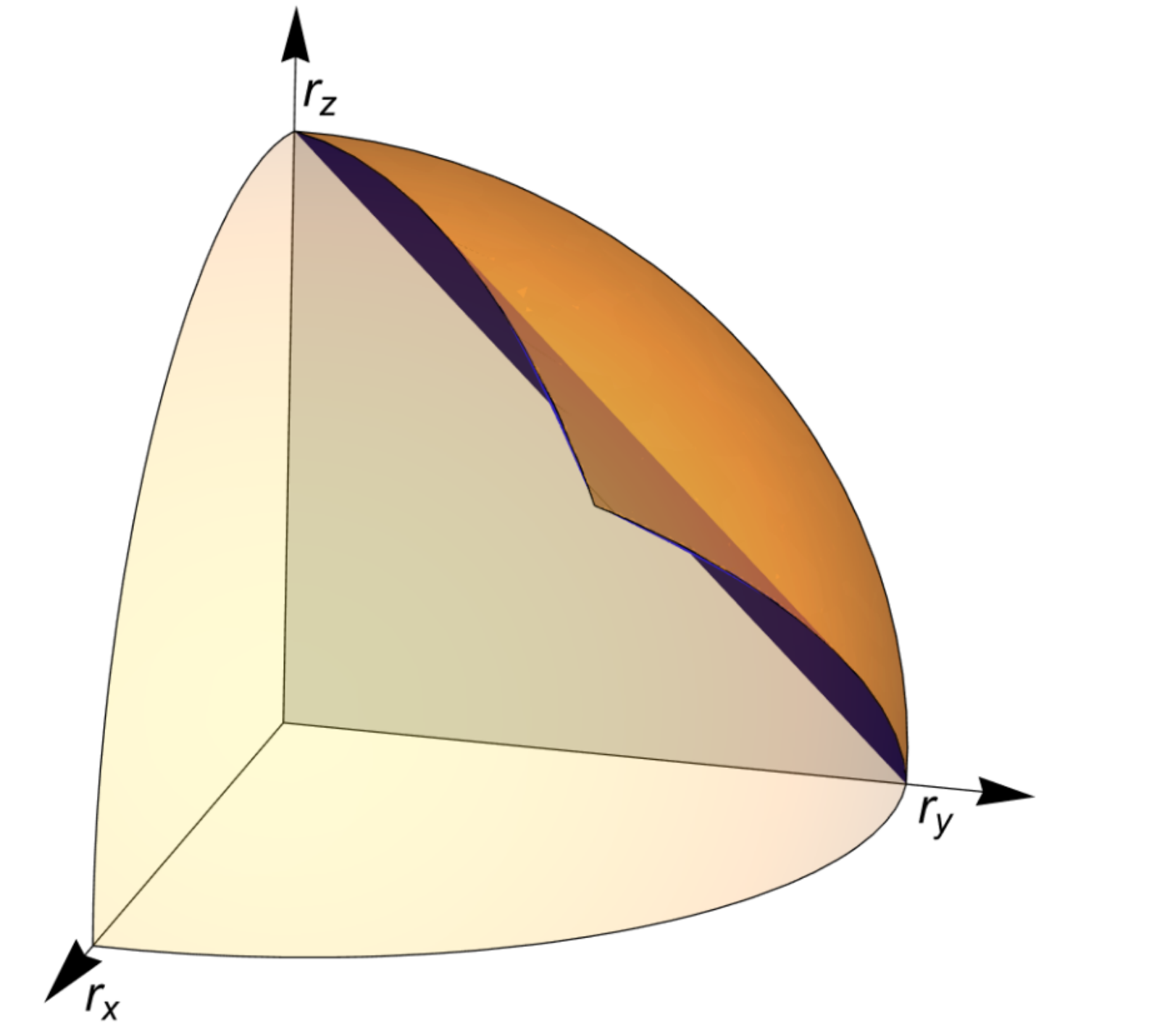}
         \caption{}
     \end{subfigure}
     \\
     \begin{subfigure}[h]{0.49\textwidth}
         \includegraphics[width=\textwidth]{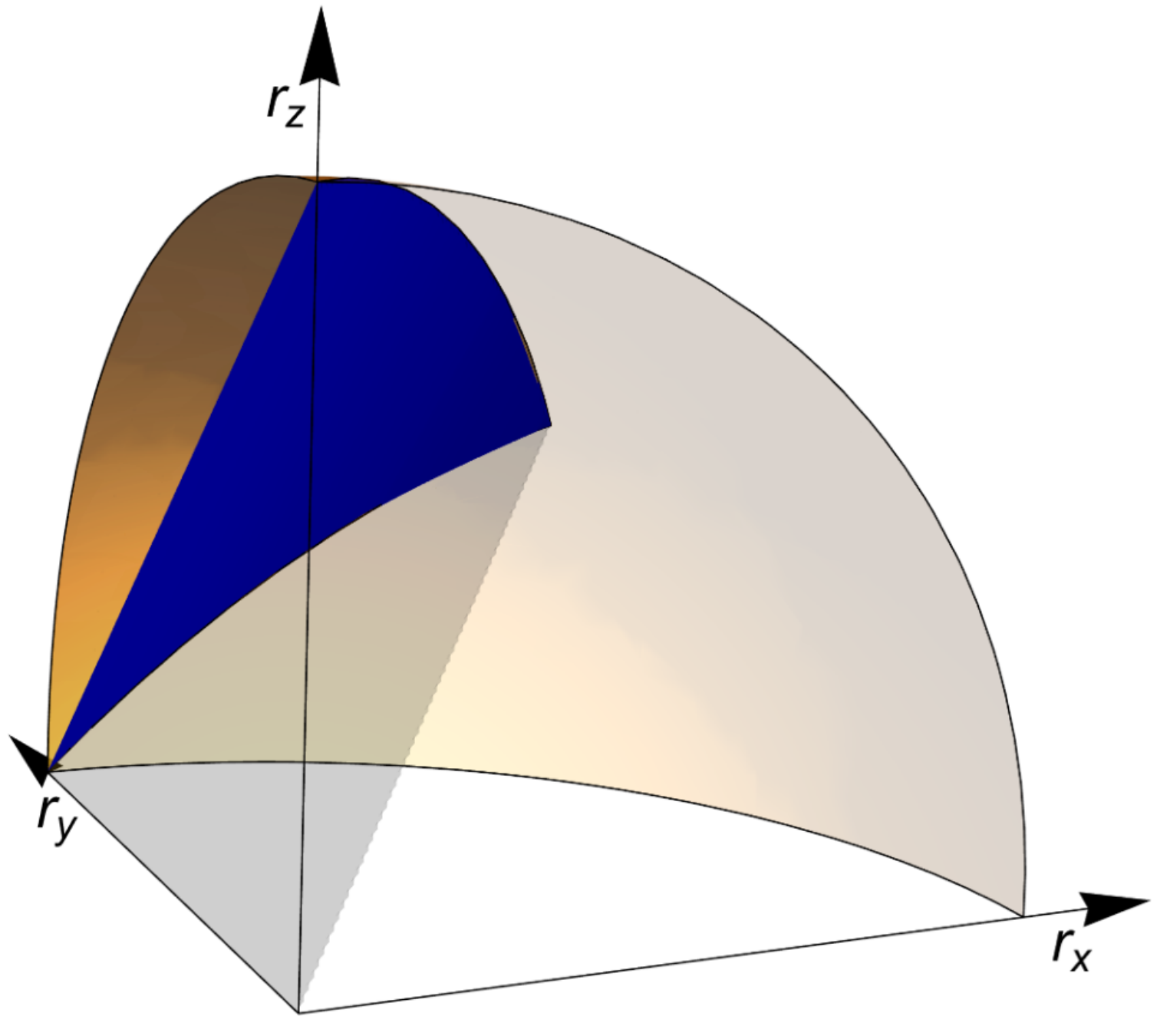}
         \caption{}
     \end{subfigure}
     \begin{subfigure}[h]{0.49\textwidth}
         \centering
         \includegraphics[width=\textwidth]{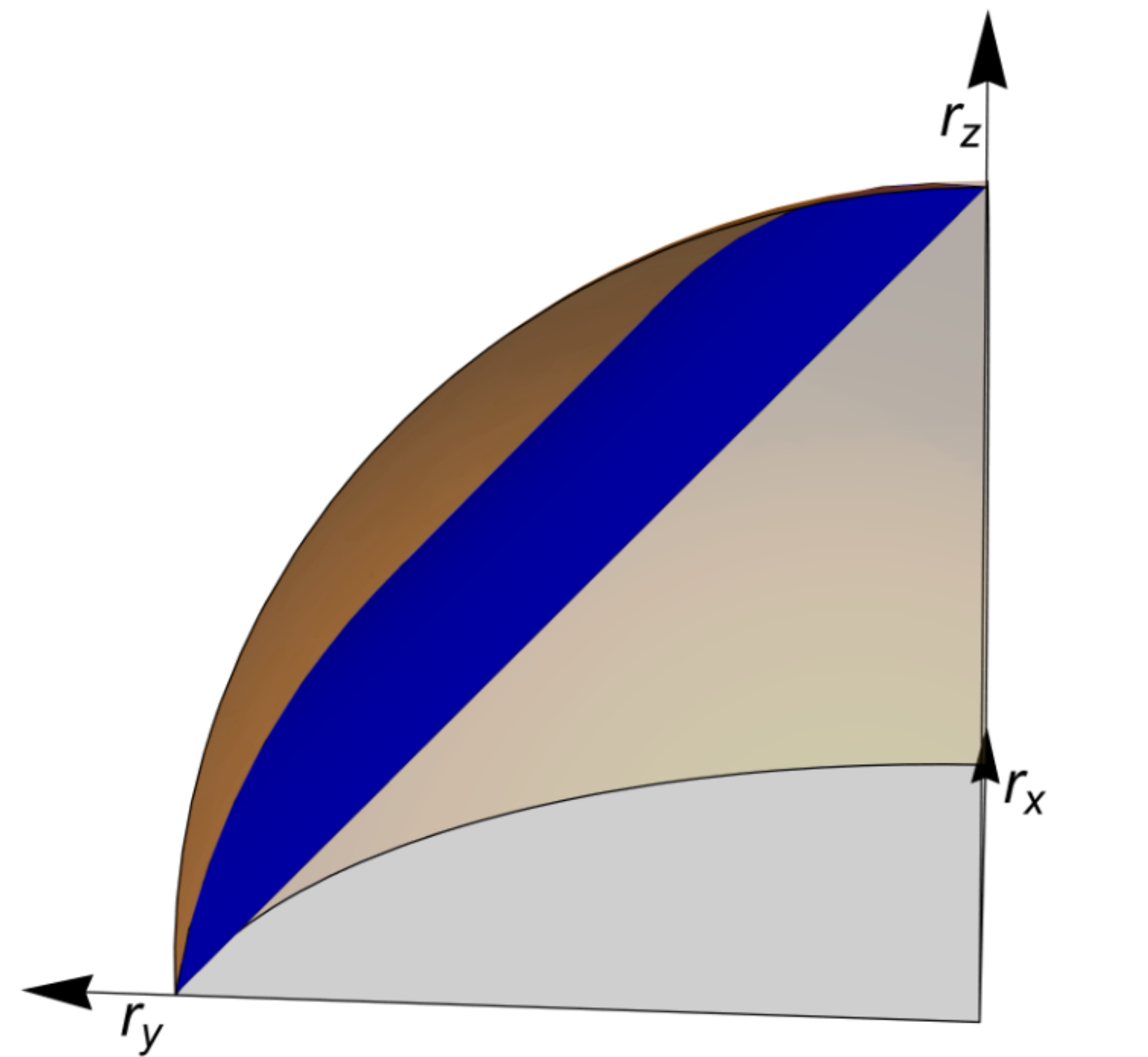}
         \caption{}
     \end{subfigure}
\caption{The figure represents four different angles of the curved surface $f(\vec{r})=0$ defined in Proposition~\ref{Fidelity} for the positive octant of the Bloch sphere. The shaded black area is the region $P_x$ defined in equation~\eqref{regions}. In the region closer to the origin it holds $f(\vec{r})\geq 0$. The opposite condition holds on the other side of the curved surface.}
\label{figfid}
\end{figure}

\begin{proof}
The fidelity between two qubits $\rho$ and $\sigma$ with Bloch vectors $\vec{r}$ and $\vec{s}$ is~\cite{ma2008geometric}
\begin{equation}
F(\rho, \sigma) = \frac{1}{2}\left(1+\vec{r}\cdot \vec{s} + \sqrt{(1-|\vec{r}|^2)(1-|\vec{s}|^2)}\right) \,.
\end{equation}
We consider the surface with constant $F(\rho, \sigma)$. We set $k=2F(\rho,\sigma)-1$. We define $h(\vec{s}) = \vec{r}\cdot\vec{s}$, $m(\vec{s}) = 1-|\vec{s}|^2$ and $p=1-|\vec{r}|^2$. Using the explicit form of the fidelity we obtain 
\begin{equation}
\frac{k-h(\vec{s})}{\sqrt{p}} = \sqrt{m(\vec{s})} \,.
\end{equation}
We then take the square and rearrange the terms and
use the explicit form of the coefficients $h(\vec{s})$ and $m(\vec{s})$ to get 
\begin{align}
&s_x^2(r_x^2+p)+s_y^2(r_y^2+p)+s_z^2(r_z^2+p)\\
&\quad +2s_xs_yr_xr_y+2s_xs_zr_xr_z+2s_ys_zr_yr_z-2ks_xr_x-2ks_yr_y-2kr_zs_z+k^2-p=0 \,.
\end{align}
The above equation is a rotated and shifted ellipsoid. To calculate the value of the stabilizer fidelity we consider the intersection with the stabilizer plane $s_x+s_y+s_z=1$ and calculate the value of $k$ for which the ellipsoid intersects the plane in exactly one point. The intersection gives the equation
\begin{align}
&as_x^2+2bs_xs_y+cs_y^2+2ds_x+2fs_y+g=0 \,,
\end{align}
where 
\begin{align}
&a = (r_x-r_z)^2+2p \,,\\
&b = r_xr_y-r_xr_z-r_yr_z+r_z^2+p \,,\\
&c = (r_y-r_z)^2+2p \,,\\
&d= -r_z^2-p+r_xr_z-kr_x+kr_z \,,\\
&f = -r_z^2-p+r_yr_z-kr_y+kr_z \,,\\
&g = (k-r_z)^2 \,,
\end{align}
which is an equation of an ellipse. From standard geometry analysis, it follows that the center of the ellipse is given by $(x_k,y_k)=(b^2-ac)^{-1}(cd-bf,af-bd)$. Moreover, the semi-axis of the ellipse are zero when $af^2+cd^2+gb^2-2bdf-acg=0$. A straightforward but lengthy calculation shows that this is the case when 
\begin{equation}
k^*=\frac{1}{3}(\|\vec{r}\|_1+\sqrt{6-2\|\vec{r}\|_1^2}) \,.
\end{equation} 
We now distinguish two cases. The first case is the one for which the projection is inside the Bloch sphere, i.e., $x_k^* \geq 0$. A calculation reveals that since $\rho \in P_x$ it holds that $x_k \leq y_k$.  Using the value of $k^*$, the condition $x_{k}^*\geq 0$ gives
\begin{equation}
3-\|\vec{r}\|^2_1-\sqrt{6-2\|\vec{r}\|_1^2}(\|\vec{r}\|_1-3r_x) \geq 0 \,.
\end{equation}  

The second case is the one for which using the above value of $k^*$, the projection onto the plane lies outside the Bloch sphere, i.e., $x_k^*\leq 0$. In this case, since the optimizer is on the $y$-$z$ plane, we calculate the intersection with the stabilizer straight line $s_z=1-s_y$, $s_x=0$. Hence, we need to solve the equation
\begin{equation}
s_y^2(r_y^2+p)+(1-s_y)^2(r_z^2+p)+2s_y(1-s_y)r_yr_z-2s_ykr_y-2(1-s_y)kr_z+k^2-p=0 \,.
\end{equation}
It is easy to check that the above equation has only one allowed solution for 
\begin{equation}
k^* = \frac{1}{2}\left(r_y+r_z+\sqrt{2-2r_x^2-(r_y+r_z)^2}\right) \,.
\end{equation}

\end{proof}

\section{Examples of two and three-qubit states belonging to the additivity classes and their computation}
\label{specific states}
In this section, we show that for some specific states, the optimizer of any monotone based on an $\alpha$-$z$ R\'enyi divergence commutes with the state itself. 

In some cases which involve pure states subject to depolarized noise $\rho= \Delta_p(\ket{\psi}\!\!\bra{\psi})$, the optimizer can be inferred from the knowledge of the solution of the related linear optimization $\mfF_0(\ket{\psi}\!\!\bra{\psi})$. This is typically the case when $\ketbra{\psi}{\psi}$ has some symmetry.
\begin{lemma}
\label{ansatz commuting}
Let $(\alpha,z) \in \mathcal{D}$ and let $\rho$ be a quantum state of dimension $d$ such that $\rho= \Delta_p(\ket{\psi}\!\!\bra{\psi}) \notin \F$ for some $p \geq 0$ and a pure state $\ket{\psi}\!\!\bra{\psi}$. If $\tau = \Delta_t(\ket{\psi}\!\!\bra{\psi}) \in \F$ for $t=(d\mfF_0(\ket{\psi}\!\!\bra{\psi})-1)/(d-1)$, then we have that $\tau \in \argmin_{\sigma \in \F} D_{\alpha,z}(\rho \| \sigma)$. 
\end{lemma}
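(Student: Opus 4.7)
The plan is to verify the KKT-type necessary and sufficient condition of Theorem~\ref{main Theorem}: one must exhibit a stabilizer state $\tau$ with the right support and such that
\begin{equation}
\Trm(\sigma\,\Xi_{\alpha,z}(\rho,\tau)) \leq Q_{\alpha,z}(\rho\|\tau) \qquad \forall\,\sigma\in\F.
\end{equation}
The candidate $\tau=\Delta_t(\ketbra{\psi}{\psi})$ manifestly commutes with $\rho=\Delta_p(\ketbra{\psi}{\psi})$, since both are affine combinations of $\ketbra{\psi}{\psi}$ and $\mathds{1}/d$, so Corollary~\ref{commuting} collapses the gradient to $\Xi_{\alpha,z}(\rho,\tau)=\rho^\alpha\tau^{-\alpha}$, independently of $z$. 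The support compatibility is automatic because both states have full rank whenever $p,t<1$ (and the degenerate cases are trivial).

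First I would diagonalize both states in the $\{\ketbra{\psi}{\psi},\mathds{1}-\ketbra{\psi}{\psi}\}$ splitting: $\rho$ has eigenvalue $r_1=(1+(d-1)p)/d$ along $\ket{\psi}$ and $r_2=(1-p)/d$ on the orthogonal complement, and analogously for $\tau$ with parameters $t_1,t_2$. This puts the gradient in the explicit rank-one-plus-identity form
\begin{equation}
\Xi_{\alpha,z}(\rho,\tau)=(a-b)\ketbra{\psi}{\psi}+b\,\mathds{1}, \qquad a=(r_1/t_1)^\alpha,\ b=(r_2/t_2)^\alpha,
\end{equation}
and reduces the left-hand side of the optimality inequality to the affine functional $\sigma\mapsto (a-b)\bra{\psi}\sigma\ket{\psi}+b$. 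A parallel commuting computation yields $Q_{\alpha,z}(\rho\|\tau)=\Trm(\rho^\alpha\tau^{1-\alpha})=(a-b)t_1+b$.

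The pivotal step is the sign $a>b$. Since $\rho\notin\F\ni\tau$ and both lie on the segment from $\mathds{1}/d$ to $\ketbra{\psi}{\psi}$, convexity of $\F$ forces $p>t$, which makes $r_1/t_1>1>r_2/t_2$ and hence $a>b$. With this inequality, $\max_{\sigma\in\F}\Trm(\sigma\,\Xi_{\alpha,z}(\rho,\tau))=(a-b)\,\mfF_0(\ketbra{\psi}{\psi})+b$. The prescription $t=(d\,\mfF_0(\ketbra{\psi}{\psi})-1)/(d-1)$ is engineered exactly so that $t_1=\bra{\psi}\tau\ket{\psi}=\mfF_0(\ketbra{\psi}{\psi})$, and therefore $(a-b)\,\mfF_0(\ketbra{\psi}{\psi})+b=(a-b)t_1+b=Q_{\alpha,z}(\rho\|\tau)$. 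The optimality condition of Theorem~\ref{main Theorem} is thus verified, with equality attained at $\sigma=\tau$, proving $\tau\in\argmin_{\sigma\in\F}D_{\alpha,z}(\rho\|\sigma)$.

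The main obstacle I anticipate is conceptual rather than computational: one must recognize that the formula for $t$ is rigged precisely so that $\tau$ simultaneously (i) lies in $\F$ by hypothesis, (ii) attains the stabilizer overlap maximum $\mfF_0(\ketbra{\psi}{\psi})$ along the $\ket{\psi}$ direction, and (iii) therefore saturates the KKT inequality. Once this alignment is identified, nothing further about the $\alpha$-$z$ R\'enyi structure is needed because the commuting reduction makes the whole optimality condition $z$-independent and purely classical in the joint eigenbasis.
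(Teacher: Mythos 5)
Your proposal is correct and follows essentially the same route as the paper: both verify the commuting optimality condition of Corollary~\ref{commuting}, reduce $\Xi_{\alpha,z}(\rho,\tau)=\rho^\alpha\tau^{-\alpha}$ to the rank-one-plus-identity form $(a-b)\ketbra{\psi}{\psi}+b\,\mathds{1}$ with $a>b$ (so the maximization over $\F$ becomes $\mfF_0$), and observe that the stated value of $t$ makes $\bra{\psi}\tau\ket{\psi}=\mfF_0(\ketbra{\psi}{\psi})$, saturating the inequality. Your eigenvalue parametrization and the convexity argument for $p>t$ are just slightly more explicit versions of the paper's identical computation.
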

\begin{proof}
We need to check that the state $\tau = \Delta_t(\ket{\psi}\!\!\bra{\psi})$ satisfies the condition of Corollary~\ref{commuting}
\begin{equation}
\max \limits_{\sigma \in \F}\Tr(\sigma \rho^\alpha \tau^{-\alpha})= \Tr(\rho^\alpha \tau^{1-\alpha}) \,.
\end{equation}
If we plug in the explicit form of $\tau$ and $\rho$, we get the condition
\begin{equation}
(a-b)\mfF_0(\ket{\psi}\!\!\bra{\psi})+b=a'-b'+db' \,,
\end{equation}
where 
\begin{align}
a = \left(\frac{1+(d-1)p}{1+(d-1)t}\right)^\alpha, \quad 
b = \left(\frac{1-p}{1-t}\right)^\alpha, \quad
a'=\frac{1-t}{d}a+at, \quad
b'= \frac{1-t}{d}b  \,. 
\end{align}
Hence, the condition is equivalent to 
\begin{equation}
(a-b)\left(\mfF_0(\ket{\psi}\!\!\bra{\psi}) - \frac{1-t}{d}-t\right) = 0 \,.
\end{equation}
Since $a>b$ because we assume $\rho$ is magic and hence $p>t$, the above product is zero only if $t=(d\mfF_0(\ket{\psi}\!\!\bra{\psi})-1)/(d-1)$ which proves the theorem.
\end{proof}
We point out that the previous lemma can be applied only to specific states since for general states $\tau = \Delta_t(\ket{\psi}\!\!\bra{\psi}) \notin \F$ for $t=(d\mfF_0(\ket{\psi}\!\!\bra{\psi})-1)/(d-1)$. We give a few examples below. We calculate the value of the $\alpha$-$z$ R\'enyi relative entropies of magic for all the values of the parameters. As discussed in Section~\ref{Magic monotones}, the value of the stabilizer fidelity, the relative entropy of magic, and the generalized log-robustness of magic can be obtained by taking the limits $\alpha \rightarrow 1/2$, $\alpha \rightarrow 1$ and $\alpha \rightarrow \infty$, respectively.
\subsection{T, H and F states}
\label{T, H and F states}
The $T,H$, and $F$ states are the pure states located in the symmetry axes of the stabilizer octahedron in the positive octant of the Bloch sphere. They are defined as
\begin{align}
     \ketbra{H}{H}=\frac{1}{2}\left(1+\frac{X+Z}{\sqrt{2}}\right)\,, \;  \ketbra{T}{T}=\frac{1}{2}\left(1+\frac{X+Y}{\sqrt{2}}\right)\, , \;
   \ketbra{F}{F}=\frac{1}{2}\left(1+\frac{X+Y+Z}{\sqrt{3}}\right) \,.
\end{align}
We proved in Section~\ref{closed forms} that for the stabilizer fidelity and the generalized robustness, the optimizer of the latter states lies at the intersection between the line joining the states with the center of the Bloch sphere and the stabilizer octahedron. This is true even in the presence of depolarizing noise. As a result, for the states belonging to the symmetry axes of the octahedron, the optimizer commutes with the state itself.  Here, we extend the latter result to include all the $\alpha$-$z$ R\'enyi relative entropy of magic, which notably, includes the relative entropy of magic. These states are the only single-qubit states that commute with their optimizer.

\begin{proposition}
\label{cor:noisy T}
Let $p \geq 1/\sqrt{2}$. We have
\begin{align}
\label{analytics}
&\mathfrak{D}_{\alpha,z}(\Delta_p(\ketbra{H}{H})) =
 \mathfrak{D}_{\alpha,z}(\Delta_p(\ketbra{T}{T})) \\
 & \quad \qquad \qquad \qquad \quad= 
 \frac{1}{\alpha-1} \log\left(\frac{\left(1+p\right)^\alpha}{2} \left(1+\frac{1}{\sqrt{2}}\right)^{1-\alpha} + \frac{\left(1-p\right)^\alpha}{2} \left(1-\frac{1}{\sqrt{2}}\right)^{1-\alpha} \right) \,.
\end{align}
Let $p \geq 1/\sqrt{3}$. We have
\begin{align}
&\mathfrak{D}_{\alpha,z}(\Delta_p(\ketbra{F}{F}))=
 \frac{1}{\alpha-1} \log\left(\frac{\left(1+p\right)^\alpha}{2} \left(1+\frac{1}{\sqrt{3}}\right)^{1-\alpha} + \frac{\left(1-p\right)^\alpha}{2} \left(1-\frac{1}{\sqrt{3}}\right)^{1-\alpha} \right) \,.
\end{align}
Moreover, for the above states, the optimizer commutes with the input state. 
\end{proposition}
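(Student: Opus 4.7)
The plan is to apply Lemma~\ref{ansatz commuting} directly, which reduces the proposition to (i) computing $\mathfrak{F}_0(\ketbra{\psi}{\psi})$ for $\ket{\psi}\in\{\ket{H},\ket{T},\ket{F}\}$, and (ii) verifying that the candidate state $\tau = \Delta_t(\ketbra{\psi}{\psi})$ with $t = 2\mathfrak{F}_0(\ketbra{\psi}{\psi})-1$ (using $d=2$) is indeed a stabilizer state. Once this is established, the lemma delivers an optimizer that commutes with $\rho$, and the closed-form expression follows by evaluating the divergence on commuting spectra.

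First, I would compute $\mathfrak{F}_0$. Since $\mathfrak{F}_0(\ketbra{\psi}{\psi})$ is a linear optimization over the polytope $\F$, its maximum is attained at a pure stabilizer vertex, so it equals $(1+\max_{\vec{s}\in\mathrm{vert}(\F)}\vec{r}\cdot\vec{s})/2$ via the single-qubit identity $\Tr(\rho\sigma) = (1+\vec{r}\cdot\vec{s})/2$. For the Bloch vectors of $\ket{H}$ and $\ket{T}$, this maximum is $1/\sqrt{2}$ (attained at the two nearest vertices of the octahedron), giving $\mathfrak{F}_0 = (1+1/\sqrt{2})/2$ and hence $t=1/\sqrt{2}$; for $\ket{F}$ the maximum is $1/\sqrt{3}$ (attained at three equidistant vertices), giving $t=1/\sqrt{3}$. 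I would then check that $\tau$ is a stabilizer state by inspecting its Bloch vector: for $\ket{H}$ it is $(1/2,0,1/2)$, for $\ket{T}$ it is $(1/2,1/2,0)$, and for $\ket{F}$ it is $(1/3,1/3,1/3)$. Each has nonnegative entries summing to one, so $\tau$ lies on a face of the octahedron, hence in $\F$. The assumption $p \geq 1/\sqrt{2}$ (respectively $p \geq 1/\sqrt{3}$) ensures that $\rho$ is magic and that $p \geq t$, so Lemma~\ref{ansatz commuting} applies and yields $\tau \in \argmin_{\sigma\in\F}D_{\alpha,z}(\rho\|\sigma)$.

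Having identified an optimizer $\tau$ with $[\rho,\tau]=0$, both states are simultaneously diagonal in the basis $\{\ket{\psi},\ket{\psi_\perp}\}$ with eigenvalues $(1\pm p)/2$ and $(1\pm t)/2$ respectively. Consequently, $D_{\alpha,z}(\rho\|\tau)$ collapses, independently of $z$, to the classical Rényi divergence $\tfrac{1}{\alpha-1}\log\sum_i p_i^\alpha q_i^{1-\alpha}$ of the two binary spectra. Substituting the eigenvalues and extracting the common prefactor $2^{-\alpha}\cdot 2^{\alpha-1}=1/2$ from inside the logarithm reproduces the displayed formulas for $\mathfrak{D}_{\alpha,z}(\Delta_p(\ketbra{H}{H}))$, $\mathfrak{D}_{\alpha,z}(\Delta_p(\ketbra{T}{T}))$, and $\mathfrak{D}_{\alpha,z}(\Delta_p(\ketbra{F}{F}))$.

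The only nontrivial step is the geometric check that the candidate $\tau$ prescribed by Lemma~\ref{ansatz commuting} lies in $\F$. For a generic pure single-qubit state off the symmetry axes of the octahedron this would fail, but for $\ket{H}$, $\ket{T}$, and $\ket{F}$ the Bloch vector points along a symmetry direction and the rescaled vector $t\vec{r}$ lands exactly on a face $\|\vec{s}\|_1 = 1$. This is what makes these three states precisely the single-qubit family for which the optimizer commutes with the state, consistent with the discussion preceding the proposition.
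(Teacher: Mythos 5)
Your proposal is correct and follows essentially the same route as the paper: compute $\mathfrak{F}_0$ for the pure $H$, $T$, $F$ states via the Bloch-vector overlap formula, check that the depolarized state at $t=2\mathfrak{F}_0-1$ lands on a face of the stabilizer octahedron, invoke Lemma~\ref{ansatz commuting}, and evaluate the divergence on the commuting spectra. You in fact supply slightly more detail than the paper (the explicit Bloch vectors of $\tau$ and the reduction to the classical R\'enyi divergence), but the argument is the same.
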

\begin{proof}
Since the pure single-qubit stabilizer states are the eigenvectors of the Pauli matrices, by explicit computation, it is easy to prove that for a single-qubit state $\rho$ it holds $\mfF_0(\rho) = (1+\max_i|r_i|)/2$. Here, we denoted with $\vec{r}$ the Bloch vector of $\rho$. Hence, we have that
\begin{equation}
\mfF_0(\ketbra{H}{H}) = \mfF_0(\ketbra{T}{T}) = \frac{1}{2}\left(1+\frac{1}{\sqrt{2}}\right) \, \quad \mfF_0(\ketbra{F}{F}) = \frac{1}{2}\left(1+\frac{1}{\sqrt{3}}\right) \,.
\end{equation}
Setting $t_{\ketbra{\psi}{\psi}}=(d\mfF_0(\ket{\psi}\!\!\bra{\psi})-1)/(d-1)$, it is straightforward to prove that for these specific states, $\Delta_{t_{\ketbra{\psi}{\psi}}}(\ketbra{\psi}{\psi})$ is a stabilizer state. Here, we set $\psi=H,T$ and $F$. Hence, we can apply Lemma~\ref{ansatz commuting}. The proposition follows by explicit computation. 
\end{proof}

Outside the above ranges of $p$, the input state is a stabilizer state and hence the monotones evaluate to zero.

\begin{remark}
The depolarized $H,T$ and $F$ states are the only single-qubit magic states that commute with their optimizer. To see this, let us consider an arbitrary magic single-qubit state $\rho$. We can always write $\rho=\Delta_p(\ketbra{\psi}{\psi})$ for some value $p$. Let $\vec{r}$ be the Bloch vector of $\ketbra{\psi}{\psi}$. We assume without loss of generality that $r_z \geq r_y \geq r_x \geq 0$.  For the optimizer $\tau$ to commute with $\rho$, it must also be that $ \tau=\Delta_t(\ketbra{\psi}{\psi})$ for some $t$. Similarly to Lemma~\ref{ansatz commuting}, the necessary conditions of Corollary~\ref{commuting} yield $t=r_z$. Since $\tau$ must be a stabilizer state it must be that $r_z=t \leq 1/(r_x+r_y+r_z)$. Note that $r^2_x+r^2_y+r^2_z=1$. Then, the inequality is satisfied only for (1) $r_z=1$, (2) $r_x=r_y, r_z=0$, (3) $r_x=r_y=r_z$.
Therefore, the optimizer commutes with the input state only if it belongs to a symmetry axis of the stabilizer octaedron.
\end{remark}

\subsection{Toffoli state}
The Toffoli state is 
\begin{equation}
\ket{\text{Toffoli}}=\frac{\ket{000}+\ket{100}+\ket{010}+\ket{111}}{2} \,.
\end{equation}
The Toffoli state is Clifford equivalent to the $\ket{CCZ}$ state~\cite{howard2017application}. 
We now show that for the Toffoli state, we have that $\mfF_0(\ket{\text{Toffoli}}\!\!\bra{\text{Toffoli}}) = 9/16$.  Note first that the maximization is always achieved by a pure stabilizer state due to the linearity of the trace. 

Any $n$-qubit pure stabilizer state is a simultaneous $+1$ or $-1$ eigenstate of $n$ commuting independent Pauli operators and their arbitrary product ($2^n$ Pauli operators in total).
  This means that, for an arbitrary $n$-qubit pure stabilizer state $\ket{\phi}$, there exists a set of commuting $n$-qubit independent Pauli operators $\{P_k\}_{k=1}^n$ and eigenvalues $\lambda_k\in\{+1,-1\}$ such that 
\begin{align}
   \ketbra{\phi}{\phi}=\prod_{k=1}^n \frac{\mbI+{ \lambda_k}P_k}{2} = \frac{1}{2^n} \sum_{Q\in\mS}  \lambda_Q Q \,,
   \label{eq:stabilizer overlap to maximize}
\end{align}
  where $\mS$ is the set of Pauli operators generated by $\{P_i\}_{i=1}^n$, which has size $2^n$, and $\lambda_Q\in\{+1,-1\}$ that satisfies $(-1)^k \lambda_{Q_1} \lambda_{Q_2}= \lambda_Q$ for $Q_1,Q_2\in\mS$ such that $Q_1Q_2=(-1)^k Q,\ k\in\{0,1\}$.
We expand $\ketbra{\text{Toffoli}}{\text{Toffoli}} = 2^{-n} \sum_{Q=1}^{4^n} r_Q Q$ in the Pauli basis. Here, $r_Q=\Tr(Q\ketbra{\text{Toffoli}}{\text{Toffoli}})$ and since the Toffoli state is a three-qubit state, we have that $n=3$.   Noting that every non-identity Pauli operator is traceless, the terms that survive in $|\braket{\phi}{\text{Toffoli}}|^2$ are the ones with a Pauli in $\mS$. It is a straightforward calculation to check that the coefficients $r_Q$ of the Toffoli state in the Pauli basis are at most $1/2$. This gives for any pure stabilizer state $\ket{\phi}$
\begin{align}
   |\braket{\phi}{\text{Toffoli}}|^2 = \frac{1}{8}\sum_{Q\in\mS}  \lambda_Q r_Q \leq \frac{1}{8}\left(1+\frac{7}{2}\right) = \frac{9}{16} \,.
\end{align}
An easy calculation reveals that the above bound is achieved by the stabilizer state $\ket{\phi}$ with generators $\{\mathds{1} \otimes \mathds{1} \otimes \mathds{1}, \mathds{1} \otimes \mathds{1} \otimes X,\mathds{1} \otimes X \otimes \mathds{1},Z \otimes \mathds{1} \otimes \mathds{1}\}$. The latter result has been obtained in~\cite{bravyi2019simulation}.

From Lemma~\ref{ansatz commuting} we get the  stabilizer state $\tau=\frac{1}{2}\ket{\text{Toffoli}}\!\!\bra{\text{Toffoli}} + \frac{1}{2}\frac{\mathds{I}}{8}$. By direct computation, we obtain
\begin{proposition}
\label{cor:noisy Toffoli}
Let $p \geq 1/2$. We have
\begin{align}
&\mathfrak{D}_{\alpha,z}(\Delta_p(\ketbra{\textup{Toffoli}}{\textup{Toffoli}}))=
 \frac{1}{\alpha-1} \log\left(2^{\alpha-4}(7(1-p)^\alpha+9^{1-\alpha}(1+7p)^\alpha) \right) \,.
\end{align} 
Moreover, the optimizer is a depolarized Toffoli state. 
\end{proposition}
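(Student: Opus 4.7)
The plan is to apply Lemma~\ref{ansatz commuting} directly, using the computation $\mathfrak{F}_0(\ketbra{\textup{Toffoli}}{\textup{Toffoli}}) = 9/16$ that the paragraphs preceding the proposition already established. With $d=8$, the candidate depolarizing parameter produced by the lemma is
\begin{equation}
t = \frac{d\,\mathfrak{F}_0(\ketbra{\textup{Toffoli}}{\textup{Toffoli}})-1}{d-1} = \frac{8\cdot 9/16 - 1}{7} = \frac{1}{2},
\end{equation}
so the candidate optimizer is $\tau = \tfrac{1}{2}\ketbra{\textup{Toffoli}}{\textup{Toffoli}} + \tfrac{1}{2}\tfrac{\mathds{I}}{8}$, which is a stabilizer state (this is exactly the state exhibited just above the proposition statement). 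The assumption $p\geq 1/2$ is precisely what guarantees the hypothesis of Lemma~\ref{ansatz commuting} that $\rho = \Delta_p(\ketbra{\textup{Toffoli}}{\textup{Toffoli}}) \notin \F$: for $p<1/2$ the state itself is a convex combination of $\tau$ and $\mathds{I}/8$, so it is a stabilizer state and the monotone vanishes.

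Next I would observe that $\rho$ and $\tau$ both lie on the line joining $\ketbra{\textup{Toffoli}}{\textup{Toffoli}}$ and $\mathds{I}/8$, so they commute and share eigenvectors. Since the Toffoli projector has eigenvalue $1$ with multiplicity $1$ and eigenvalue $0$ with multiplicity $7$, we have the common spectral decomposition $\rho = \lambda^\rho_+ \Pi_+ + \lambda^\rho_- \Pi_-$, $\tau = \lambda^\tau_+ \Pi_+ + \lambda^\tau_- \Pi_-$ with $\Pi_+ = \ketbra{\textup{Toffoli}}{\textup{Toffoli}}$, $\lambda^\rho_+ = (1+7p)/8$, $\lambda^\rho_- = (1-p)/8$, $\lambda^\tau_+ = 9/16$, $\lambda^\tau_- = 1/16$. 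Because the operators commute, $D_{\alpha,z}$ is independent of $z$ and reduces to the classical form $Q_{\alpha,z}(\rho\|\tau) = \Trm(\rho^\alpha \tau^{1-\alpha})$.

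Finally, evaluating on the two eigenspaces gives
\begin{equation}
Q_{\alpha,z}(\rho\|\tau) = \left(\tfrac{1+7p}{8}\right)^\alpha \left(\tfrac{9}{16}\right)^{1-\alpha} + 7\left(\tfrac{1-p}{8}\right)^\alpha\left(\tfrac{1}{16}\right)^{1-\alpha} = 2^{\alpha-4}\!\left(9^{1-\alpha}(1+7p)^\alpha + 7(1-p)^\alpha\right),
\end{equation}
where I used $8^{-\alpha}\cdot 16^{-(1-\alpha)} = 2^{-3\alpha}\cdot 2^{4\alpha-4} = 2^{\alpha-4}$. Taking $\tfrac{1}{\alpha-1}\log$ yields the stated closed form, and the optimizer is depolarized Toffoli by construction. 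There is no substantive obstacle: the only nontrivial inputs are the known value of $\mathfrak{F}_0$ on the Toffoli state, the verification that the resulting $\tau$ is a stabilizer state, and Lemma~\ref{ansatz commuting}; the rest is a one-line spectral evaluation.
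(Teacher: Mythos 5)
Your proposal is correct and follows essentially the same route as the paper: both invoke Lemma~\ref{ansatz commuting} with $\mathfrak{F}_0(\ketbra{\text{Toffoli}}{\text{Toffoli}})=9/16$ and $d=8$ to identify the commuting optimizer $\tau=\tfrac{1}{2}\ketbra{\text{Toffoli}}{\text{Toffoli}}+\tfrac{1}{2}\tfrac{\mathds{I}}{8}$, and then evaluate $Q_{\alpha,z}(\rho\|\tau)=\Trm(\rho^\alpha\tau^{1-\alpha})$ on the two eigenspaces; your arithmetic checks out. The only difference is that you spell out the "direct computation" the paper leaves implicit, which is a welcome addition rather than a deviation.
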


\subsection{Hoggar state}
The Hoggar state is defined as~\cite{hoggar199864}
\begin{equation}
\ket{\text{Hog}}=\frac{(1+i)\ket{000}+\ket{010}+\ket{110}+\ket{100}+\ket{101}}{\sqrt{6}} \,.
\end{equation}
For the Hoggar state we have that $\mfF_0(\ket{\text{Hog}}\!\!\bra{\text{Hog}}) = 2^{-\mathfrak{D}_{\min}(\ket{\text{Hog}}\!\!\bra{\text{Hog}})} = 5/12$~\cite{Takagi2022one-shot}. Using the previous lemma we get the stabilizer optimizer $\tau=\frac{1}{3}\ket{\text{Hog}}\!\!\bra{\text{Hog}} + \frac{2}{3}\frac{\mathds{I}}{8}$.

\begin{proposition}
\label{cor:noisy Hoggar}
Let $p \geq 1/3$. We have
\begin{align}
&\mathfrak{D}_{\alpha,z}(\Delta_p(\ketbra{\textup{Hog}}{\textup{Hog}}))=
 \frac{1}{\alpha-1} \log\left(\frac{3^{\alpha-1}}{2^{\alpha+2}}(7(1-p)^\alpha+5^{1-\alpha}(1+7p)^\alpha) \right) \,.
\end{align} 
Moreover, the optimizer is a depolarized Hoggar state. 
\end{proposition}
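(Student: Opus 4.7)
My proof plan follows exactly the template used for Proposition~\ref{cor:noisy Toffoli}: first apply Lemma~\ref{ansatz commuting} to identify the stabilizer optimizer, then evaluate the $\alpha$-$z$ R\'enyi divergence between two commuting operators in closed form. The excerpt already supplies the two non-trivial ingredients, namely the value $\mfF_0(\ketbra{\text{Hog}}{\text{Hog}}) = 5/12$ (taken from~\cite{Takagi2022one-shot}) and the claim that $\tau = \tfrac{1}{3}\ketbra{\text{Hog}}{\text{Hog}} + \tfrac{2}{3}\mathds{I}/8$ lies in $\F$.

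Feeding $d=8$ and $\mfF_0(\ketbra{\text{Hog}}{\text{Hog}}) = 5/12$ into Lemma~\ref{ansatz commuting} yields $t = (d\mfF_0 - 1)/(d-1) = (10/3 - 1)/7 = 1/3$, so the lemma picks out $\tau = \Delta_{1/3}(\ketbra{\text{Hog}}{\text{Hog}})$ as a minimizer of $D_{\alpha,z}(\Delta_p(\ketbra{\text{Hog}}{\text{Hog}})\|\cdot)$ over $\F$ for every $(\alpha,z) \in \mathcal{D}$. The threshold $p \geq 1/3$ is exactly what is needed for the input $\rho = \Delta_p(\ketbra{\text{Hog}}{\text{Hog}})$ to be magic in the first place: for $p \leq 1/3$, $\rho$ sits on the segment between $\mathds{I}/8$ and $\tau$, both stabilizer states, so $\rho \in \F$ and the monotone trivially vanishes.

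With $\tau$ identified, the rest is a one-line spectral calculation. Since $\rho$ and $\tau$ are both depolarizations of the same pure state, they commute and share the rank-$1$ and rank-$7$ spectral projectors $\ketbra{\text{Hog}}{\text{Hog}}$ and $\mathds{I} - \ketbra{\text{Hog}}{\text{Hog}}$. The eigenvalues are $(1+7p)/8$ and $(1-p)/8$ for $\rho$, and $5/12$ and $1/12$ for $\tau$. For commuting arguments the $\alpha$-$z$ divergence collapses (independently of $z$) to the classical $\alpha$-R\'enyi expression, yielding
\begin{equation*}
Q_{\alpha,z}(\rho\|\tau) = \left(\tfrac{1+7p}{8}\right)^{\alpha} \left(\tfrac{5}{12}\right)^{1-\alpha} + 7 \left(\tfrac{1-p}{8}\right)^{\alpha} \left(\tfrac{1}{12}\right)^{1-\alpha}.
\end{equation*}
Factoring out $8^{-\alpha}\,12^{\alpha-1} = 3^{\alpha-1}/2^{\alpha+2}$ and applying $(\alpha-1)^{-1}\log$ produces the stated closed form.

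The main obstacle lies entirely in Step~1, and specifically in certifying $\tau \in \F$. This is essentially the same ingredient that underlies the quoted value $\mfF_0(\ketbra{\text{Hog}}{\text{Hog}}) = 5/12$: the primal-feasible stabilizer decomposition achieving that optimum can be rescaled and mixed with $\mathds{I}/8$ to furnish a convex combination of pure three-qubit stabilizer states equal to $\tau$. Once that membership is in hand, everything else is automatic and the proposition follows identically to its $T$, $H$, $F$, and Toffoli analogues.
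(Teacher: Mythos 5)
Your proposal matches the paper's own proof exactly: the paper likewise feeds $d=8$ and $\mfF_0(\ketbra{\text{Hog}}{\text{Hog}})=5/12$ (cited from the same reference) into Lemma~\ref{ansatz commuting} to obtain the optimizer $\tau=\frac{1}{3}\ketbra{\text{Hog}}{\text{Hog}}+\frac{2}{3}\frac{\mathds{1}}{8}$, and then obtains the closed form by the same direct spectral computation for commuting states. The one step the paper also leaves implicit is the explicit certification that this $\tau$ belongs to $\F$, which you rightly flag as the only non-automatic ingredient.
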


\subsection{CS state}
The $\ket{CS}=CS|+\rangle^{\otimes 2}$ reads
\begin{equation}
\ket{CS}=\frac{1}{2}(\ket{00}+\ket{01}+\ket{10}+i\ket{11}) \,.
\end{equation}

It is a straightforward calculation to check that the coefficients $r_Q$ of the CS-state in the Pauli basis are at most $1/2$. This gives for any pure stabilizer state $\ket{\phi}$
\begin{align}
   |\braket{\phi}{CS}|^2 = \frac{1}{4}\sum_{Q\in\mS}  \lambda_Q r_Q \leq \frac{1}{4}\left(1+\frac{3}{2}\right) = \frac{5}{8} \,.
\end{align}
An easy calculation reveals that the above bound is achieved by the stabilizer state $\ket{\phi}$ with generators $\{\mathds{1} \otimes \mathds{1},- \mathds{1} \otimes Y, X \otimes \mathds{1}\}$.

From Lemma~\ref{ansatz commuting} we get the stabilizer state $\tau=\frac{1}{2}\ket{CS}\!\!\bra{CS} + \frac{1}{2}\frac{\mathds{I}}{4}$. By direct calculation, we get 
\begin{proposition}
\label{cor:noisy CS}
Let $p \geq 1/2$. We have
\begin{align}
&\mathfrak{D}_{\alpha,z}(\Delta_p(\ketbra{\text{CS}}{\text{CS}}))=
 \frac{1}{\alpha-1} \log\left(2^{\alpha-3}(3(1-p)^\alpha+5^{1-\alpha}(1+3p)^\alpha) \right) \,.
\end{align} 
Moreover, the optimizer is a depolarized CS state.
\end{proposition}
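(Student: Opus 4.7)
The proof follows the same template as the propositions for the Toffoli and Hoggar states. The optimizer $\tau = \tfrac{1}{2}\ketbra{CS}{CS} + \tfrac{1}{2}\mathds{I}/4$ has already been identified in the paragraph immediately preceding the statement via Lemma~\ref{ansatz commuting} (using $d=4$ and $\mathfrak{F}_0(\ketbra{CS}{CS}) = 5/8$ together with the verification that this particular convex combination is indeed a stabilizer state, which was done by exhibiting it as a mixture of $\ketbra{CS}{CS}$ with $\mathds{I}/4$ that lies in the stabilizer polytope). So the proposition is essentially a direct evaluation.

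The plan is as follows. First, I note that $\rho = \Delta_p(\ketbra{CS}{CS}) = p\ketbra{CS}{CS} + (1-p)\mathds{I}/4$ and $\tau$ are both affine combinations of $\ketbra{CS}{CS}$ and $\mathds{I}/4$, so they commute and share the same eigenbasis. Hence by Corollary~\ref{commuting}, the value of $D_{\alpha,z}(\rho\|\tau)$ is independent of $z$ and reduces to the Petz form
\begin{equation}
\mathfrak{D}_{\alpha,z}(\rho) = D_{\alpha,z}(\rho\|\tau) = \frac{1}{\alpha-1}\log\Tr(\rho^\alpha \tau^{1-\alpha}).
\end{equation}

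Second, I read off the spectra in the common eigenbasis: on the one-dimensional eigenspace spanned by $\ket{CS}$, $\rho$ has eigenvalue $(1+3p)/4$ and $\tau$ has eigenvalue $5/8$; on the three-dimensional orthogonal complement, $\rho$ has eigenvalue $(1-p)/4$ and $\tau$ has eigenvalue $1/8$. Thus
\begin{equation}
\Tr(\rho^\alpha \tau^{1-\alpha}) = \left(\tfrac{1+3p}{4}\right)^{\alpha}\!\!\left(\tfrac{5}{8}\right)^{1-\alpha} + 3\left(\tfrac{1-p}{4}\right)^{\alpha}\!\!\left(\tfrac{1}{8}\right)^{1-\alpha},
\end{equation}
and collecting the powers of $2$ via $4^{\alpha} \cdot 8^{1-\alpha} = 2^{3-\alpha}$ yields $2^{\alpha-3}\bigl(3(1-p)^\alpha + 5^{1-\alpha}(1+3p)^\alpha\bigr)$, giving the claimed closed form after taking $\tfrac{1}{\alpha-1}\log$.

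Finally, I justify the range $p \geq 1/2$: for $p < 1/2$, the state $\Delta_p(\ketbra{CS}{CS})$ can be written as a convex combination of the stabilizer state $\tau$ and $\mathds{I}/4$, hence is itself a stabilizer state and $\mathfrak{D}_{\alpha,z}$ vanishes. At $p = 1/2$ one has $\rho = \tau$ and a quick substitution confirms the formula evaluates to $\log 1 = 0$, consistent with this boundary. There is no real obstacle here — once the optimizer has been pinned down via Lemma~\ref{ansatz commuting}, everything reduces to a routine spectral calculation of $\Tr(\rho^\alpha \tau^{1-\alpha})$ for two commuting rank-structured operators, and the limits $\alpha \to 1/2, 1, \infty$ recovering the stabilizer fidelity, relative entropy of magic, and generalized log-robustness follow immediately from the pointwise limit identities in~\eqref{limits monotone}.
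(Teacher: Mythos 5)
Your proposal is correct and follows the same route as the paper: the optimizer $\tau=\tfrac{1}{2}\ketbra{CS}{CS}+\tfrac{1}{2}\mathds{I}/4$ is obtained from Lemma~\ref{ansatz commuting} with $d=4$ and $\mathfrak{F}_0(\ketbra{CS}{CS})=5/8$, after which the closed form is a direct spectral evaluation of $\Tr(\rho^\alpha\tau^{1-\alpha})$ for the commuting pair. The paper compresses this last step to ``by direct calculation,'' and your explicit eigenvalue bookkeeping (including the power-of-two simplification and the $p<1/2$ degenerate case) matches it exactly.
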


\section{No additivity for any monotone based on a quantum relative entropy for general odd-dimensional states}
\label{counterexample}
In~\cite{veitch2014resource} the authors provided a counterexample to the additivity of the relative entropy of magic for qutrit states. We now show that the latter statement can be generalized to include any monotone based on a quantum relative entropy. We call a monotone based on a quantum relative entropy any monotone of the form $\min_{\sigma \in \mathcal{F}} \mathbb{D}(\rho \| \sigma)$ where $\mathbb{D}$ is a quantum relative entropy~\cite{gour2020optimal}.  We have
\begin{proposition}
Any monotone based on a quantum divergence is not additive for two pure qutrit strange states.
\end{proposition}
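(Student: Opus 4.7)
The plan is to invoke the sandwich characterisation of quantum divergences from~\cite{gour2020optimal}: any function $\mathbb{D}$ qualifying as a quantum divergence satisfies
\begin{equation}
D_{\min}(\rho\|\sigma) \leq \mathbb{D}(\rho\|\sigma) \leq D_{\max}(\rho\|\sigma)
\end{equation}
pointwise. Minimising both sides over $\sigma \in \F$ propagates the bounds to the induced magic monotone $\mathfrak{R}_\mathbb{D}(\rho):=\min_{\sigma\in\F}\mathbb{D}(\rho\|\sigma)$ as $\mathfrak{D}_{\min}(\rho) \leq \mathfrak{R}_\mathbb{D}(\rho) \leq \mathfrak{D}_{\max}(\rho)$. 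The whole argument then reduces to evaluating the two extremal monotones on the qutrit strange state $\ket{\mS}$ and on $\ket{\mS}^{\otimes 2}$.

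On a single copy the evaluation collapses to a divergence-independent constant. The maximum stabilizer overlap of the strange state is $\mathfrak{F}_0(\ketbra{\mS}{\mS})=1/2$ (attained by a computational-basis state), while the decomposition $\ket{\mS}=(\ket{1}-\ket{2})/\sqrt{2}$ into stabilizer states saturates the fidelity lower bound on the stabilizer extent, giving $\Lambda^+(\ketbra{\mS}{\mS})=\mathfrak{F}_0(\ketbra{\mS}{\mS})^{-1}=2$. Hence $\mathfrak{D}_{\min}(\ketbra{\mS}{\mS})=\mathfrak{D}_{\max}(\ketbra{\mS}{\mS})=1$, and the sandwich bound forces $\mathfrak{R}_\mathbb{D}(\ketbra{\mS}{\mS})=1$ simultaneously for every quantum divergence $\mathbb{D}$.

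On two copies the sandwich bound reduces the proposition to the single strict inequality $\mathfrak{D}_{\max}(\ketbra{\mS}{\mS}^{\otimes 2})<2$, or equivalently $\Lambda^+(\ketbra{\mS}{\mS}^{\otimes 2})<4$. Since $\Lambda^+$ on pure states equals the stabilizer extent $\xi$~\cite{regula2017convex}, this amounts to exhibiting a two-qutrit stabilizer decomposition $\ket{\mS}^{\otimes 2}=\sum_i c_i \ket{\phi_i}$ with $\bigl(\sum_i|c_i|\bigr)^2<4$. This strict sub-multiplicativity of $\xi$ is the main obstacle: one must beat the naive product decomposition $(\ket{1}-\ket{2})^{\otimes 2}/2$ (which only attains $4$) by exploiting entangled two-qutrit stabilizer states. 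A concrete witness can be produced by the SDP for the stabilizer extent~\cite{regula2017convex}, or adapted from the sub-additivity witness underlying the Veitch-Mari-Emerson counterexample for the Umegaki relative entropy~\cite{veitch2014resource}. Feeding such a decomposition into $D_{\max}$ closes the chain
\begin{equation}
\mathfrak{R}_\mathbb{D}(\ketbra{\mS}{\mS}^{\otimes 2}) \leq \log \xi(\ket{\mS}^{\otimes 2}) < 2 = 2\,\mathfrak{R}_\mathbb{D}(\ketbra{\mS}{\mS})
\end{equation}
uniformly in $\mathbb{D}$, which is precisely the proposition.
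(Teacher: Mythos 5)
Your overall strategy is the same sandwiching argument the paper uses: bound any divergence-based monotone between $\mathfrak{D}_{\min}$ and $\mathfrak{D}_{\max}$, pin down the single-copy value, and then show the two-copy value falls strictly below twice that. Your single-copy computation is correct and self-contained ($\mathfrak{F}_0(\ketbra{\mS}{\mS})=1/2$ together with the two-term stabilizer decomposition forces $\mathfrak{D}_{\min}=\mathfrak{D}_{\max}=1$), and your reduction of the whole proposition to the single inequality $\mathfrak{D}_{\max}(\ketbra{\mS}{\mS}^{\otimes 2})<2$ is logically sound and in fact slightly more economical than what the paper needs.

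The gap is that this decisive inequality is never established. You state that a stabilizer decomposition of $\ket{\mS}^{\otimes 2}$ with $\ell_1$-norm squared below $4$ ``can be produced by the SDP'' or ``adapted from'' the Veitch--Mari--Emerson counterexample, but you exhibit neither the decomposition nor a citation that asserts its existence; as written, the proof reduces the proposition to an unproved claim. Moreover, the suggested adaptation does not go through directly: the counterexample in~\cite{veitch2014resource} concerns the Umegaki relative entropy, and its witness is a closest \emph{stabilizer state} $\sigma$ achieving $D(\ketbra{\mS}{\mS}^{\otimes 2}\|\sigma)<2\mathfrak{D}(\ketbra{\mS}{\mS})$, which is a different object from an $\ell_1$-optimal stabilizer \emph{decomposition} of the pure state; converting one into the other requires an argument you do not supply. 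The paper closes exactly this gap by citing~\cite[Proposition 14]{Takagi2022one-shot}, which shows $\mathfrak{D}_{\min}=\mathfrak{D}=\mathfrak{D}_{\max}$ not only on $\ket{\mS}$ but also on its tensor products, so that the known strict subadditivity of the relative entropy of magic from~\cite{veitch2014resource} immediately yields $\mathfrak{D}_{\max}(\ketbra{\mS}{\mS}^{\otimes 2})=\mathfrak{D}(\ketbra{\mS}{\mS}^{\otimes 2})<2$. Either invoke that result or actually produce and verify an entangled two-qutrit stabilizer decomposition beating the product one; until then the argument is a plan rather than a proof.
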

\begin{proof}
The relative entropy is not additive for the tensor product of two qutrit strange states~\cite{veitch2014resource}. Moreover, the value of the relative entropy of magic is equal for both strange states and tensor products of strange states to the value of the $\mathfrak{D}_{\min}$ and $\mathfrak{D}_{\max}$~\cite[Proposition 14]{Takagi2022one-shot}. Since any monotone based on a quantum relative entropy is contained between $\mathfrak{D}_{\min}$ and $\mathfrak{D}_{\max}$~\cite{gour2020optimal}, it follows that any monotone based on a quantum divergence is not additive for general states.
\end{proof}
Note that, since the counterexample involves pure states, it extends to any monotone defined through convex-roof construction. 

\section{A possible direction to prove the additivity of the $\alpha$-$z$ R\'enyi relative entropies of magic in the range $|(1-\alpha)|/z=1$ for all two and three-qubit states}
\label{Further directions}
In this section, we propose a direction to prove additivity for all two and three-qubit states for the $\alpha$-$z$ R\'enyi relative entropies of magic in the range $|(1-\alpha)|/z=1$. To do so, we generalize to mixed states the stabilizer-aligned condition introduced in~\cite[Definition 9]{bravyi2019simulation} for pure states. 

The 2-norm of a matrix $A$ is defined as $\|A \|_2 := (\Trm(A^\dagger A))^\frac{1}{2}$. We call $S_{n,m}$ the set of rank-$2^m$ stabilizer projectors on $n$ qubits. 
\begin{definition}
    For any n-qubit state $\rho$, define
    \begin{equation}
        \mathfrak{F}_m(\rho)=2^{-m/2} \max \limits_{\Pi_m \in S_{n,m}} \|\Pi_m \rho \Pi_m\|_2 \qquad m=0,..,n \,.
    \end{equation}

We say that $\rho$ is stabilizer-aligned if $\mathfrak{F}_m(\rho) \leq   \mathfrak{F}_0(\rho)$ for all $m$.
\end{definition}

\begin{lemma}
\label{stabilizer aligned proof}
Let $\rho_1$ and $\rho_2$ be stabilizer-aligned. Then, it holds that $\mfF_0(\rho_1 \otimes \rho_2)= \mfF_0(\rho_1)\mfF_0(\rho_2)$. 
\end{lemma}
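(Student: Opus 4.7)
The plan is to establish the upper bound $\mfF_0(\rho_1 \otimes \rho_2) \leq \mfF_0(\rho_1)\mfF_0(\rho_2)$ by reducing the joint optimization to a Cauchy--Schwarz estimate in the stabilizer Schmidt basis of the optimizer; the reverse inequality is immediate from super-multiplicativity by plugging in a product stabilizer state, so equality will follow. Since $\mfF_0$ is a maximum of a linear functional, the optimum on the joint system $n_1+n_2$ is attained by a pure stabilizer state $\ketbra{\phi}{\phi}$. The analysis of pure-state multiplicativity in Bravyi et al. (Lemma on stabilizer-aligned states) is the natural template to generalize.

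The structural input I plan to use is the stabilizer Schmidt decomposition: for any bipartite pure stabilizer state there exist an integer $m$ and orthonormal stabilizer Schmidt bases $\{\ket{u_k}\}_{k=1}^{2^m}$ on $A$ and $\{\ket{v_k}\}_{k=1}^{2^m}$ on $B$ such that
\begin{equation*}
\ket{\phi} \;=\; 2^{-m/2}\sum_{k=1}^{2^m}\ket{u_k}\otimes\ket{v_k},
\end{equation*}
with all non-zero Schmidt coefficients equal, and such that $\Pi_A := \sum_k \ketbra{u_k}{u_k}\in S_{n_1,m}$ and $\Pi_B := \sum_k \ketbra{v_k}{v_k}\in S_{n_2,m}$ are rank-$2^m$ stabilizer projectors. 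This follows by putting the stabilizer generators of $\ket{\phi}$ in canonical form with respect to the bipartition.

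Expanding $\ketbra{\phi}{\phi} = 2^{-m}\sum_{k,l}\ketbra{u_k}{u_l}\otimes\ketbra{v_k}{v_l}$ and tracing against $\rho_1\otimes\rho_2$ yields
\begin{equation*}
\Trm\!\bigl((\rho_1\otimes\rho_2)\ketbra{\phi}{\phi}\bigr) \;=\; 2^{-m}\sum_{k,l}\bra{u_l}\rho_1\ket{u_k}\bra{v_l}\rho_2\ket{v_k}.
\end{equation*}
Treating $\{A_{lk}:=\bra{u_l}\rho_1\ket{u_k}\}$ and $\{B_{lk}:=\bra{v_l}\rho_2\ket{v_k}\}$ as entries of matrices on the stabilizer subspaces, applying Cauchy--Schwarz on the double sum and identifying $\sum_{k,l}|A_{lk}|^2 = \|\Pi_A\rho_1\Pi_A\|_2^2$, $\sum_{k,l}|B_{lk}|^2 = \|\Pi_B\rho_2\Pi_B\|_2^2$, gives
\begin{equation*}
\Trm\!\bigl((\rho_1\otimes\rho_2)\ketbra{\phi}{\phi}\bigr) \;\leq\; 2^{-m}\,\|\Pi_A\rho_1\Pi_A\|_2\,\|\Pi_B\rho_2\Pi_B\|_2 \;\leq\; \mfF_m(\rho_1)\,\mfF_m(\rho_2),
\end{equation*}
where the last step uses the definition of $\mfF_m$. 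The stabilizer-aligned hypothesis $\mfF_m(\rho_i)\leq\mfF_0(\rho_i)$ applied for the $m$ appearing above, followed by maximization of the left-hand side over pure stabilizer states, yields the desired upper bound.

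The main obstacle is the stabilizer Schmidt decomposition step — one must be sure both that the local reductions of $\ketbra{\phi}{\phi}$ are proportional to stabilizer projectors \emph{and} that the corresponding Schmidt bases can be chosen inside those projectors' ranges so that $\|\Pi_A\rho_1\Pi_A\|_2$ and $\|\Pi_B\rho_2\Pi_B\|_2$ are the relevant quantities in the definition of $\mfF_m$. Everything else is a routine Cauchy--Schwarz argument; the hypothesis of stabilizer-alignment is used only in the very last line to absorb the $m$-dependence.
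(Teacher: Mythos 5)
Your proposal is correct and follows essentially the same route as the paper: both reduce to the GHZ/Schmidt canonical form of a bipartite pure stabilizer state, bound the overlap by $2^{-m}\|\Pi_A\rho_1\Pi_A\|_2\,\|\Pi_B\rho_2\Pi_B\|_2$, and then invoke stabilizer-alignment to absorb $\mathfrak{F}_m\le\mathfrak{F}_0$. The only cosmetic difference is that you apply scalar Cauchy--Schwarz directly to the matrix elements in the Schmidt basis, whereas the paper first spectrally decomposes $\rho_1,\rho_2$ and uses the Schatten-norm H\"older inequality $\|\Phi^T\Psi\|_2\le\|\Phi^T\|_4\|\Psi\|_4$; the two yield the identical intermediate bound.
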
 
\begin{proof}
Since $\mfF_0$ is super-multiplicative, we need to show that for two stabilizer-aligned states, $\mfF_0(\rho_1 \otimes \rho_2) \leq \mfF_0(\rho_1)\mfF_0(\rho_2)$. We use that for a bipartite stabilizer state $|\phi\rangle$, there exists a local Clifford $U_A \otimes U_B$ and $d \geq 0$ such that $U_A \otimes U_B \ket{\phi}= \ket{\omega_d} = 2^{-d/2} \sum_{\delta=1}^{2^d}\ket{\delta}\ket{\delta}$~\cite[Theorem 5]{bravyi2006ghz}. Since local Clifford unitary does not change $\mfF_0$, we absorb the unitaries $U_A$ and $U_B$ into the states $\rho_1$ and $\rho_2$. We use the spectral decomposition into (not normalized) orthogonal states $\rho_1 = \sum_i \ketbra{\phi_i}{\phi_i}$ and $\rho_2 =  \sum_j \ketbra{\psi_j}{\psi_j}$. We have
\begin{align}
\mfF_0(\rho_1 \otimes \rho_2) 
 = \sum_{ij} \braket{\omega_d}{\phi_i \psi_j}\braket{\phi_i \psi_j}{\omega_d} 
 = \sum_{ij}|\braket{\phi_i \psi_j}{\omega_d}|^2 
 = 2^{-d}\sum_{ij}\left| \sum_{\delta=1}^{2^d}\braket{\phi_i}{\delta}\braket{\psi_j}{\delta}\right|^2 \,.
\end{align}
We now define the matrices
\begin{equation}
\Phi=
\begin{pmatrix}
\braket{\phi_1}{1} & ... & \braket{\phi_{2^n}}{1} \\
\vdots & ... & \vdots \\
\braket{\phi_1}{2^d} & ... & \braket{\phi_{2^n}}{2^d} 
\end{pmatrix} \,, \qquad 
\Psi=
\begin{pmatrix}
\braket{\psi_1}{1} & ... & \braket{\psi_{2^n}}{1} \\
\vdots & ... & \vdots \\
\braket{\psi_1}{2^d} & ... & \braket{\psi_{2^n}}{2^d} 
\end{pmatrix} \,.
\end{equation}
We note that $\mfF_0(\rho_1 \otimes \rho_2) = 2^{-d}\|\Phi^T \Psi \|_2^2 $. We now use a variation of Holder's inequality~\cite[Exercise IV.2.7]{bhatia2013matrix} which states $\|\Phi^T \Psi \|_2 \leq \|\Phi^T \|_4 \|\Psi \|_4$. An explicit calculation shows that $\|\Phi^T \|_4 = (\Tr(\Phi^\dagger \Phi \Phi^\dagger \Phi))^\frac{1}{4} = \Tr(\rho_1 \Pi_d \rho_1 \Pi_d)^\frac{1}{4}$ and, analogously $\|\Psi \|_4 = \Tr(\rho_2 \Pi_d \rho_2 \Pi_d)^\frac{1}{4}$ where $\Pi_d = \sum_{\delta=1}^{2^d}\ketbra{\delta}{\delta}$. If both states are stabilizer-aligned, we therefore get
\begin{align}
\mfF_0(\rho_1 \otimes \rho_2) &= 2^{-d}\|\Phi^T \Psi \|_2^2  \\
& \leq 2^{-d} \Tr(\rho_1 \Pi_d \rho_1 \Pi_d)^\frac{1}{2} \Tr(\rho_2 \Pi_d \rho_2 \Pi_d)^\frac{1}{2}\\
& \leq \mfF_0(\rho_1) \mfF_0(\rho_2) \,,
\end{align}
which proves the lemma. 
\end{proof}
We now show that single-qubit states are stabilizer-aligned.
\begin{lemma}
All single-qubit states are stabilizer-aligned.
\end{lemma}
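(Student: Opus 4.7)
The plan is to observe that for a single qubit ($n=1$), the index $m$ in the definition can only take the values $0$ and $1$. The case $m=0$ is trivial since the inequality $\mathfrak{F}_0(\rho)\leq \mathfrak{F}_0(\rho)$ always holds. Hence the whole statement reduces to the single inequality $\mathfrak{F}_1(\rho)\leq \mathfrak{F}_0(\rho)$, which is a direct Bloch-sphere computation.

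First I would identify the sets $S_{1,0}$ and $S_{1,1}$ explicitly. The set $S_{1,1}$ contains the rank-$2$ stabilizer projectors on a single qubit, and the only such projector is $\mathds{1}$, so
\begin{equation}
\mathfrak{F}_1(\rho)=\tfrac{1}{\sqrt{2}}\,\|\rho\|_2=\tfrac{1}{\sqrt{2}}\sqrt{\Trm(\rho^2)}=\tfrac{1}{2}\sqrt{1+|\vec{r}|^2},
\end{equation}
where $\vec{r}$ is the Bloch vector of $\rho$. For $\mathfrak{F}_0$, the rank-$1$ stabilizer projectors are the six pure single-qubit stabilizer states (the eigenvectors of $X,Y,Z$), and as already noted in the proof of Proposition~\ref{cor:noisy T},
\begin{equation}
\mathfrak{F}_0(\rho)=\tfrac{1}{2}\bigl(1+\max_i|r_i|\bigr).
\end{equation}

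The remaining step is to check $\tfrac{1}{2}\sqrt{1+|\vec{r}|^2}\leq \tfrac{1}{2}\bigl(1+\max_i|r_i|\bigr)$. Writing $r_\ast:=\max_i|r_i|$ and squaring, this is equivalent to $|\vec{r}|^2\leq 2 r_\ast+r_\ast^2$. Since the two non-maximal Bloch components are each bounded in absolute value by $r_\ast$, we have $|\vec{r}|^2-r_\ast^2\leq 2 r_\ast^2$, and $r_\ast\leq 1$ for any quantum state gives $2r_\ast^2\leq 2r_\ast$. Combining these yields the desired inequality.

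The main ``obstacle'' here is essentially cosmetic: one only needs to be careful that the enumeration $S_{1,0}=\{|\psi\rangle\!\langle\psi|:|\psi\rangle\in S_1\}$ and $S_{1,1}=\{\mathds{1}\}$ is correct, and to handle the square-root monotonicity step cleanly. No deeper structural argument is required at the single-qubit level; the real combinatorial work would appear only when trying to extend stabilizer-alignment to two- and three-qubit states as suggested in Section~\ref{Further directions}.
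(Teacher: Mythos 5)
Your proposal is correct and follows essentially the same route as the paper: identify $\mathds{1}$ as the unique rank-$2$ stabilizer projector so that $\mathfrak{F}_1(\rho)=\tfrac{1}{2}\sqrt{1+|\vec{r}|^2}$, use $\mathfrak{F}_0(\rho)=\tfrac{1}{2}(1+\max_i|r_i|)$, and reduce to the elementary inequality $|\vec{r}|^2\leq r_\ast^2+2r_\ast$ via the bounds $|r_i|\leq r_\ast\leq 1$. The only cosmetic difference is that the paper bounds $\sum_i r_i^2\leq 3r_\ast^2$ by ``maximum exceeds the average'' whereas you bound each non-maximal component by $r_\ast$ directly; these are the same estimate.
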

\begin{proof}
We can write any qubit in the Bloch representation $\rho = (1/2)(1+\sum _p c_p P)$, where the sum runs over the Pauli matrices.  It is easy to check that $\mfF_1(\rho)= 2^{-1/2}\sqrt{\Tr(\rho^2)} = (1/2)\sqrt{(1+\sum_p c_p^2)}$ since the only projector of rank-two is the identity. Moreover, $\mfF_0(\rho) = (1/2)(1+|c_{p^*}|)$, where $c_{p^*} = \max_p c_p$. The condition $\mfF_0(\rho) \geq \mfF_1(\rho)$ then reads $c_{p^*}^2+2 |c_{p^*}| \geq \sum_p c_p^2$. Using that $|c_{p^*}|\leq 1$ and that the maximum is greater than the average, we get $|c_{p^*}| \geq c_{p^*}^2 \geq  \sum_p c_p^2/3$. It then follows that $c_{p^*}^2+2 |c_{p^*}| \geq (1/3+2/3) \sum_p c_p^2 = \sum_p c_p^2$ which proves the inequality. 
\end{proof}
Following similar arguments that lead to Theorem~\ref{Additivity in the line}, the above results imply the additivity of the $\alpha$-$z$ R\'enyi relative entropies of magic in the range $|(1-\alpha)|/z=1$ for two single-qubit states. It is an open problem whether the proof of Lemma~\ref{stabilizer aligned proof} can be generalized to products of more than two stabilizer-aligned states. Moreover, it is still an open question whether all the two and three-qubit states are stabilizer-aligned according to the above definition. If these two questions could be resolved in the affirmative, additivity of the $\alpha$-$z$ R\'enyi relative entropies of magic in the range $|(1-\alpha)|/z=1$ would follow for any number of any two and three-qubit states.

\section*{ACKNOWLEDGMENTS}
We thank Bartosz Regula for pointing us out some relevant references and for fruitful discussions about monotonicity of resource monotones under probabilistic transformations. 
We also thank Kohdai Kuroiwa for bringing our attention to a typographical error in Proposition~\ref{Fidelity} in the previous version of the manuscript. 
RR and MT are supported by the National Research Foundation, Singapore, and A*STAR under its CQT Bridging Grant.  RT was supported by the Lee Kuan Yew Postdoctoral Fellowship at Nanyang Technological University Singapore, JSPS KAKENHI Grant Number JP23K19028, and JST, CREST Grant Number JPMJCR23I3, Japan. 
MT is also supported by the National Research Foundation, Singapore, and A*STAR under its Quantum Engineering
Programme (NRF2021-QEP2-01-P06).

\bibliographystyle{quantum}
\bibliography{my}

\begin{thebibliography}{10}

\bibitem{veitch2014resource}
Victor Veitch, SA~Hamed Mousavian, Daniel Gottesman, and Joseph Emerson.
\newblock ``The resource theory of stabilizer quantum computation''.
\newblock \href{https://dx.doi.org/10.1088/1367-2630/16/1/013009}{New J. Phys.
  {\bf 16}, 013009}~(2014).

\bibitem{gottesman1998theory}
Daniel Gottesman.
\newblock ``Theory of fault-tolerant quantum computation''.
\newblock \href{https://dx.doi.org/10.1103/PhysRevA.57.127}{Phys. Rev. A {\bf
  57}, 127}~(1998).

\bibitem{aaronson2004improved}
Scott Aaronson and Daniel Gottesman.
\newblock ``Improved simulation of stabilizer circuits''.
\newblock \href{https://dx.doi.org/10.1103/PhysRevA.70.052328}{Phys. Rev. A
  {\bf 70}, 052328}~(2004).

\bibitem{gross2006hudson}
David Gross.
\newblock ``Hudson’s theorem for finite-dimensional quantum systems''.
\newblock \href{https://dx.doi.org/10.1063/1.2393152}{J. Math. Phys. {\bf 47},
  122107}~(2006).

\bibitem{veitch2012negative}
Victor Veitch, Christopher Ferrie, David Gross, and Joseph Emerson.
\newblock ``Negative quasi-probability as a resource for quantum computation''.
\newblock \href{https://dx.doi.org/10.1088/1367-2630/14/11/113011}{New J. Phys.
  {\bf 14}, 113011}~(2012).

\bibitem{mari2012positive}
Andrea Mari and Jens Eisert.
\newblock ``Positive wigner functions render classical simulation of quantum
  computation efficient''.
\newblock \href{https://dx.doi.org/10.1103/PhysRevLett.109.230503}{Phys. Rev.
  Lett. {\bf 109}, 230503}~(2012).

\bibitem{wang2020efficiently}
Xin Wang, Mark~M Wilde, and Yuan Su.
\newblock ``Efficiently computable bounds for magic state distillation''.
\newblock
  \href{https://dx.doi.org/doi.org/10.1103/PhysRevLett.124.090505}{Phys. Rev.
  Lett. {\bf 124}, 090505}~(2020).

\bibitem{seddon2021quantifying}
James~R Seddon, Bartosz Regula, Hakop Pashayan, Yingkai Ouyang, and Earl~T
  Campbell.
\newblock ``Quantifying quantum speedups: Improved classical simulation from
  tighter magic monotones''.
\newblock \href{https://dx.doi.org/10.1103/PRXQuantum.2.010345}{PRX Quantum
  {\bf 2}, 010345}~(2021).

\bibitem{horodecki2013quantumness}
Michal Horodecki and Jonathan Oppenheim.
\newblock ``(quantumness in the context of) resource theories''.
\newblock \href{https://dx.doi.org/10.1142/S0217979213450197}{Int. J. Mod.
  Phys. B {\bf 27}, 1345019}~(2013).

\bibitem{Takagi2022one-shot}
Ryuji Takagi, Bartosz Regula, and Mark~M. Wilde.
\newblock ``One-shot yield-cost relations in general quantum resource
  theories''.
\newblock \href{https://dx.doi.org/10.1103/PRXQuantum.3.010348}{PRX Quantum
  {\bf 3}, 010348}~(2022).

\bibitem{Liu2019oneshot}
Zi-Wen Liu, Kaifeng Bu, and Ryuji Takagi.
\newblock ``One-shot operational quantum resource theory''.
\newblock \href{https://dx.doi.org/10.1103/PhysRevLett.123.020401}{Phys. Rev.
  Lett. {\bf 123}, 020401}~(2019).

\bibitem{howard2017application}
Mark Howard and Earl Campbell.
\newblock ``Application of a resource theory for magic states to fault-tolerant
  quantum computing''.
\newblock \href{https://dx.doi.org/10.1103/PhysRevLett.118.090501}{Phys. Rev.
  Lett. {\bf 118}, 090501}~(2017).

\bibitem{campbell2012magic}
Earl~T Campbell, Hussain Anwar, and Dan~E Browne.
\newblock ``Magic-state distillation in all prime dimensions using quantum
  reed-muller codes''.
\newblock \href{https://dx.doi.org/0.1103/PhysRevX.2.041021}{Phys. Rev. X {\bf
  2}, 041021}~(2012).

\bibitem{campbell2010bound}
Earl~T Campbell and Dan~E Browne.
\newblock ``Bound states for magic state distillation in fault-tolerant quantum
  computation''.
\newblock \href{https://dx.doi.org/10.1103/PhysRevLett.104.030503}{Phys. Rev.
  Lett. {\bf 104}, 030503}~(2010).

\bibitem{Gour}
Eric Chitambar and Gilad Gour.
\newblock ``Quantum resource theories''.
\newblock \href{https://dx.doi.org/10.1103/RevModPhys.91.025001}{Rev. Mod.
  Phys. {\bf 91}, 025001}~(2019).

\bibitem{Brandao2}
Fernando~GSL Brandao and Gilad Gour.
\newblock ``Reversible framework for quantum resource theories''.
\newblock \href{https://dx.doi.org/10.1103/PhysRevLett.115.070503}{Phys. Rev.
  Lett. {\bf 115}, 070503}~(2015).

\bibitem{seddon2019quantifying}
James~R Seddon and Earl~T Campbell.
\newblock ``Quantifying magic for multi-qubit operations''.
\newblock \href{https://dx.doi.org/10.1098/rspa.2019.0251}{Proceedings of the
  Royal Society A {\bf 475}, 20190251}~(2019).

\bibitem{audenaert2015alpha}
Koenraad~MR Audenaert and Nilanjana Datta.
\newblock ``$\alpha$-z-r{\'e}nyi relative entropies''.
\newblock \href{https://dx.doi.org/10.1063/1.4906367}{J. Math. Phys. {\bf 56},
  022202}~(2015).

\bibitem{zhang2020wigner}
Haonan Zhang.
\newblock ``From wigner-yanase-dyson conjecture to carlen-frank-lieb
  conjecture''.
\newblock \href{https://dx.doi.org/10.1016/j.aim.2020.107053}{Adv. Math. {\bf
  365}, 107053}~(2020).

\bibitem{renner2008security}
Renato Renner.
\newblock ``Security of quantum key distribution''.
\newblock \href{https://dx.doi.org/10.1142/S0219749908003256}{Int. J. Quantum
  Inf. {\bf 6}, 1--127}~(2008).

\bibitem{Datta_rob2}
Nilanjana Datta.
\newblock ``Min-and max-relative entropies and a new entanglement monotone''.
\newblock \href{https://dx.doi.org/10.1109/TIT.2009.2018325}{IEEE Trans. Inf.
  Theory {\bf 55}, 2816--2826}~(2009).

\bibitem{tomamichel2015quantum}
Marco Tomamichel.
\newblock ``Quantum inf. process. with finite resources: mathematical
  foundations''.
\newblock \href{https://dx.doi.org/10.1007/978-3-319-21891-5}{Volume~5}.
\newblock Springer. ~(2015).

\bibitem{lin2015investigating}
Simon~M Lin and Marco Tomamichel.
\newblock ``Investigating properties of a family of quantum r{\'e}nyi
  divergences''.
\newblock \href{https://dx.doi.org/10.1007/s11128-015-0935-y}{Quantum Inf.
  Process. {\bf 14}, 1501--1512}~(2015).

\bibitem{rubboli2024new}
Roberto Rubboli and Marco Tomamichel.
\newblock ``New additivity properties of the relative entropy of entanglement
  and its generalizations''.
\newblock \href{https://dx.doi.org/10.1007/s00220-024-05025-3}{Commun. Math.
  Phys. {\bf 405}, 162}~(2024).

\bibitem{bu2023quantum}
Kaifeng Bu, Weichen Gu, and Arthur Jaffe.
\newblock ``Quantum entropy and central limit theorem''.
\newblock \href{https://dx.doi.org/10.1073/pnas.2304589120}{Proceedings of the
  National Academy of Sciences {\bf 120}, e2304589120}~(2023).

\bibitem{bravyi2019simulation}
Sergey Bravyi, Dan Browne, Padraic Calpin, Earl Campbell, David Gosset, and
  Mark Howard.
\newblock ``Simulation of quantum circuits by low-rank stabilizer
  decompositions''.
\newblock \href{https://dx.doi.org/10.22331/q-2019-09-02-181}{Quantum {\bf 3},
  181}~(2019).

\bibitem{horodecki2009quantum}
Ryszard Horodecki, Pawe{\l} Horodecki, Micha{\l} Horodecki, and Karol
  Horodecki.
\newblock ``Quantum entanglement''.
\newblock \href{https://dx.doi.org/10.1103/RevModPhys.81.865}{Rev. Mod. Phys.
  {\bf 81}, 865}~(2009).

\bibitem{vedral1998entanglement}
Vlatko Vedral and Martin~B Plenio.
\newblock ``Entanglement measures and purification procedures''.
\newblock \href{https://dx.doi.org/10.1103/PhysRevA.57.1619}{Phys. Rev. A {\bf
  57}, 1619}~(1998).

\bibitem{vedral1997quantifying}
Vlatko Vedral, Martin~B Plenio, Michael~A Rippin, and Peter~L Knight.
\newblock ``Quantifying entanglement''.
\newblock \href{https://dx.doi.org/0.1103/PhysRevLett.78.2275}{Phys. Rev. Lett.
  {\bf 78}, 2275}~(1997).

\bibitem{saxena2022quantifying}
Gaurav Saxena and Gilad Gour.
\newblock ``Quantifying multiqubit magic channels with completely
  stabilizer-preserving operations''.
\newblock \href{https://dx.doi.org/10.1103/PhysRevA.106.042422}{Phys. Rev. A
  {\bf 106}, 042422}~(2022).

\bibitem{fawzi2019semidefinite}
Hamza Fawzi, James Saunderson, and Pablo~A Parrilo.
\newblock ``Semidefinite approximations of the matrix logarithm''.
\newblock \href{https://dx.doi.org/10.1007/s10208-018-9385-0}{Foundations of
  Computational Mathematics {\bf 19}, 259--296}~(2019).

\bibitem{regula2017convex}
Bartosz Regula.
\newblock ``Convex geometry of quantum resource quantification''.
\newblock \href{https://dx.doi.org/10.1088/1751-8121/aa9100}{J. Phys. A {\bf
  51}, 045303}~(2017).

\bibitem{gour2020optimal}
Gilad Gour and Marco Tomamichel.
\newblock ``Optimal extensions of resource measures and their applications''.
\newblock \href{https://dx.doi.org/10.1103/PhysRevA.102.062401}{Phys. Rev. A
  {\bf 102}, 062401}~(2020).

\bibitem{heimendahl2021stabilizer}
Arne Heimendahl, Felipe Montealegre-Mora, Frank Vallentin, and David Gross.
\newblock ``Stabilizer extent is not multiplicative''.
\newblock \href{https://dx.doi.org/10.22331/q-2021-02-24-400}{Quantum {\bf 5},
  400}~(2021).

\bibitem{campbell2011catalysis}
Earl~T Campbell.
\newblock ``Catalysis and activation of magic states in fault-tolerant
  architectures''.
\newblock \href{https://dx.doi.org/10.1103/PhysRevA.83.032317}{Phys. Rev. A
  {\bf 83}, 032317}~(2011).

\bibitem{horodecki1999reduction}
Micha{\l} Horodecki and Pawe{\l} Horodecki.
\newblock ``Reduction criterion of separability and limits for a class of
  distillation protocols''.
\newblock \href{https://dx.doi.org/10.1103/PhysRevA.59.4206}{Phys. Rev. A {\bf
  59}, 4206}~(1999).

\bibitem{regula2022probabilistic}
Bartosz Regula.
\newblock ``Probabilistic transformations of quantum resources''.
\newblock \href{https://dx.doi.org/10.1103/PhysRevLett.128.110505}{Phys. Rev.
  Lett. {\bf 128}, 110505}~(2022).

\bibitem{regula2021fundamental}
Bartosz Regula and Ryuji Takagi.
\newblock ``Fundamental limitations on distillation of quantum channel
  resources''.
\newblock \href{https://dx.doi.org/10.1109/SFCS.1996.548464}{Nat. Commun. {\bf
  12}, 4411}~(2021).

\bibitem{ma2008geometric}
Zhihao Ma, Fu-Lin Zhang, and Jing-Ling Chen.
\newblock ``Geometric interpretation for the a fidelity and its relation with
  the bures fidelity''.
\newblock \href{https://dx.doi.org/10.1103/PhysRevA.78.064305}{Phys. Rev. A
  {\bf 78}, 064305}~(2008).

\bibitem{hoggar199864}
Stuart~G Hoggar.
\newblock ``64 lines from a quaternionic polytope''.
\newblock \href{https://dx.doi.org/10.1023/A:1005009727232}{Geometriae Dedicata
  {\bf 69}, 287--289}~(1998).

\bibitem{bravyi2006ghz}
Sergey Bravyi, David Fattal, and Daniel Gottesman.
\newblock ``Ghz extraction yield for multipartite stabilizer states''.
\newblock \href{https://dx.doi.org/10.1063/1.2203431}{J. Math. Phys. {\bf 47},
  062106}~(2006).

\bibitem{bhatia2013matrix}
Rajendra Bhatia.
\newblock ``Matrix analysis''.
\newblock \href{https://dx.doi.org/10.1007/978-1-4612-0653-8}{Volume 169}.
\newblock Springer Science \& Business Media. ~(2013).

\bibitem{wei2003geometric}
Tzu-Chieh Wei and Paul~M Goldbart.
\newblock ``Geometric measure of entanglement and applications to bipartite and
  multipartite quantum states''.
\newblock \href{https://dx.doi.org/10.1103/PhysRevA.68.042307}{Phys. Rev. A
  {\bf 68}, 042307}~(2003).

\bibitem{baumgratz2014quantifying}
Tillmann Baumgratz, Marcus Cramer, and Martin~B Plenio.
\newblock ``Quantifying coherence''.
\newblock \href{https://dx.doi.org/10.1103/PhysRevLett.113.140401}{Phys. Rev.
  Lett. {\bf 113}, 140401}~(2014).

\bibitem{regula2022tight}
Bartosz Regula.
\newblock ``Tight constraints on probabilistic convertibility of quantum
  states''.
\newblock \href{https://dx.doi.org/10.22331/q-2022-09-22-817}{Quantum {\bf 6},
  817}~(2022).

\bibitem{shao2015fidelity}
Lian-He Shao, Zhengjun Xi, Heng Fan, and Yongming Li.
\newblock ``Fidelity and trace-norm distances for quantifying coherence''.
\newblock \href{https://dx.doi.org/10.1103/PhysRevA.91.042120}{Phys. Rev. A
  {\bf 91}, 042120}~(2015).

\bibitem{muller2013quantum}
Martin M{\"u}ller-Lennert, Fr{\'e}d{\'e}ric Dupuis, Oleg Szehr, Serge Fehr, and
  Marco Tomamichel.
\newblock ``On quantum r{\'e}nyi entropies: A new generalization and some
  properties''.
\newblock \href{https://dx.doi.org/10.1063/1.4838856}{J. Math. Phys. {\bf 54},
  122203}~(2013).

\bibitem{mosonyi2015two}
Mil{\'a}n Mosonyi and Tomohiro Ogawa.
\newblock ``Two approaches to obtain the strong converse exponent of quantum
  hypothesis testing for general sequences of quantum states''.
\newblock \href{https://dx.doi.org/10.1109/TIT.2015.2489259}{IEEE Trans. Inf.
  Theory {\bf 61}, 6975--6994}~(2015).

\bibitem{mosonyi2023some}
Mil{\'a}n Mosonyi and Fumio Hiai.
\newblock ``Some continuity properties of quantum r{\'e}nyi divergences''.
\newblock \href{https://dx.doi.org/10.1109/TIT.2023.3324758}{IEEE Trans. Inf.
  Theory}~(2023).

\end{thebibliography}

\newpage

\appendix
\section*{Appendices}
\addcontentsline{toc}{section}{Appendices}
\renewcommand{\thesubsection}{\Alph{subsection}}

\subsection{Necessary and sufficient conditions for the optimizer}
\label{necessary and sufficient}
In this appendix, we review the necessary and sufficient conditions for the optimizer(s) of the $\alpha$-$z$ R\'enyi relative entropies of magic derived in~\cite{rubboli2024new}. The results are very general and hold for any resource theory. In this manuscript, we consider $\mathcal{F}=\F$. 
Let $\rho,\tau \in \state$. We define
\begin{align}
\label{equation problem}
&\Xi_{\alpha,z}(\rho,\tau) :=  \begin{dcases}
\chi_{\alpha,1-\alpha}(\rho,\tau) &  \textup{if} \; z=1-\alpha \\
\tau^{-1}  \chi_{\alpha,\alpha-1}(\rho, \tau) \tau^{-1}  &  \textup{if} \; z=\alpha-1 \\
K_{\alpha,z} \int_0^\infty \frac{\chi_{\alpha,z}(\rho,\tau)}{(\tau + t)^2} t^\frac{1-\alpha}{z} \d t & \textup{if} \; |(1-\alpha)/z| \neq 1
\end{dcases} \,,
\end{align}
where $K_{\alpha,z}:=\sinc\left(\pi\frac{{1-\alpha}}{z}\right)$ is a positive constant and we defined the positive operator $\chi_{\alpha,z}(\rho,\tau):=  \rho^\frac{\alpha}{2z}( \rho^{\frac{\alpha}{2z}}\tau^{\frac{1-\alpha}{z}} \rho^{\frac{\alpha}{2z}})^{z-1} \rho^\frac{\alpha}{2z}$.

Moreover, we define
\begin{align}
\label{sets}
S_{\alpha,z}(\rho) =  \begin{cases} 
\left\{ \sigma \in \state : \supp(\rho) = \supp\!\left(\Pi(\rho)\sigma \Pi(\rho)\right) \right\}  &  \textup{if} \; (1-\alpha)/z = 1\\
\left\{ \sigma \in \state : \supp(\rho) \subseteq \supp(\sigma)  \right\} &  \textup{otherwise} 
\end{cases} \, ,
\end{align} 

We have the following result~\cite{rubboli2024new}
\begin{theorem}
\label{main Theorem}
Let $\rho$ be a quantum state and $(\alpha,z) \in \mathcal{D}$. Then $\tau \in \argmin_{\sigma \in \mathcal{F}} D_{\alpha,z}(\rho \| \sigma)$ if and only if $\tau \in S_{\alpha,z}(\rho)$ and $\textup{Tr}(\sigma\, \Xi_{\alpha,z}(\rho,\tau)) \leq  Q_{\alpha,z}(\rho \|\tau)$ for all $\sigma \in \mathcal{F}$.
\end{theorem}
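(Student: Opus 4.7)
The plan is to prove this characterization using convex optimization theory applied to the function $\sigma \mapsto D_{\alpha,z}(\rho\|\sigma)$. For $(\alpha,z)\in\mathcal{D}$ this function is convex on its effective domain (a consequence of joint convexity, which itself follows from the data-processing inequality), so minimization over the convex set $\mathcal{F}$ is a convex program. Accordingly, a feasible $\tau$ with $D_{\alpha,z}(\rho\|\tau)<\infty$ is a minimizer if and only if the directional derivative of the objective at $\tau$ in every feasible direction $\sigma-\tau$ is non-negative. The finiteness condition on $\tau$ is precisely $\tau\in S_{\alpha,z}(\rho)$, so the substantive content lies in computing this derivative and rearranging it into the claimed trace inequality.

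First I would compute the Fréchet derivative of $Q_{\alpha,z}(\rho\|\cdot)$ at $\tau$. Writing $M(\sigma)=\rho^{\alpha/(2z)}\sigma^{(1-\alpha)/z}\rho^{\alpha/(2z)}$ so that $Q_{\alpha,z}(\rho\|\sigma)=\text{Tr}[M(\sigma)^z]$, the generic case $|(1-\alpha)/z|\neq 1$ is handled using the integral representation $A^s=\tfrac{\sin(\pi s)}{\pi}\int_0^\infty A(A+t)^{-1}t^{s-1}\,dt$ for $s\in(0,1)$. Differentiating $(\tau+u\eta)^{s}$ under the integral and applying the chain rule to $\text{Tr}[M^z]$ yields
\begin{equation}
\partial_\eta Q_{\alpha,z}(\rho\|\tau) = (1-\alpha)\,\text{Tr}\bigl[\Xi_{\alpha,z}(\rho,\tau)\,\eta\bigr],
\end{equation}
with $\Xi_{\alpha,z}$ as defined in \eqref{equation problem}. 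On the two boundary lines $z=1-\alpha$ and $z=\alpha-1$ the exponent $(1-\alpha)/z=\pm 1$ makes the dependence on $\sigma$ elementary, and the same identity is obtained by a direct differentiation of $\sigma$ and $\sigma^{-1}$ respectively, matching the piecewise definition of $\Xi_{\alpha,z}$. Combined with $D_{\alpha,z}=(\alpha-1)^{-1}\log Q_{\alpha,z}$, this gives the uniform formula
\begin{equation}
\partial_\eta D_{\alpha,z}(\rho\|\tau) = -\frac{\text{Tr}[\Xi_{\alpha,z}(\rho,\tau)\,\eta]}{Q_{\alpha,z}(\rho\|\tau)}.
\end{equation}

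Next I would verify the normalization identity $\text{Tr}[\Xi_{\alpha,z}(\rho,\tau)\,\tau]=Q_{\alpha,z}(\rho\|\tau)$. In the integral case this reduces, via the scalar identity $\int_0^\infty a\,t^{s}/(a+t)^{2}\,dt=a^{s}\cdot s\pi/\sin(\pi s)$ and the defining $K_{\alpha,z}=\sin(\pi s)/(\pi s)$, to $\text{Tr}[\chi_{\alpha,z}(\rho,\tau)\,\tau^{(1-\alpha)/z}]$, which by cyclicity equals $\text{Tr}[M(\tau)^z]=Q_{\alpha,z}(\rho\|\tau)$. The two boundary cases are immediate. With this identity in hand, the first-order condition $\partial_{\sigma-\tau}D_{\alpha,z}(\rho\|\tau)\geq 0$ for all $\sigma\in\mathcal{F}$ rearranges to $\text{Tr}[\sigma\,\Xi_{\alpha,z}(\rho,\tau)]\leq Q_{\alpha,z}(\rho\|\tau)$, which is exactly the claimed inequality. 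Sufficiency then follows by invoking convexity: any $\sigma\in\mathcal{F}$ yielding a lower value would, by convexity, produce a feasible direction with strictly negative derivative, contradicting the trace inequality.

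The main obstacle is the careful treatment of the support conditions encoded in $S_{\alpha,z}(\rho)$ and the associated analytic subtleties. One must justify that the optimizer $\tau$ satisfies the weaker support condition $\text{supp}(\rho)=\text{supp}(\Pi(\rho)\tau\Pi(\rho))$ on the line $(1-\alpha)/z=1$ but the stronger condition $\text{supp}(\rho)\subseteq\text{supp}(\tau)$ off it, which dictates when the operator $\Xi_{\alpha,z}(\rho,\tau)$ is well-defined and when the integral representation converges uniformly in $t$ (requiring bounds on $(\tau+t)^{-1}\chi_{\alpha,z}(\rho,\tau)(\tau+t)^{-1}$ near $t=0$ and $t=\infty$). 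Establishing that any minimizer must lie in $S_{\alpha,z}(\rho)$ is immediate since outside this set $D_{\alpha,z}=+\infty$; the nontrivial analytic work is in differentiating under the integral sign and in verifying that the first-order variation remains valid when $\tau$ is singular on $\ker(\rho)$, for which one restricts attention to the support of $\tau$ (or of $\rho$, on the $(1-\alpha)/z=1$ line) and uses a standard approximation argument.
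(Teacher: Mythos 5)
The paper itself does not prove this theorem: Appendix~A states that it is reviewed from the cited reference, so there is no in-paper argument to compare against. Your strategy --- first-order optimality conditions for a convex program, a Fr\'echet derivative of $Q_{\alpha,z}$ computed via the integral representation of fractional operator powers, and the normalization identity $\Tr[\Xi_{\alpha,z}(\rho,\tau)\,\tau]=Q_{\alpha,z}(\rho\|\tau)$ --- is precisely the strategy of that reference, and both your derivative formula $\partial_\eta Q_{\alpha,z}=(1-\alpha)\Tr[\Xi_{\alpha,z}(\rho,\tau)\eta]$ and the Beta-integral computation behind the normalization identity check out.

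Two points need repair. First, your claim that establishing $\tau\in S_{\alpha,z}(\rho)$ for any minimizer is ``immediate since outside this set $D_{\alpha,z}=+\infty$'' is false for $\alpha<1$: there $D_{\alpha,z}(\rho\|\sigma)$ is finite whenever $\rho\not\perp\sigma$, which is strictly weaker than either support condition defining $S_{\alpha,z}(\rho)$ (take $\alpha=1/2$, $z=1$ and a $\sigma$ whose support neither contains nor is orthogonal to $\supp(\rho)$). For $\alpha<1$ the necessity of the support condition is a genuine lemma: one must show that at a feasible point violating it the directional derivative toward a suitable interior point blows up (the fractional power $t\mapsto t^{(1-\alpha)/z}$ has infinite slope at the origin), so such a point cannot be optimal. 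Second, convexity of $\sigma\mapsto D_{\alpha,z}(\rho\|\sigma)$ does not follow from the DPI for $\alpha>1$: there $Q_{\alpha,z}$ is jointly convex, but $\frac{1}{\alpha-1}\log$ is an increasing concave transformation, so convexity of $D_{\alpha,z}$ is not inherited. The sufficiency argument should be run on $Q_{\alpha,z}$ directly --- convex and to be minimized for $\alpha>1$, concave and to be maximized for $\alpha<1$ --- in which case the sign of $1-\alpha$ cancels in exactly the right way and your inequality $\Tr[\sigma\,\Xi_{\alpha,z}(\rho,\tau)]\le Q_{\alpha,z}(\rho\|\tau)$ emerges as the correct first-order condition in both regimes.
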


In this case, the above theorem considerably simplifies. 

\begin{corollary}
\label{commuting}
Let $\rho$ be a quantum state and $(\alpha,z) \in \mathcal{D}$. Then, a state $\tau$ satisfying $[\rho,\tau]= 0$ belongs to the set $\tau \in \argmin_{\sigma \in \mathcal{F}} D_{\alpha,z}(\rho \| \sigma)$ if and only if $\supp(\rho)\subseteq \supp(\tau)$ and $\textup{Tr}(\sigma \Xi_{\alpha}(\rho,\tau)) \leq  Q_\alpha(\rho \|\tau)$ for all $\sigma \in \mathcal{F}$ where $\Xi_{\alpha}(\rho,\tau) = \rho^\alpha \tau^{-\alpha}$ and $Q_\alpha(\rho \|\tau) = \textup{Tr}(\rho^\alpha \tau^{1-\alpha})$. 
\end{corollary}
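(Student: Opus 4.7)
The plan is to derive this corollary by direct specialization of Theorem~\ref{main Theorem} under the added hypothesis $[\rho,\tau]=0$. Three reductions need to be verified: (i) the abstract operator $\Xi_{\alpha,z}(\rho,\tau)$ defined in~\eqref{equation problem} collapses to $\rho^\alpha\tau^{-\alpha}$ in every one of the three branches; (ii) the scalar $Q_{\alpha,z}(\rho\|\tau)$ collapses to $\Tr(\rho^\alpha\tau^{1-\alpha})$; and (iii) the branch-dependent support set $S_{\alpha,z}(\rho)$ defined in~\eqref{sets} collapses to $\supp(\rho)\subseteq\supp(\tau)$. Substituting these simplifications into Theorem~\ref{main Theorem} will yield the statement verbatim, so the whole argument reduces to a bookkeeping exercise in a simultaneous eigenbasis of $\rho$ and $\tau$.

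For (ii) I would pass to such a common eigenbasis, in which $\rho^{\alpha/(2z)}\tau^{(1-\alpha)/z}\rho^{\alpha/(2z)}=\rho^{\alpha/z}\tau^{(1-\alpha)/z}$, so $Q_{\alpha,z}(\rho\|\tau)=\Tr(\rho^{\alpha/z}\tau^{(1-\alpha)/z})^{z}=\Tr(\rho^\alpha\tau^{1-\alpha})$. For (i) the key intermediate calculation is $\chi_{\alpha,z}(\rho,\tau)=\rho^\alpha\tau^{(1-\alpha)(z-1)/z}$, again by simultaneous diagonalization. Plugged into the first branch $z=1-\alpha$ of~\eqref{equation problem} the exponent of $\tau$ simplifies directly to $-\alpha$; on the second branch $z=\alpha-1$, the sandwich by $\tau^{-1}$ shifts it to $-2+(1-\alpha)(\alpha-2)/(\alpha-1)=-\alpha$. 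On the interior branch $|(1-\alpha)/z|\neq 1$ the operator-valued integral in~\eqref{equation problem} acts spectrally on $\tau$, reducing the task to the scalar identity $\int_0^\infty t^{s}(\lambda+t)^{-2}\,dt=\lambda^{s-1}/\sinc(\pi s)$ for $s=(1-\alpha)/z\in(-1,1)\setminus\{0\}$ and $\lambda>0$. That identity follows from the substitution $t=\lambda u$ together with the Beta-function value $\int_0^\infty u^s(1+u)^{-2}\,du=\Gamma(1+s)\Gamma(1-s)=\pi s/\sin(\pi s)$ obtained from Euler's reflection formula; the prefactor $K_{\alpha,z}=\sinc(\pi(1-\alpha)/z)$ in~\eqref{equation problem} exactly cancels the transcendental part and the remaining exponent $(1-\alpha)(z-1)/z+(1-\alpha)/z-1$ telescopes to $-\alpha$.

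Finally, for (iii) I would unpack the definition~\eqref{sets}. Away from the line $(1-\alpha)/z=1$ the set already reads $\{\sigma:\supp(\rho)\subseteq\supp(\sigma)\}$, so no work is required. On that line, in the common eigenbasis one has $\Pi(\rho)\tau\Pi(\rho)=\sum_{i:\,r_i>0} t_i\ketbra{i}{i}$, whose support equals $\supp(\rho)$ precisely when $t_i>0$ for every index $i$ with $r_i>0$, i.e.\ exactly when $\supp(\rho)\subseteq\supp(\tau)$. Combining this with (i) and (ii) turns the conclusion of Theorem~\ref{main Theorem} into the statement of the corollary. The only non-routine ingredient is the Beta-integral identity in the interior branch; everything else is exponent bookkeeping in a simultaneous eigenbasis, so the main obstacle is merely keeping the three branches of~\eqref{equation problem} aligned rather than any genuine analytic difficulty.
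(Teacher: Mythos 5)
Your proposal is correct, and it is essentially the intended derivation: the paper states this corollary without proof (deferring to the reference from which Theorem~\ref{main Theorem} is taken), and the simplification you carry out --- simultaneous diagonalization to reduce $\chi_{\alpha,z}$ and $Q_{\alpha,z}$, the Beta-integral/reflection-formula evaluation that cancels $K_{\alpha,z}$ on the interior branch, and the collapse of $S_{\alpha,z}(\rho)$ to $\supp(\rho)\subseteq\supp(\tau)$ --- matches the commuting-case manipulations the paper itself performs elsewhere (e.g.\ in the proof of Theorem~\ref{theorem relative entropy of magic qubits}). The only cosmetic caveat is the excluded point $s=(1-\alpha)/z=0$ (i.e.\ $\alpha=1$), which is handled by pointwise limits in the paper's conventions and does not affect the argument.
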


\subsection{Properties of the resource monotones based on the $\alpha$-$z$ R\'enyi relative entropies}
\label{distillation lower bound}
In this Appendix, we discuss some of the properties satisfied by the monotones based on the $\alpha$-$z$ R\'enyi relative entropies. These properties have already been discussed extensively in entanglement theory~\cite{vedral1998entanglement,wei2003geometric} for some specific values of the parameters. Moreover, the generalized robustness case has been discussed in~\cite{regula2017convex} for any convex resource theories. Here, we give a more general proof that holds for any convex resource theory and all the values of the parameters $(\alpha,z) \in \mathcal{D}$. In this case, the same definitions given in the main text hold with the set of stabilizer states $\F$ replaced by a general convex set of free states $\mathcal{F}$. We first prove that the monotones are convex or concave, depending on the value of the parameter.

\begin{lemma}
Let $(\alpha,z) \in \mathcal{D}$, $\{\rho_i\}_i$ a set of states and $\{p_i\}_i$ a set of weights, i.e., $p_i \geq 0$ and $\sum_i p_i=1$. Then, we have
\begin{align}
&\mathcal{Q}_{\alpha,z}^\frac{1}{\alpha}\Big(\sum \nolimits_i  p_i\rho_i\Big) \geq \sum \nolimits_i p_i \mathcal{Q}_{\alpha,z}^\frac{1}{\alpha}(\rho_i)\,, \quad\quad \alpha < 1 \\
&\mathcal{Q}_{\alpha,z}^\frac{1}{\alpha}\Big(\sum \nolimits_i  p_i\rho_i\Big) \leq \sum \nolimits_i p_i \mathcal{Q}_{\alpha,z}^\frac{1}{\alpha}(\rho_i)\,, \quad\quad \alpha > 1  \\
&\mathfrak{D}\Big(\sum \nolimits_i  p_i\rho_i\Big)\leq \sum \nolimits_i p_i \mathfrak{D}(\rho_i) \,.
\end{align}
\end{lemma}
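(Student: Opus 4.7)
The three inequalities assert that $\mathfrak{D}$ is convex in $\rho$, that $\mathcal{Q}_{\alpha,z}^{1/\alpha}$ is concave in $\rho$ for $\alpha<1$, and that it is convex for $\alpha>1$. In each case the strategy is to construct a specific feasible state in $\mathcal{F}$ from the $\rho_i$'s and their individual optimizers, and then invoke a convexity property of the underlying divergence. For the Umegaki case joint convexity of $D$ is enough on its own; for the R\'enyi cases I would add an auxiliary classical register whose weights are tuned so that the resulting DPI bound reproduces the claimed right-hand side exactly.

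For the Umegaki statement, let $\sigma_i^\ast\in\argmin_{\sigma\in\mathcal{F}}D(\rho_i\|\sigma)$ (existence via lower semicontinuity) and set $\bar\sigma:=\sum_i p_i\sigma_i^\ast\in\mathcal{F}$ by convexity of $\mathcal{F}$. Joint convexity of $D$ then gives
\begin{equation*}
\mathfrak{D}\Big(\textstyle\sum_i p_i\rho_i\Big)\;\leq\; D\Big(\textstyle\sum_i p_i\rho_i\,\Big\|\,\bar\sigma\Big)\;\leq\;\textstyle\sum_i p_i\,D(\rho_i\|\sigma_i^\ast)\;=\;\sum_i p_i\,\mathfrak{D}(\rho_i).
\end{equation*}

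For the R\'enyi statements, I would extend the problem by a classical register $X$: define $\rho_{XA}:=\sum_i p_i|i\rangle\!\langle i|\otimes\rho_i$ and $\sigma_{XA}:=\sum_i r_i|i\rangle\!\langle i|\otimes\sigma_i$ for an arbitrary probability vector $\{r_i\}$ and states $\sigma_i\in\mathcal{F}$. A direct block-diagonal calculation, exploiting that $\rho_{XA}^{\alpha/(2z)}$ and $\sigma_{XA}^{(1-\alpha)/z}$ share a common $X$-diagonal structure, gives $Q_{\alpha,z}(\rho_{XA}\|\sigma_{XA})=\sum_i p_i^\alpha r_i^{1-\alpha}Q_{\alpha,z}(\rho_i\|\sigma_i)$. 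Since $\mathrm{Tr}_X\sigma_{XA}=\sum_i r_i\sigma_i\in\mathcal{F}$ by convexity, DPI applied to the partial trace over $X$ yields
\begin{equation*}
\mathfrak{D}_{\alpha,z}\Big(\textstyle\sum_i p_i\rho_i\Big)\;\leq\;\frac{1}{\alpha-1}\log\sum_i p_i^\alpha r_i^{1-\alpha}\,Q_{\alpha,z}(\rho_i\|\sigma_i).
\end{equation*}
The decisive step is the elementary inequality
\begin{equation*}
\sum_i p_i^\alpha r_i^{1-\alpha} Q_i \;\geq\; \Big(\sum_i p_i Q_i^{1/\alpha}\Big)^\alpha\quad(\alpha>1),
\end{equation*}
with the reverse for $\alpha\in(0,1)$, obtained by applying Jensen's inequality to the convex (resp.\ concave) map $x\mapsto x^\alpha$ with weights $r_i$ and values $p_iQ_i^{1/\alpha}/r_i$; equality is attained at $r_i^\ast\propto p_i Q_i^{1/\alpha}$. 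Optimising first over $r$ and then independently over each $\sigma_i\in\mathcal{F}$, and using that $\mathcal{Q}_{\alpha,z}^{1/\alpha}(\rho)$ equals $\inf_\sigma Q_{\alpha,z}^{1/\alpha}(\rho\|\sigma)$ for $\alpha>1$ and $\sup_\sigma Q_{\alpha,z}^{1/\alpha}(\rho\|\sigma)$ for $\alpha<1$, collapses the bound to $\mathfrak{D}_{\alpha,z}(\bar\rho)\leq\frac{\alpha}{\alpha-1}\log\sum_i p_i\mathcal{Q}_{\alpha,z}^{1/\alpha}(\rho_i)$. Multiplying by $\alpha-1$, whose sign controls the direction of the final inequality, then exponentiating and taking the $1/\alpha$ power delivers both R\'enyi statements.

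The main obstacle is appreciating that the naive choice $r_i=p_i$ is too weak: it yields only $\mathcal{Q}_{\alpha,z}(\bar\rho)\leq\sum_i p_i\mathcal{Q}_{\alpha,z}(\rho_i)$, which after taking the $1/\alpha$ power does \emph{not} recover the stronger statement about $\mathcal{Q}_{\alpha,z}^{1/\alpha}$, because $x\mapsto x^{1/\alpha}$ is concave for $\alpha>1$ and Jensen then runs the wrong way. Using the $r$-optimised weights $r_i^\ast\propto p_i Q_{\alpha,z}(\rho_i\|\sigma_i)^{1/\alpha}$ is precisely the refinement that upgrades the exponent from $1$ to $1/\alpha$. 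Equivalently, this argument is an elementary, DPI-based proof of the joint convexity (resp.\ concavity) of $(\rho,\sigma)\mapsto Q_{\alpha,z}^{1/\alpha}(\rho\|\sigma)$ on $\mathcal{D}$, circumventing any need to cite deeper non-commutative $L_p$-convexity theorems; with this in hand, the rest is routine.
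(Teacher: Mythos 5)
Your proof is correct and follows essentially the same route as the paper: joint convexity of the Umegaki relative entropy for the third inequality, and for the R\'enyi cases the classical-register/block-diagonal construction with DPI under the partial trace, where the weights $r_i^\ast\propto p_i\,Q_{\alpha,z}(\rho_i\|\sigma_i)^{1/\alpha}$ are exactly the probability vector $t_k$ the paper substitutes directly. The only cosmetic difference is that you present the optimal weights as the equality case of a Jensen bound optimised over $r$, whereas the paper simply plugs them in.
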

\begin{proof}
We first consider case $\alpha>1$. We denote with $\tau_i$ an optimizer of $\rho_i$, i.e a state such that $\mathcal{Q}_{\alpha,z}(\rho_i) = Q_{\alpha,z}(\rho_i\| \tau_i)$. We define the probability vector
\begin{equation}
t_k = \frac{p_k Q^\frac{1}{\alpha}_{\alpha,z}(\rho_k \| \tau_k)}{\sum_i p_i Q^\frac{1}{\alpha}_{\alpha,z}(\rho_i \| \tau_i)} \,.
\end{equation}
We then have
\begin{align}
\label{free}
    \mathcal{Q}_{\alpha,z}\Big(\sum \nolimits_i  p_i\rho_i\Big) &\leq Q_{\alpha,z} \left(\sum \nolimits_i  p_i\rho_i \middle\| \sum \nolimits_i  t_i\tau_i \right ) \\
    \label{DPI}
    &\leq Q_{\alpha,z} \left(\sum \nolimits_i  p_i\rho_i \otimes \ketbra{i}{i} \middle\| \sum \nolimits_i  t_i\tau_i \otimes \ketbra{i}{i} \right ) \\
    & = \sum \nolimits_i  p_i^\alpha t_i^{1-\alpha}  Q_{\alpha,z} \left(\rho_i \middle\| \tau_i  \right ) \\
    &= \left(\sum \nolimits_i p_i \mathcal{Q}_{\alpha,z}^\frac{1}{\alpha}(\rho_i)\right)^\alpha \,.
\end{align}
In the first inequality~\eqref{free} we used that a mixture of free states is free and~\eqref{DPI} follows for the DPI under partial trace. The last two steps follow by direct computation. The case $\alpha<1$ is similar and the result for the relative entropy follows from joint convexity of the relative entropy (see, e.g.,~\cite{baumgratz2014quantifying}).
\end{proof}
We now discuss the strong monotonicity property (see e.g.,~\cite{vedral1998entanglement,seddon2019quantifying,baumgratz2014quantifying,regula2017convex}). 
An instrument is a collection of complete positive trace-nonincreasing operations $\{\Phi_i\}$ such that the overall transformation $\sum_i \Phi_i$ is trace-preserving. We say that $\{\Phi_i\}$ is a free probabilistic instrument if each operation maps a free state into a free state up to a normalization factor, i.e., $\Phi_i(\sigma) \propto \sigma' \in \mathcal{F}$ for any $\sigma \in \mathcal{F}$. We follow the definition given in~\cite{regula2022tight}.
\begin{definition}
Let $\{\Phi_i\}$ be a free probabilistic instrument.  A resource monotone $\mathcal{M}$ is called a strong monotone if it decreases on average under the action of $\{\Phi_i\}$, i.e., it satisfies
\begin{equation}
\mathcal{M}(\rho) \geq \sum_i \Tr(\Phi_i(\rho)) \mathcal{M} \left( \frac{\Phi_i(\rho)}{\Tr(\Phi_i(\rho))} \right) \,.
\end{equation}
\end{definition}
We first prove an auxiliary lemma.
\begin{lemma}
\label{montonicity divergences}
    Let $(\alpha,z) \in \mathcal{D}$, $\rho \in \state$ and $\{ \Phi_i \}$ be an instrument. Then, we have
    \begin{align}
&Q_{\alpha,z}^\frac{1}{\alpha}(\rho\|\sigma) \leq \sum \nolimits_i p_i Q_{\alpha,z}^\frac{1}{\alpha}(\rho_i \| \sigma_i)\,, \quad\quad \alpha < 1 \\
&Q_{\alpha,z}^\frac{1}{\alpha}(\rho\|\sigma) \geq \sum \nolimits_i p_i Q_{\alpha,z}^\frac{1}{\alpha}(\rho_i \|\sigma_i)\,, \quad\quad \alpha > 1  \\
&D(\rho \| \sigma) \geq  \sum \nolimits_i p_i D(\rho_i \|\sigma_i) \,,
\end{align}
 where $p_i = \Tr(\Phi_i(\rho))$, $\rho_i = \Phi_i(\rho)/\Tr(\Phi_i(\rho))$, and $\sigma_i = \Phi_i(\sigma)/\Tr(\Phi_i(\sigma))$.
\end{lemma}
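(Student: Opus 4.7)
The plan is to lift the instrument $\{\Phi_i\}$ to an ordinary CPTP map by appending a classical flag register, invoke the DPI for $D_{\alpha,z}$ on $\mathcal{D}$, and then convert the resulting bound on $Q_{\alpha,z}$ into a bound on $Q_{\alpha,z}^{1/\alpha}$ through an elementary weighted AM--GM inequality.

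Concretely, I would introduce an auxiliary register $R$ with orthonormal basis $\{\ket{i}\}$ and define the map $\mathcal{E}(\omega):= \sum_i \Phi_i(\omega)\otimes \ketbra{i}{i}$, which is CPTP because $\sum_i \Phi_i$ is trace-preserving. Then $\mathcal{E}(\rho) = \sum_i p_i\, \rho_i \otimes \ketbra{i}{i}$ and $\mathcal{E}(\sigma) = \sum_i q_i\, \sigma_i \otimes \ketbra{i}{i}$, with $q_i := \Tr(\Phi_i(\sigma))$ and $\sum_i q_i = 1$. The two images are block-diagonal in $R$, so the functional calculus appearing in the definition of $D_{\alpha,z}$ splits across blocks, yielding the key identity $Q_{\alpha,z}(\mathcal{E}(\rho)\|\mathcal{E}(\sigma)) = \sum_i p_i^\alpha q_i^{1-\alpha}\, Q_{\alpha,z}(\rho_i\|\sigma_i)$. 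Applying DPI for $D_{\alpha,z}$ and tracking the sign of $\alpha-1$ through the exponential then gives $Q_{\alpha,z}(\rho\|\sigma) \geq \sum_i p_i^\alpha q_i^{1-\alpha}\, Q_{\alpha,z}(\rho_i\|\sigma_i)$ when $\alpha>1$, and the reverse inequality when $\alpha<1$.

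The remaining step is purely scalar. From weighted AM--GM (equivalently, the concavity/convexity of the perspective of $t\mapsto t^\alpha$), the function $(x,y)\mapsto x^\alpha y^{1-\alpha}$ is superadditive on the positive orthant for $\alpha\in(0,1)$ and subadditive for $\alpha>1$. Setting $x_i = p_i\, Q_{\alpha,z}^{1/\alpha}(\rho_i\|\sigma_i)$ and $y_i = q_i$, using $\sum_i q_i = 1$, and raising to the power $1/\alpha$ (which is monotone since $\alpha>0$) converts the $Q_{\alpha,z}$ bound above into precisely the two claimed inequalities on $Q_{\alpha,z}^{1/\alpha}$. The Umegaki case can be obtained either as the pointwise $\alpha\to 1$ limit of the previous line, or more directly by applying DPI for $D$ together with the block-diagonal identity $D(\mathcal{E}(\rho)\|\mathcal{E}(\sigma)) = D(\{p_i\}\|\{q_i\}) + \sum_i p_i D(\rho_i\|\sigma_i)$ and discarding the non-negative classical KL term.

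I expect the main point of care to be the support/finiteness side-conditions of $D_{\alpha,z}$: one should verify that, with the convention $0\cdot\infty = 0$ and the fact that trivial branches with $p_i=0$ contribute nothing on either side, the support conditions on each $(\rho_i,\sigma_i)$ are inherited from those on $(\rho,\sigma)$ through $\mathcal{E}$, so that every term in the derivation is well-defined on all of $\mathcal{D}$. Beyond this accounting no conceptual obstacle is anticipated; the argument is the standard strong-monotonicity-from-DPI trick, with the only non-routine ingredient being the AM--GM step needed to pass from $Q_{\alpha,z}$ to the $1/\alpha$-power form in which the probabilities $p_i$ appear linearly.
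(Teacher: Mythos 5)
Your proof is correct and follows essentially the same route as the paper's: lift the instrument to a CPTP map with a classical flag register, apply the DPI for $D_{\alpha,z}$, evaluate $Q_{\alpha,z}$ on the block-diagonal images to get $\sum_i p_i^\alpha q_i^{1-\alpha}Q_{\alpha,z}(\rho_i\|\sigma_i)$, and then eliminate the $q_i$. The only difference is cosmetic: you dispose of the $q_i$ via sub/superadditivity of the perspective function $x^\alpha y^{1-\alpha}$, whereas the paper explicitly minimizes over $\{q_i\}$ with Lagrange multipliers — these are two phrasings of the same scalar optimization, with the same optimizer $q_k^* \propto p_k Q_{\alpha,z}^{1/\alpha}(\rho_k\|\sigma_k)$.
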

\begin{proof}
    We first consider the case $\alpha > 1$. We consider the quantum channel $\Phi(\cdot) = \sum_i\Phi_i(\cdot) \otimes \ketbra{i}{i}$. Then, the DPI inequality implies that
    \begin{align}
    \label{DPI monotones}
        Q_{\alpha,z}(\rho\|\sigma) &\geq Q_{\alpha,z}\left(\sum_i p_i \rho_i \otimes \ketbra{i}{i} \middle\| \sum_i q_i \sigma_i \otimes \ketbra{i}{i}\right) \\
        \label{classical-quantum}
        &= \sum_i p_i^\alpha q_i^{1-\alpha} Q_\alpha(\rho_i \|\sigma_i) \,,
    \end{align}
    where $p_i = \Tr(\Phi_i(\rho))$, $q_i = \Tr(\Phi_i(\sigma))$, $\rho_i = \Phi_i(\rho)/\Tr(\Phi_i(\rho))$, and $\sigma_i = \Phi_i(\sigma)/\Tr(\Phi_i(\sigma))$. We now lower bound~\eqref{classical-quantum} by taking the infimum over the coefficients $q_i$. This problem can be solved using standard Lagrange multipliers techniques. 
The solution is given by
\begin{equation}
q^*_k = \frac{p_k Q^\frac{1}{\alpha}_{\alpha,z}(\rho_k \| \sigma_k)}{\sum_i p_i Q^\frac{1}{\alpha}_{\alpha,z}(\rho_i \| \sigma_i)} \,.
\end{equation}
By substitution, we obtain 
\begin{equation}
Q_{\alpha,z}(\rho \| \sigma) \geq \left(\sum_i p_i Q^\frac{1}{\alpha}_{\alpha,z}(\rho_i\|\sigma_i)\right)^\alpha \,.
\end{equation}
The other two cases are similar.
\end{proof}

The resource monotones $\mathcal{Q}_{\alpha,z}^\frac{1}{\alpha}$ satisfy the strong monotonicity for $\alpha > 1$ and strong antimonotoncity for $\alpha < 1$.

\begin{corollary}
\label{strong monotonicity}
    Let $(\alpha,z) \in \mathcal{D}$, $\rho \in \state$ and $\{\Phi_i\}$ be a free probabilistic instrument. Then, we have
    \begin{align}
&\mathcal{Q}_{\alpha,z}^\frac{1}{\alpha} (\rho)\leq \sum \nolimits_i p_i \mathcal{Q}_{\alpha,z}^\frac{1}{\alpha}(\rho_i)\,, \quad\quad \alpha < 1 \\
&\mathcal{Q}^\frac{1}{\alpha} _{\alpha,z}(\rho) \geq \sum \nolimits_i p_i \mathcal{Q}_{\alpha,z}^\frac{1}{\alpha}(\rho_i)\,, \quad\quad \alpha > 1  \\
&\mathfrak{D}(\rho) \geq  \sum \nolimits_i p_i \mathfrak{D}(\rho_i) \,,
\end{align}
 where $p_i = \Tr(\Phi_i(\rho))$, and $\rho_i = \Phi_i(\rho)/\Tr(\Phi_i(\rho))$.
\end{corollary}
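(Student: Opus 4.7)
The plan is to reduce the corollary to Lemma~\ref{montonicity divergences} by picking, on the ``input'' side, an optimal free state in the definition of the monotone, and then using the defining property of a free probabilistic instrument to ensure that the images of this free state (suitably normalized) remain free, so that on the ``output'' side we can lower/upper bound the divergences by the corresponding monotone.

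Concretely, I would first invoke lower semicontinuity of the $\alpha$-$z$ R\'enyi relative entropies (as noted after~\eqref{definitions}) to pick an optimizer $\sigma^{*} \in \mathcal{F}$ with $\mathcal{Q}_{\alpha,z}(\rho) = Q_{\alpha,z}(\rho \| \sigma^{*})$ when $\alpha > 1$ and $\alpha < 1$, and $\mathfrak{D}(\rho) = D(\rho\|\sigma^{*})$ in the relative-entropy case. Next, apply Lemma~\ref{montonicity divergences} to the pair $(\rho, \sigma^{*})$ and the instrument $\{\Phi_i\}$, yielding
\begin{equation}
Q_{\alpha,z}^{1/\alpha}(\rho\|\sigma^{*}) \;\lessgtr\; \sum_i p_i\, Q_{\alpha,z}^{1/\alpha}(\rho_i\|\sigma_i^{*}),
\end{equation}
with the inequality direction depending on whether $\alpha<1$ or $\alpha>1$ (and the analogous relation for $D$), where $p_i = \Tr(\Phi_i(\rho))$, $\rho_i = \Phi_i(\rho)/p_i$, and $\sigma_i^{*} = \Phi_i(\sigma^{*})/\Tr(\Phi_i(\sigma^{*}))$.

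The crucial observation is that because $\{\Phi_i\}$ is a \emph{free} probabilistic instrument and $\sigma^{*}$ is free, each normalized image $\sigma_i^{*}$ lies in $\mathcal{F}$. Therefore, by the variational definition of $\mathcal{Q}_{\alpha,z}$ and $\mathfrak{D}$, we have $Q_{\alpha,z}(\rho_i\|\sigma_i^{*}) \geq \mathcal{Q}_{\alpha,z}(\rho_i)$ when $\alpha>1$, $Q_{\alpha,z}(\rho_i\|\sigma_i^{*}) \leq \mathcal{Q}_{\alpha,z}(\rho_i)$ when $\alpha<1$ (recall that $\mathcal{Q}_{\alpha,z}(\rho_i)$ is a minimum when $\alpha>1$ but a maximum when $\alpha<1$, because of the sign of $\alpha-1$ in the exponential), and $D(\rho_i\|\sigma_i^{*}) \geq \mathfrak{D}(\rho_i)$. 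Chaining this bound with the inequality from Lemma~\ref{montonicity divergences} gives exactly the three claimed inequalities.

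There is no real obstacle beyond bookkeeping: the main care is to track the direction of inequalities across the $\alpha<1$ versus $\alpha>1$ regimes (which flip because $\alpha-1$ changes sign), and to ensure that the optimizers exist, which follows from lower semicontinuity and compactness of $\mathcal{F}$. The proof for the relative-entropy case is entirely parallel and corresponds to the $\alpha\to 1$ limit; alternatively it follows directly by choosing the optimal $\sigma^{*}$ for $\mathfrak{D}(\rho)$ and applying the third inequality of Lemma~\ref{montonicity divergences}.
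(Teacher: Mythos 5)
Your proposal is correct and follows essentially the same route as the paper: choose an optimizer $\sigma^{*}$ for $\rho$, apply Lemma~\ref{montonicity divergences} to the pair $(\rho,\sigma^{*})$, and use freeness of the instrument to conclude that each $\sigma_i^{*}$ is free and hence feasible in the variational definition of $\mathcal{Q}_{\alpha,z}(\rho_i)$. Your extra care about the inequality direction flipping with the sign of $\alpha-1$ is accurate and matches the paper's (more terse) argument.
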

\begin{proof}
The proof is a straightforward application of Lemma~\ref{montonicity divergences}. We consider only the case $\alpha >1$ as the other cases are similar. Indeed, we have
\begin{equation}
    \mathcal{Q}^\frac{1}{\alpha}_{\alpha,z}(\rho) = Q_{\alpha,z}^\frac{1}{\alpha}(\rho\| \tau) \geq  \sum_i p_i Q_{\alpha,z}^\frac{1}{\alpha}(\rho_i\| \tau_i) \geq \sum_i p_i \mathcal{Q}_{\alpha,z}^\frac{1}{\alpha}(\rho_i) \,,
\end{equation}
where we denoted by $\tau$ the optimizer of $\rho$. Here, we used Lemma~\ref{montonicity divergences} in the first inequality and the second inequality follows from the assumption that the instrument is free. 
\end{proof}
Since $\lim_{\alpha \rightarrow \infty }\mathcal{Q}_{\alpha,\alpha}^\frac{1}{\alpha}(\rho) = \Lambda^{+}(\rho)$, where $\Lambda^{+}(\rho)$ is the generalized robustness of resource, the above result for $\alpha \rightarrow \infty$ recovers the one given in~\cite{regula2017convex}. Moreover, the result for the Umegaki relative entropy recovers the one in~\cite{vedral1998entanglement}. The result for $\alpha=1/2$ was obtained for entanglement theory in~\cite{wei2003geometric}. Finally, we note that in~\cite{shao2015fidelity}, the authors proved that the square root fidelity measure does not satisfy the strong monotonicity property. For this reason, they argue that the fidelity measure is not a good measure of coherence. Here, we show that instead, the fidelity of coherence (with no square root) does satisfy the strong monotonicity property.

\subsection{$D_{\max}$ for single-qubit states}
In this Appendix, we find an explicit form of the max-relative entropy as a function of the Bloch vectors. We denote $Q_{\max}(\rho \| \sigma):=\exp(D_{\max}(\rho \| \sigma))$. Note that if $\textup{supp}(\rho) \subseteq \textup{supp}(\sigma)$, then $Q_{\max}(\rho \| \sigma)$ is infinite.
\begin{lemma}
\label{robustness qubit}
Let $\rho,\sigma$ two single-qubit states such that $\textup{supp}(\rho) \subseteq \textup{supp}(\sigma)$ and with Bloch vectors $\vec{r}$ and $\vec{s}$, respectively. Then
\begin{align}
&Q_{\max}(\rho \| \sigma) = 
(1-|\vec{s}|^2)^{-1}\left(1-\vec{r}\cdot \vec{s}+\sqrt{(1-\vec{r}\cdot \vec{s})^2-(1-|\vec{r}|^2)(1-|\vec{s}|^2)}\right)\,.
\end{align}
\end{lemma}
\begin{proof}
We consider a state $\rho$ with full support. The result for pure states follows similarly and can also be recovered by taking the limit. We can write a qubit state $\rho$ with Bloch vector $\vec{r}$ as
\begin{equation}
\rho = \frac{1}{2}\begin{pmatrix}
1+r_z & r_x-ir_y \\
r_x-ir_y & 1-r_z
\end{pmatrix} \,.
\end{equation}
The eigenvectors of $\rho$ are
\renewcommand\arraystretch{1.5}
\begin{equation}
\ket{r_1}=\frac{1}{\sqrt{2|\vec{r}|}}
\begin{pmatrix}
\frac{r_x-ir_y}{\sqrt{|\vec{r}|-r_z}}\\
\sqrt{|\vec{r}|-r_z}
\end{pmatrix} \,,\quad 
\ket{r_2}=\frac{1}{\sqrt{2|\vec{r}|}}
\begin{pmatrix}
-\frac{r_x-ir_y}{\sqrt{|\vec{r}|+r_z}}\\
\sqrt{|\vec{r}|+r_z} \,,
\end{pmatrix}
\end{equation}
with eigenvalues $r_1=(1+|\vec{r}|)/2$ and $r_2=(1-|\vec{r}|)/2$, respectively. By explicit computation, the matrix elements with a state $\sigma$ with Bloch vector $\vec{s}$ are
\begin{align}
&c_{11}=1-c_{22}=\bra{r_1}\sigma\ket{r_1}= \frac{1}{2}\left(1+\frac{\vec{r}\cdot \vec{s}}{|\vec{r}|}\right) \,, \\
&c_{12} =  \bra{r_1}\sigma\ket{r_2}=  \frac{1}{2|\vec{r}|}\frac{1}{\sqrt{|\vec{r}|^2-r_z^2}}\big(r_z(r_xs_x+r_ys_y))-(|\vec{r}|^2-r_z^2)s_z-i|\vec{r}|(r_xs_y-r_ys_x)\big) \,.
\end{align}
Moreover, we also have $c_{21}^*=c_{12}$.
Therefore, in the eigenbasis of $\rho$ we obtain
\begin{align}
\rho^\frac{1}{2}\sigma\rho^\frac{1}{2} = \begin{pmatrix}
r_1c_{11} & (r_1r_2)^\frac{1}{2}c_{12} \\
(r_1r_2)^\frac{1}{2}c_{21} & r_2c_{22}
\end{pmatrix} \,.
\end{align}
A straightforward calculation gives
\begin{equation}
\lambda_{\pm} = \frac{1+\vec{r}\cdot\vec{s}}{4} \pm \frac{1}{2}\sqrt{\left(\frac{1+\vec{r}\cdot\vec{s}}{2}\right)^2 - \left(\frac{1-|\vec{s}|^2}{4}\right)(1-|\vec{r}|^2)} \,.
\end{equation}
We recall that $Q_{\max}(\rho \| \sigma) = \| \rho^\frac{1}{2} \sigma^{-1} \rho^\frac{1}{2} \|_{\infty}$, where the infinity norm of a positive operator is equal to its maximum eigenvalue~\cite{tomamichel2015quantum}.
To calculate  we first take the inverse of $\sigma$ to get
\begin{equation}
\sigma^{-1} = \frac{4}{1-|\vec{s}|^2} \frac{1}{2}\begin{pmatrix}
1-s_z & -s_x+is_y \\
-s_x-is_y & 1+s_z
\end{pmatrix} \,.
\end{equation}
Hence, we obtain 
\begin{equation}
\rho^\frac{1}{2}\sigma^{-1}\rho^\frac{1}{2} = \frac{4}{1-|\vec{s}|^2} \rho^\frac{1}{2}\sigma_{-}\rho^\frac{1}{2} \,,
\end{equation}
where we denoted $\sigma_{-}$ the qubit state with Bloch vector $-\vec{s}$. We then use the previous result to get that the above product has eigenvalues
\begin{equation}
\frac{4}{1-|\vec{s}|^2}\left(\frac{1-\vec{r}\cdot\vec{s}}{4} \pm \frac{1}{2} \sqrt{\left(\frac{1-\vec{r}\cdot\vec{s}}{2}\right)^2- \left(\frac{1-|\vec{s}|^2}{4}\right)(1-|\vec{r}|^2)} \right) \,.
\end{equation}
By choosing the maximum eigenvalue we get the claimed result.
\end{proof}

\subsection{Exchanging the limits with the minimization}
\label{convergence monotone}
We first show that the function $D_{\alpha,\alpha-1}(\rho \| \sigma)$ converges to $D_{\max}(\rho \| \sigma)$. The next proof is a straightforward extension of the one given in~\cite[Section C]{muller2013quantum}. 
\begin{lemma}
We have that $\lim_{\alpha \rightarrow \infty} D_{\alpha,\alpha-1}(\rho \|\sigma)=D_{\max}(\rho \|\sigma)$.
\end{lemma}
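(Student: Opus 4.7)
The plan is to reduce the statement to the standard convergence of Schatten $p$-norms to the operator norm as $p\to\infty$, combined with continuity of the matrix power map.

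First I would dispense with the support case. If $\rho \not\ll \sigma$, then by definition $D_{\alpha,\alpha-1}(\rho\|\sigma) = +\infty$ for all $\alpha>1$, and also $D_{\max}(\rho\|\sigma)=+\infty$, so both sides agree trivially. Henceforth I assume $\supp(\rho)\subseteq\supp(\sigma)$; restricting all operators to $\supp(\sigma)$ I may further assume $\sigma>0$, and I let $d$ denote the dimension of this support.

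Next I would rewrite the divergence in a form that makes the limit transparent. Using $(1-\alpha)/(\alpha-1)=-1$, define the positive operator
\begin{equation}
A_\alpha := \rho^{\frac{\alpha}{2(\alpha-1)}}\,\sigma^{-1}\,\rho^{\frac{\alpha}{2(\alpha-1)}},
\end{equation}
so that $D_{\alpha,\alpha-1}(\rho\|\sigma) = \tfrac{1}{\alpha-1}\log\Trm A_\alpha^{\alpha-1} = \log\|A_\alpha\|_{\alpha-1}$, where $\|\cdot\|_p$ is the Schatten $p$-norm. Since $\alpha/(2(\alpha-1))\to 1/2$ as $\alpha\to\infty$, the spectral theorem together with the continuity of the scalar map $t\mapsto t^s$ on $[0,\|\rho\|_\infty]$ (continuous in $s$ for $s>0$) implies $A_\alpha\to A_\infty:=\rho^{1/2}\sigma^{-1}\rho^{1/2}$ in operator norm. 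Moreover $\log\|A_\infty\|_\infty$ is precisely $D_{\max}(\rho\|\sigma)$ (see e.g.\ the standard identity recalled in the proof of Lemma~\ref{robustness qubit}).

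The core step is then to show $\log\|A_\alpha\|_{\alpha-1}\to\log\|A_\infty\|_\infty$. I would invoke the finite-dimensional sandwich $\|B\|_\infty \leq \|B\|_p \leq d^{1/p}\|B\|_\infty$ with $B=A_\alpha$ and $p=\alpha-1$, which after taking logs yields
\begin{equation}
\log\|A_\alpha\|_\infty \;\leq\; \log\|A_\alpha\|_{\alpha-1} \;\leq\; \frac{\log d}{\alpha-1}+\log\|A_\alpha\|_\infty.
\end{equation}
As $\alpha\to\infty$ the correction $\log d/(\alpha-1)$ vanishes, and continuity of $B\mapsto\|B\|_\infty$ together with $A_\alpha\to A_\infty$ forces both bounds to the common limit $\log\|A_\infty\|_\infty=D_{\max}(\rho\|\sigma)$, giving the result by the squeeze theorem.

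The main obstacle is really just the joint-continuity argument in the limit step: one must justify that $\|A_\alpha\|_\infty\to\|A_\infty\|_\infty$ despite the exponent in $A_\alpha$ itself varying with $\alpha$. This is handled by splitting $A_\alpha - A_\infty$ using the triangle inequality and controlling $\|\rho^{s}-\rho^{1/2}\|_\infty$ for $s=\alpha/(2(\alpha-1))$ via a uniform spectral bound on $\rho$; no delicate exchange of limits with an infimum is required here, which is the point at which the extension to $\mathfrak{D}_{\alpha,z}\to\mathfrak{D}_{\max}$ (rather than $D_{\alpha,z}\to D_{\max}$) would need additional care, but that is deferred to a separate argument about the minimization.
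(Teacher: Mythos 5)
Your proof is correct and uses essentially the same ingredients as the paper's: the convergence $\rho^{\alpha/(\alpha-1)}\to\rho$ (equivalently $A_\alpha\to A_\infty$) and the dimension-dependent comparison between the Schatten $(\alpha-1)$-norm and the operator norm. The only difference is cosmetic ordering --- the paper first swaps $\rho^{\alpha/(\alpha-1)}$ for $\rho$ via the reverse triangle inequality at the level of $\|\cdot\|_{\alpha-1}$ and then lets $\|\cdot\|_{\alpha-1}\to\|\cdot\|_\infty$, whereas you first pass to $\|\cdot\|_\infty$ via the sandwich bound and then invoke continuity of the operator norm.
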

\begin{proof}
We first write
\begin{equation}
\|\sigma^{-\frac{1}{2}}\rho^\frac{\alpha}{\alpha-1}\sigma^{-\frac{1}{2}} \|_{\alpha-1} = \|\sigma^{-\frac{1}{2}}\rho^\frac{\alpha}{\alpha-1}\sigma^{-\frac{1}{2}} \|_{\alpha-1} + \|\sigma^{-\frac{1}{2}}\rho\sigma^{-\frac{1}{2}} \|_{\alpha-1} - \|\sigma^{-\frac{1}{2}}\rho\sigma^{-\frac{1}{2}} \|_{\alpha-1} \,.
\end{equation}
The reverse triangle inequality implies that
\begin{equation}
\left|\|\sigma^{-\frac{1}{2}}\rho^\frac{\alpha}{\alpha-1}\sigma^{-\frac{1}{2}} \|_{\alpha-1} - \|\sigma^{-\frac{1}{2}}\rho\sigma^{-\frac{1}{2}} \|_{\alpha-1}\right| \leq \|\sigma^{-\frac{1}{2}}\rho^\frac{\alpha}{\alpha-1}\sigma^{-\frac{1}{2}}- \sigma^{-\frac{1}{2}}\rho\sigma^{-\frac{1}{2}} \|_{\alpha-1} \,.
\end{equation}
So we have that
\begin{align}
&\lim \limits_{\alpha \rightarrow \infty}D_{\alpha,\alpha-1}(\rho \|\sigma) \\
&\quad = \lim \limits_{\alpha \rightarrow \infty} \log{\|\sigma^{-\frac{1}{2}}\rho^\frac{\alpha}{\alpha-1}\sigma^{-\frac{1}{2}}\|_{\alpha-1}} \\
& \quad \leq \log{ \left( \lim \limits_{\alpha \rightarrow \infty} \|\sigma^{-\frac{1}{2}}\rho\sigma^{-\frac{1}{2}} \|_{\alpha-1} + \lim \limits_{\alpha \rightarrow \infty} \|\sigma^{-\frac{1}{2}}\rho^\frac{\alpha}{\alpha-1}\sigma^{-\frac{1}{2}}- \sigma^{-\frac{1}{2}}\rho\sigma^{-\frac{1}{2}} \|_{\alpha-1}\right)} \\
& \quad \leq  \log{ \left( \lim \limits_{\alpha \rightarrow \infty} \|\sigma^{-\frac{1}{2}}\rho\sigma^{-\frac{1}{2}} \|_{\alpha-1} + (\text{dim}\mathcal{H}) \lim \limits_{\alpha \rightarrow \infty}  \|\sigma^{-\frac{1}{2}}\rho^\frac{\alpha}{\alpha-1}\sigma^{-\frac{1}{2}}- \sigma^{-\frac{1}{2}}\rho\sigma^{-\frac{1}{2}} \|_{\infty}\right)} \\
& \quad =  \log{ \|\sigma^{-\frac{1}{2}}\rho \sigma^{-\frac{1}{2}} \|_{\infty}} \\
& \quad = D_{\max}(\rho \|\sigma) \,.
\end{align}
Likewise,
\begin{align}
&\lim \limits_{\alpha \rightarrow \infty}D_{\alpha,\alpha-1}(\rho \|\sigma) \\
&\quad = \lim \limits_{\alpha \rightarrow \infty} \log{\|\sigma^{-\frac{1}{2}}\rho^\frac{\alpha}{\alpha-1}\sigma^{-\frac{1}{2}}\|_{\alpha-1}} \\
& \quad \geq \log{ \left( \lim \limits_{\alpha \rightarrow \infty} \|\sigma^{-\frac{1}{2}}\rho\sigma^{-\frac{1}{2}} \|_{\alpha-1} - \lim \limits_{\alpha \rightarrow \infty} \|\sigma^{-\frac{1}{2}}\rho^\frac{\alpha}{\alpha-1}\sigma^{-\frac{1}{2}}- \sigma^{-\frac{1}{2}}\rho\sigma^{-\frac{1}{2}} \|_{\alpha-1}\right)} \\
& \quad \geq \log{ \left( \lim \limits_{\alpha \rightarrow \infty} \|\sigma^{-\frac{1}{2}}\rho\sigma^{-\frac{1}{2}} \|_{\alpha-1} - (\text{dim}\mathcal{H}) \lim \limits_{\alpha \rightarrow \infty}  \|\sigma^{-\frac{1}{2}}\rho^\frac{\alpha}{\alpha-1}\sigma^{-\frac{1}{2}}- \sigma^{-\frac{1}{2}}\rho\sigma^{-\frac{1}{2}} \|_{\infty}\right)} \\
& \quad =  \log{ \|\sigma^{-\frac{1}{2}}\rho \sigma^{-\frac{1}{2}} \|_{\infty}} \\
& \quad = D_{\max}(\rho \|\sigma) \,.
\end{align}
\end{proof}

We now prove that $D_{\alpha,\alpha-1}(\rho \| \sigma)$ and $D_{\alpha,1-\alpha}(\rho \| \sigma)$ are monotonically increasing in $\alpha$. 

\begin{lemma}
\label{monotonicity}
The function $\alpha \mapsto D_{\alpha,1-\alpha}(\rho \| \sigma)$ is montonically increasing on $(0,1)$ and $\alpha \mapsto D_{\alpha,\alpha-1}(\rho \| \sigma)$ is   monotonically increasing on $(1,\infty)$.
\end{lemma}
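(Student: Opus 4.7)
The plan is to prove both monotonicity statements simultaneously by reducing them to convexity of an auxiliary function, then invoking a standard secant-slope argument. Define
$$
\phi_-(\alpha) := \log Q_{\alpha,1-\alpha}(\rho\|\sigma) \quad\text{on}\ (0,1), \qquad \phi_+(\alpha) := \log Q_{\alpha,\alpha-1}(\rho\|\sigma) \quad\text{on}\ (1,\infty).
$$
Since $(\alpha-1)D_{\alpha,z(\alpha)}(\rho\|\sigma)=\phi_\pm(\alpha)$ and the Umegaki limit $\lim_{\alpha\to 1}D_{\alpha,z(\alpha)}(\rho\|\sigma)=D(\rho\|\sigma)$ is finite, both $\phi_\pm$ extend continuously to $\alpha=1$ with $\phi_\pm(1)=0$. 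Granted this, the statement reduces to the elementary convex-analytic fact that, for a convex function $\phi$ on an interval $I\ni 1$ with $\phi(1)=0$, the secant slope $\alpha\mapsto \phi(\alpha)/(\alpha-1)$ is non-decreasing on $I\setminus\{1\}$. So the whole proof hinges on showing that $\phi_\pm$ is convex on its interval.

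My first step would be to rewrite $Q$ in a Schatten-norm form convenient for interpolation. Using $\operatorname{Tr}(XX^*)^p=\|X\|_{2p}^{2p}$ with $X=\rho^{c}\sigma^{1/2}$ for the $(-)$ case and $X=\rho^{c}\sigma^{-1/2}$ for the $(+)$ case, one gets
\begin{align*}
Q_{\alpha,1-\alpha}(\rho\|\sigma) &= \bigl\|\rho^{\alpha/(2(1-\alpha))}\sigma^{1/2}\bigr\|_{2(1-\alpha)}^{2(1-\alpha)},\\
Q_{\alpha,\alpha-1}(\rho\|\sigma) &= \bigl\|\rho^{\alpha/(2(\alpha-1))}\sigma^{-1/2}\bigr\|_{2(\alpha-1)}^{2(\alpha-1)},
\end{align*}
the second identity being valid exactly when $\supp(\rho)\subseteq\supp(\sigma)$ (otherwise $D_{\alpha,\alpha-1}=+\infty$ and the statement is vacuous). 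Thus $\phi_\pm(\alpha)=2(1-\alpha)\log\|A_\pm(\alpha)\|_{2|1-\alpha|}$ for the appropriate holomorphic operator family $A_\pm$.

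The convexity of $\phi_\pm$ I would then derive from a Stein–Hirschman-type interpolation: the three-lines theorem applied to the holomorphic family
$$
A(z) = \rho^{z/(2(1-z))}\,\sigma^{1/2}, \qquad z\in\{0<\operatorname{Re}z<1\},
$$
shows that $z\mapsto \log\|A(z)\|_{p(z)}$ with $p(z)=2(1-\operatorname{Re}z)$ is convex in $\operatorname{Re}z$ — which is exactly $\phi_-(\alpha)$ after scaling by $2(1-\alpha)$. The $\phi_+$ case is handled by the analogous family with $\sigma^{-1/2}$ in place of $\sigma^{1/2}$, the support condition ensuring analyticity on the relevant strip. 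As a sanity check, in the commuting case convexity reduces to log-convexity of the Laplace transform of a positive measure via Cauchy–Schwarz, and the Petz–Rényi monotonicity result is recovered.

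The main obstacle is exactly this convexity step: the simultaneous dependence of both the inner operator power $\alpha/(2|1-\alpha|)$ and the outer Schatten index $2|1-\alpha|$ on $\alpha$ obstructs a direct Hölder-type argument, forcing the use of complex interpolation. The technical work lies in verifying analyticity of $A(z)$ together with the requisite boundary $L^p$-estimates so that the three-lines theorem applies; once this is in hand, the reduction to secant-slope monotonicity is immediate and both claims of the lemma follow.
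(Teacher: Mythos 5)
Your overall strategy --- reduce monotonicity of $D=\phi(\alpha)/(\alpha-1)$ to convexity of $\phi_\pm(\alpha)=\log Q_{\alpha,z(\alpha)}(\rho\|\sigma)$ with $\phi_\pm(1)=0$ and then use monotonicity of secant slopes --- is legitimate in principle, and your Schatten-norm rewriting of $Q_{\alpha,1-\alpha}$ and $Q_{\alpha,\alpha-1}$ is correct. The genuine gap is the convexity step itself, which you correctly identify as the crux but do not actually establish: the three-lines/Stein--Hirschman argument you sketch does not apply to the family you wrote down. First, Stein-type interpolation of Schatten norms requires the exponent to satisfy $1/p_\theta=(1-\theta)/p_0+\theta/p_1$, i.e.\ $1/p_\theta$ affine in $\theta$; your choice $p(\theta)=2(1-\theta)$ gives $1/p(\theta)=1/(2(1-\theta))$, which is not affine, so the hypothesis of the theorem fails. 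Second, even setting that aside, the operator exponent $z/(2(1-z))$ is not affine in $z$, so $\rho^{z/(2(1-z))}$ evaluated at $z=\theta+it$ has a real part of the exponent that genuinely depends on $t$; consequently $\|A(\theta+it)\|_{p}$ is not $t$-independent, and the three-lines theorem only controls $\log\sup_t\|A(\theta+it)\|_{p_\theta}$, not the $t=0$ slice $\phi_\pm(\theta)$ that you need. (The $t$-independence trick $\|\rho^{a+ib}X\|_p=\|\rho^aX\|_p$ only works for affine exponents, where $\rho^{ib}$ is a partial isometry.) So the central claim of your proof is unsupported as written, and it is not a routine verification to repair it.

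For comparison, the paper avoids the convexity question entirely and proves the lemma by direct differentiation: setting $\beta=\alpha-1$ (resp.\ $\beta=1-\alpha$) and $A=\sigma^{-1/2}\rho^{1+1/\beta}\sigma^{-1/2}$, it computes $\frac{\d}{\d\beta}D_{\alpha,\alpha-1}(\rho\|\sigma)$ explicitly and shows the derivative equals $\frac{1}{\beta^2\Tr(A^\beta)}$ times
$\Tr(\tilde A^\beta\log\tilde A^\beta)-\Tr(\tilde A^\beta)\log\Tr(\tilde A^\beta)-\Tr(\tilde A^\beta\log\rho)$
with $\tilde A=\rho^{\frac12+\frac{1}{2\beta}}\sigma^{-1}\rho^{\frac12+\frac{1}{2\beta}}$, which is nonnegative because it is $\Tr(\tilde A^\beta)$ times the relative entropy $D\bigl(\tilde A^\beta/\Tr(\tilde A^\beta)\,\big\|\,\rho\bigr)\ge 0$. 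If you want to salvage your route, you would need an independent proof that $\alpha\mapsto\log Q_{\alpha,z(\alpha)}$ is convex along these two lines (this is known for the sandwiched family but by quite different, nontrivial methods); otherwise the differentiation argument is the more elementary and self-contained path.
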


\begin{proof}
We first start with the case  $\alpha \mapsto D_{\alpha,\alpha-1}(\rho \| \sigma)$. We denote $\beta:=\alpha-1$ (and, equivalently, derivate in $\beta$) and $A:=(\sigma^{-\frac{1}{2}}\rho^\frac{\alpha}{\alpha-1}\sigma^{-\frac{1}{2}})= (\sigma^{-\frac{1}{2}}\rho^{1+\frac{1}{\beta}}\sigma^{-\frac{1}{2}})$. To calculate the derivative, we follow similar steps to the one given to prove~\cite[Lemma III.1]{mosonyi2015two} (see also~\cite[Lemma 7]{lin2015investigating}). If the condition $\rho \ll \sigma$ is not satisfied, then the value of the monotones is $+\infty$ and the theorem holds. If the support condition is satisfied,  then, $\alpha \mapsto D_{\alpha,\alpha-1}(\rho \| \sigma)$ is differentiable on $(1,\infty)$, and  
\begin{align}
&\frac{\d}{\d \beta} D_{\alpha,\alpha-1}(\rho \| \sigma) \\
&\quad = -\frac{1}{\beta^2} \log{\Trm (A^\beta)}+\frac{1}{\beta\Trm(A^\beta)}\Trm\left(A^\beta\log{A}+\beta\sigma^{-\frac{1}{2}}\rho^{1+\frac{1}{\beta}}(\log{\rho})\sigma^{-\frac{1}{2}} A^{\beta-1} \left(-\frac{1}{\beta^2}\right)\right) \\
&\quad =\frac{1}{\beta^2 \Trm(A^\beta)}\left(\Trm(A^\beta \log{A^\beta})-\Trm(A^\beta)\log{\Trm(A^\beta)} - \Trm(\sigma^{-\frac{1}{2}}\rho^{1+\frac{1}{\beta}}(\log{\rho})\sigma^{-\frac{1}{2}} A^{\beta-1} )\right) \,.
\end{align}
We have that $f(Y^\dagger Y)Y^\dagger = Y^\dagger f(YY^\dagger)$. We set $Y = \sigma^{-\frac{1}{2}} \rho^{\frac{1}{2}+\frac{1}{2\beta}}$. We can then rewrite in the following
\begin{equation}
\Trm(\sigma^{-\frac{1}{2}}\rho^{1+\frac{1}{\beta}}(\log{\rho})\sigma^{-\frac{1}{2}} A^{\beta-1} ) = \Trm(\log{\rho}(\rho^{\frac{1}{2}+\frac{1}{2\beta}}\sigma^{-1}\rho^{\frac{1}{2}+\frac{1}{2\beta}})^\beta) \,.
\end{equation}
We call $\tilde{A}:=\rho^{\frac{1}{2}+\frac{1}{2\beta}}\sigma^{-1}\rho^{\frac{1}{2}+\frac{1}{2\beta}}$. Note that $A$ and $\tilde{A}$ share the same eigenvalues. Therefore to prove that the function is monotonically increasing, we need to show that
\begin{equation}
\Trm(\tilde{A}^\beta \log{\tilde{A}^\beta})-\Trm(\tilde{A}^\beta)\log{\Trm(\tilde{A}^\beta)} - \Trm((\log{\rho})\tilde{A}^\beta) \geq 0 \,.
\end{equation}
The latter condition is just a consequence of the positivity of the relative entropy. Indeed, 
\begin{align}
\Trm(\tilde{A}^\beta\log{\rho}) &= \Trm(\tilde{A}^\beta)\Trm\left(\frac{\tilde{A}^\beta}{\Trm(\tilde{A}^\beta)}\log{\rho}\right) \\
&\leq \Trm(\tilde{A}^\beta) \Trm\left(\frac{\tilde{A}^\beta}{\Trm(\tilde{A}^\beta)}\log{\frac{\tilde{A}^\beta}{\Trm(\tilde{A}^\beta)}}\right) \\
& = \Trm(\tilde{A}^\beta \log{\tilde{A}^\beta})-\Trm(\tilde{A}^\beta)\log{\Trm(\tilde{A}^\beta)} \,.
\end{align}
The case $\alpha \mapsto D_{\alpha,1-\alpha}(\rho \| \sigma)$ is similar. We denote $\beta:=1-\alpha$ and  $A:=\rho^{-\frac{1}{2}+\frac{1}{2\beta}}\sigma \rho^{-\frac{1}{2}+\frac{1}{2\beta}}$. If $\rho \perp \sigma$, then the theorem already follows. If instead the two states are not orthogonal, $\alpha \mapsto D_{\alpha,1-\alpha}(\rho \| \sigma)$ is differentiable on $(0,1)$, and 
\begin{align}
\frac{\d}{\d \beta} D_{\alpha,1-\alpha}(\rho \| \sigma) = -\frac{1}{\beta^2 \Trm(A^\beta)}\left(\Trm(A^\beta)\log{\Trm(A^\beta)} - \Trm(A^\beta \log{A^\beta}) + \Trm(A^\beta\log{\rho}) \right) \,,
\end{align}
which again is positive due to the positivity of the relative entropy.
\end{proof}
The above result immediately implies that  $\lim_{\alpha \rightarrow 0}\mathfrak{D}_{\alpha,1-\alpha}(\rho \| \sigma) $ $=\min_{\sigma \in \mathcal{F}} D_{\min}(\rho \| \sigma)$. By monotonicity of the sandwiched R\'enyi divergence, we also have that $\lim_{\alpha \rightarrow 1^-}\mathfrak{D}_{\alpha,\alpha}(\rho \| \sigma) $ $= \min_{\sigma \in \mathcal{F}} D(\rho \| \sigma)$.  As we prove below, the result for $\alpha \rightarrow \infty$ is a consequence of the minimax Theorem~\cite[Lemma II.1]{mosonyi2023some}. We note that the latter results hold for any resource theory.

\begin{corollary}
We have that $\lim_{\alpha \rightarrow \infty }\mathfrak{D}_{\alpha,\alpha-1}(\rho) = \mathfrak{D}_{\max}(\rho)$ . 
\end{corollary}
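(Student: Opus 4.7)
The goal is to interchange the limit $\alpha\to\infty$ with the infimum over stabilizer states in the definition $\mathfrak{D}_{\alpha,\alpha-1}(\rho)=\inf_{\sigma\in\bar S}D_{\alpha,\alpha-1}(\rho\|\sigma)$. The plan is to reduce the limit to a supremum, then apply the minimax lemma cited from Mosonyi, exactly as the paragraph preceding the corollary suggests.

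First, I would invoke Lemma~\ref{monotonicity}, which shows that for each fixed $\sigma$ the map $\alpha\mapsto D_{\alpha,\alpha-1}(\rho\|\sigma)$ is monotonically increasing on $(1,\infty)$. Combined with the earlier lemma proving the pointwise limit $\lim_{\alpha\to\infty}D_{\alpha,\alpha-1}(\rho\|\sigma)=D_{\max}(\rho\|\sigma)$, this identifies the limit with a supremum:
\begin{equation}
\lim_{\alpha\to\infty}D_{\alpha,\alpha-1}(\rho\|\sigma)=\sup_{\alpha>1}D_{\alpha,\alpha-1}(\rho\|\sigma)=D_{\max}(\rho\|\sigma).
\end{equation}
Consequently the quantity to evaluate is
\begin{equation}
\lim_{\alpha\to\infty}\mathfrak{D}_{\alpha,\alpha-1}(\rho)=\sup_{\alpha>1}\inf_{\sigma\in\bar S}D_{\alpha,\alpha-1}(\rho\|\sigma),
\end{equation}
where the first equality already requires a small justification: because the family is monotone increasing in $\alpha$, the infimum over $\sigma$ of a monotone increasing net of functions is itself monotone increasing in $\alpha$, so $\lim_{\alpha\to\infty}\mathfrak{D}_{\alpha,\alpha-1}(\rho)=\sup_\alpha\mathfrak{D}_{\alpha,\alpha-1}(\rho)$.

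Second, I would swap the $\sup_\alpha$ and the $\inf_\sigma$. This is where the minimax lemma of Mosonyi enters. The hypotheses to verify are: (i) $\bar S$ is compact and convex (it is the convex hull of finitely many pure stabilizer states, hence a compact polytope); (ii) for each $\alpha>1$, $\sigma\mapsto D_{\alpha,\alpha-1}(\rho\|\sigma)$ is lower semicontinuous on $\bar S$ (this is the standard lower semicontinuity of $\alpha$-$z$ R\'enyi divergences in the second argument); and (iii) the family $\{D_{\alpha,\alpha-1}(\rho\|\,\cdot\,)\}_{\alpha>1}$ is monotonically increasing in $\alpha$ pointwise in $\sigma$, as just noted. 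Under these conditions the minimax lemma yields
\begin{equation}
\sup_{\alpha>1}\inf_{\sigma\in\bar S}D_{\alpha,\alpha-1}(\rho\|\sigma)=\inf_{\sigma\in\bar S}\sup_{\alpha>1}D_{\alpha,\alpha-1}(\rho\|\sigma)=\inf_{\sigma\in\bar S}D_{\max}(\rho\|\sigma)=\mathfrak{D}_{\max}(\rho),
\end{equation}
which is the desired identity.

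The easy half, namely $\sup_\alpha\inf_\sigma\le\inf_\sigma\sup_\alpha=\mathfrak{D}_{\max}(\rho)$, holds by the trivial minimax inequality and does not need any structure. The main obstacle is therefore the reverse inequality, which is exactly what the minimax lemma supplies. In practice this reduces to checking the technical hypotheses of Mosonyi's Lemma~II.1 for the specific family at hand; the most delicate point is the lower semicontinuity of $D_{\alpha,\alpha-1}(\rho\|\cdot)$ at boundary stabilizer states where the support condition may fail, but this is standard and only forces the divergence to $+\infty$ at such points, which is compatible with lower semicontinuity.
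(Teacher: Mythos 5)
Your proof follows essentially the same route as the paper: monotonicity in $\alpha$ turns the limit into a supremum, and Mosonyi's minimax lemma (using lower semicontinuity of $D_{\alpha,\alpha-1}(\rho\|\cdot)$ and the compact convexity of $\bar{S}$) justifies exchanging it with the infimum. The only minor difference is that the paper takes the supremum over $\alpha>2$ rather than $\alpha>1$, because the cited lower-semicontinuity result holds in the DPI region, which for $z=\alpha-1$ requires $\alpha\geq 2$; since monotonicity makes the two suprema equal, this does not affect the argument.
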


\begin{proof}
We recall that the function $(\rho, \sigma) \rightarrow D_{\alpha,z}(\rho \| \sigma)$ is lower semicontinuous for all $(\alpha,z)\in \mathcal{D}$~\cite{rubboli2024new}.
Moreover, lower semicontinuity implies that
\begin{align}
\lim \limits_{\alpha \rightarrow \infty}\min \limits_{\sigma \in \mathcal{F}}D_{\alpha,\alpha-1}(\rho \| \sigma) &= \sup \limits_{\alpha > 2} \inf \limits_{\sigma \in \mathcal{F}}D_{\alpha,\alpha-1}(\rho \| \sigma)\\
&= \inf \limits_{\sigma \in \mathcal{F}}   \sup \limits_{\alpha > 2}D_{\alpha,\alpha-1}(\rho \| \sigma) \\
&= \min \limits_{\sigma \in \mathcal{F}}  \lim \limits_{\alpha \rightarrow \infty}D_{\alpha,\alpha-1}(\rho \| \sigma) \,,
\end{align}
i.e., we can exchange the limit with the minimum. Here, $\mathcal{F}$ is the free set of the resource theory. 
The first equality follows from the monotonicity in $\alpha$ established in Lemma~\ref{monotonicity}. Hence we can replace the limit with the supremum over $\alpha>2$. The second equality follows from the minimax theorem in~\cite[Lemma II.1]{mosonyi2023some} since the $D_{\alpha,z}$ are lower semicontinuous in the second argument in the range where they satisfy the data-processing inequality~\cite[Lemma 18]{rubboli2024new}. Moreover, by the minimax theorem, in the last equality, the infimum over the free states can be replaced with the minimum.
\end{proof}

\end{document}